\newtheorem{theorem}{Theorem}
\newtheorem{proposition}[theorem]{Proposition}
\newtheorem{observation}[theorem]{Observation}
\def\bra#1{\mathinner{\langle{#1}|}}
\def\ket#1{\mathinner{|{#1}\rangle}}
\def\braket#1#2{\mathinner{\langle{#1|#2}\rangle}}
\def\ketbra#1#2{\mathinner{|{#1}\rangle\!\langle{#2}|}}
  \gdef\Braket#1{\left<\mathcode`\|"8000\let|\BraVert {#1}\right>}}
\def\BraVert{\egroup\,\mid@vertical\,\bgroup}
\def\dbra#1{\mathinner{\langle\!\langle{#1}|}}
\def\dket#1{\mathinner{|{#1}\rangle\!\rangle}}
\def\dketbra#1{\mathinner{|{#1}\rangle\!\rangle\!\langle\!\langle{#1}|}}
\DeclareMathOperator{\Tr}{Tr}
\DeclareMathOperator{\Span}{span}
\renewcommand\L{\mathcal{L}}
\renewcommand\P{\mathcal{P}}
\newcommand{\M}{\mathcal{M}}
\newcommand{\HS}{\mathcal{H}}
\newcommand{\K}{\mathcal{K}}
\newcommand{\I}{\mathcal{I}}
\newcommand{\E}{\mathcal{E}}
\newcommand{\X}{\mathcal{X}}
\newcommand{\Y}{\mathcal{Y}}
\newcommand{\U}{\mathcal{U}}
\newcommand{\G}{\mathcal{G}}
\newcommand{\W}{\mathcal{W}}
\newcommand{\C}{\mathcal{C}}
\newcommand{\Z}{\mathcal{Z}}
\newcommand{\F}{\mathcal{F}}
\newcommand{\id}{\mathbbm{1}}
\newcommand{\jwchanges}[1]{{\color{black} #1}}
\newcommand\blfootnote[1]{%
  \begingroup
  \renewcommand\thefootnote{}\footnote{#1}%
  \addtocounter{footnote}{-1}%
  \endgroup
}
\newcommand*\patchAmsMathEnvironmentForLineno[1]{%
  \expandafter\let\csname old#1\expandafter\endcsname\csname #1\endcsname
  \expandafter\let\csname oldend#1\expandafter\endcsname\csname end#1\endcsname
  \renewenvironment{#1}%
     {\linenomath\csname old#1\endcsname}%
     {\csname oldend#1\endcsname\endlinenomath}}%
\newcommand*\patchBothAmsMathEnvironmentsForLineno[1]{%
  \patchAmsMathEnvironmentForLineno{#1}%
  \patchAmsMathEnvironmentForLineno{#1*}}%
\begin{document}
\title{Existence of processes violating causal inequalities on time-delocalised subsystems}

\author{Julian Wechs}
\affiliation{QuIC, Ecole Polytechnique de Bruxelles, C.P. 165, Universit\'e Libre de Bruxelles, 1050 Brussels, Belgium\looseness=-1}
\affiliation{Univ.\ Grenoble Alpes, CNRS, Grenoble INP, Institut N\'eel, 38000 Grenoble, France}

\author{Cyril Branciard}
\affiliation{Univ.\ Grenoble Alpes, CNRS, Grenoble INP, Institut N\'eel, 38000 Grenoble, France}

\author{Ognyan Oreshkov}
\affiliation{QuIC, Ecole Polytechnique de Bruxelles, C.P. 165, Universit\'e Libre de Bruxelles, 1050 Brussels, Belgium\looseness=-1}

\date{\today}

\begin{abstract}
It has been shown that it is theoretically possible for there to exist quantum and classical processes in which the operations performed by separate parties do not occur in a well-defined causal order. 
A central question is whether and how such processes can be realised in practice. 
In order to provide a rigorous framework for the notion that certain such processes have a realisation in standard quantum theory, the concept of \emph{time-delocalised quantum subsystem} has been introduced.
In this paper, we show that realisations on time-delocalised subsystems exist for all unitary extensions of tripartite processes. 
This class contains processes that violate \emph{causal inequalities}, i.e., that can generate correlations that witness the incompatibility with definite causal order in a device-independent manner, and whose realisability has been a central open problem. We consider a known example of such a tripartite classical process that has a unitary extension, and study its realisation on time-delocalised subsystems.
We then discuss this finding with regard to the assumptions that underlie causal inequalities, and argue that they are indeed a meaningful concept to show the absence of a definite causal order between the variables of interest. 
\end{abstract}

\maketitle

\section*{Introduction}

\blfootnote{This is a post-peer-review, pre-copyedit version of an article
published in \emph{Nature Communications} 14, 1471 (2023). The final
authenticated version is available online at: \url{https://doi.org/10.1038/s41467-023-36893-3}.}
The concept of causality is essential for physics and for our perception of the world in general.
Our usual understanding is that events take place in a definite causal order, with past events influencing future events, but not vice versa.
One may however wonder whether this notion is really fundamental, or whether scenarios without such an underlying order can exist.
In particular, the questions of what quantum theory implies for our understanding of causality, and what new types of causal relations arise in the presence of quantum effects, have recently attracted substantial interest.
This investigation is motivated both by foundational and by applied questions.
On the one hand, it is expected to lead to new conceptual insights into the tension between general relativity and quantum theory~\cite{hardy05,oreshkov12,zych19}. 
On the other hand, it also opens up new possibilities in quantum information processing~\cite{chiribella13}.

A particular model for the study of quantum causal relations is the \emph{process matrix framework}~\cite{oreshkov12}, where one considers multiple parties which perform operations that locally abide by the laws of quantum theory, but that are not embedded into any a priori causal order.
As it turns out, this framework indeed allows for situations where the causal order between the parties is not well-defined (see e.g. Refs.~\cite{oreshkov12,araujo15,oreshkov16,branciard16,abbott16,wechs19}). Moreover, some of these processes, called \textit{noncausal}, can produce correlations that violate \textit{causal inequalities}~\cite{oreshkov12,oreshkov16,branciard16,abbott16,baumeler14a,baumeler16}, which witnesses the incompatibility with definite causal order in a device-independent manner, similarly to the way a violation of a Bell inequalities witnesses the incompatibility with local hidden variables~\cite{bell87}. 
A central question is which of these processes with indefinite causal order have a practical realisation, and in what physical situations they can occur.
It has been speculated that indefinite causal order could arise in exotic physical regimes, such as at the interface of quantum theory and gravity~\cite{hardy05,oreshkov12,zych19}.
However, there are also processes with indefinite causal order that have an interpretation in terms of standard quantum theoretical concepts. A paradigmatic example is the \emph{quantum switch}~\cite{chiribella13}, a process in which the order between two operations is controlled coherently by a two-dimensional quantum system. This control qubit may be prepared in a superposition state, which leads to a \emph{superposition of causal orders}. Although the quantum switch cannot violate causal inequalities~\cite{araujo15,oreshkov16,wechs18,purves21} (however, see recent results in the presence of additional causal assumptions~\cite{gogioso22,vanderlugt22}), it can be proven incompatible with a definite causal order in a device-dependent sense~\cite{araujo15,oreshkov16}.

In order to demonstrate indefinite causal order in practice, a number of experiments that realise such coherent control of orders have been implemented in the laboratory~\cite{procopio15,rubino17,rubino17a,goswami18,goswami18a,wei19,guo18,taddei20,rubino21,cao22,Nie_2020}, however their interpretation as genuine \emph{realisations} of indefinite causal order has remained controversial~\cite{maclean17,oreshkov18,vilasini22,ormrod22}. 
Indeed, the claim that indefinite causal order can be realised in standard quantum scenarios seems contradictory at first sight---after all, such experiments admit a description in terms of standard quantum theory, where physical systems by assumption evolve with respect to a fixed background time, and it is therefore not manifest how the causal order between operations could possibly be indefinite.
A resolution of this apparent contradiction was proposed in Ref.~\cite{oreshkov18}, where it was shown that certain processes with indefinite causal order can be seen to take place as part of standard quantum mechanical evolutions if the latter are described in terms of suitable systems. 
The twist is to consider a more general type of system than usually studied, namely so-called \emph{time-delocalised subsystems}, which are nontrivial subsystems of composite systems whose constituents are associated with different times. 
This concept provides a rigorous underpinning for the interpretation of previous laboratory experiments as realisations of processes with indefinite causal order---when the experiment is described with respect to such an alternative, operationally equally meaningful factorisation of the Hilbert space, it acquires precisely the form of the process with indefinite causal order.
It was then shown in Ref.~\cite{oreshkov18} that this argument extends to an entire class of quantum processes, namely \emph{unitary extensions of bipartite processes}, as well as a class of isometric extensions, whose relation to the unitary class is not yet fully understood. 
The generalisation of these constructions to more parties has however remained an open question. In particular, it has remained an open question whether processes violating causal inequalities can be realised in a similar way. It is in fact generally believed that such processes could not be realised deterministically within the known physics~\cite{purves21}.

In this paper, we extend the proof of realisability on time-delocalised subsystems to all unitary extensions of tripartite processes. This class contains examples of processes that can violate causal inequalities, showing that they have realisations with the tools of known physics in a well-defined sense.

This work is structured as follows. We set the stage by reviewing the process matrix framework, as well as the notion of time-delocalised subsystems. 
We present the general tripartite construction, and we study an example of a tripartite noncausal process on time-delocalised subsystems. 
We then analyse our finding with regard to the assumptions that underlie causal inequalities, and argue that their violation witnesses the absence of a definite causal order in a meaningful way.

\section*{Results}

\textbf{Notations} \label{sec:main_notations} \quad We start by introducing some notations.
We denote the Hilbert space of some quantum system $Y$ by $\HS^Y$, the dimension of $\HS^Y$ by $d_Y$ and the space of linear operators over $\HS^Y$ by $\L(\HS^Y)$. Each such Hilbert space comes with a preferred, \emph{computational} basis generally denoted $\{\ket{i}^Y\}_{i}$.
The identity operator on $\HS^{Y}$ is denoted by $\id^Y$.
We also use the notation $\HS^{YZ} \coloneqq \HS^Y \otimes \HS^Z$ for the tensor product of two Hilbert spaces $\HS^Y$ and $\HS^Z$ (whose computational basis is built as the tensor product of the two subsystems' computational bases).
For two isomorphic Hilbert spaces $\HS^Y$ and $\HS^Z$, we denote the \emph{identity operator} between these spaces (i.e. the canonical isomorphism, which maps each computational basis state $\ket{i}^Y$ of $\HS^Y$ to the corresponding computational basis state $\ket{i}^Z$ of $\HS^Z$) by $\id^{Y \to Z} \coloneqq \sum_i \ket{i}^Z\bra{i}^Y$, and its pure Choi representation (see Methods section \hyperref[app:choi]{``The Choi isomorphism and the link product''}) by $\dket{\id}^{YZ} \coloneqq \sum_i \ket{i}^Y \otimes \ket{i}^Z$.
(Generally, superscripts on vectors indicate the Hilbert space they belong to, which may be omitted when clear from the context).
Moreover, we will often abbreviate $X_IX_O$ to $X_{IO}$ for the incoming and outgoing systems of the party $X$ (see below).\\

\textbf{The process matrix framework} \label{sec:main_pm_general} \quad In the following, we briefly outline the process matrix framework, originally introduced in Ref.~\cite{oreshkov12}. 
We consider multiple parties $X = A, B, C, \ldots$ performing operations that are locally described by quantum theory. 
That is, each party has an incoming quantum system $X_I$ with Hilbert space $\HS^{X_I}$ and an outgoing quantum system $X_O$ with Hilbert space $\HS^{X_O}$, and can perform arbitrary quantum operations from $X_I$ to $X_O$.
A quantum operation is most generally described by a \textit{quantum instrument}, that is, a collection of completely positive (CP) maps $\{\M_X^{[o_X]}\}_{o_X}$, with each $\M_X^{[o_X]}: \L(\HS^{X_I}) \to \L(\HS^{X_O})$ associated to a classical outcome $o_X$, and with the sum over the classical outcomes yielding a completely positive and trace-preserving (CPTP) map.

The process matrix framework was conceived to study the most general correlations that can arise between such parties, without making any a priori assumption about the way they are connected. 
In Ref.~\cite{oreshkov12}, it was shown that these correlations can most generally be expressed as
\begin{align}
\label{eq:generalisedBornRule}
P(o_A,o_B,o_C,\ldots) =  W * (M_A^{[o_A]} \otimes M_B^{[o_B]} \otimes M_C^{[o_C]} \otimes \cdots).
\end{align}
Here, $M_X^{[o_X]} \in \L(\HS^{X_{IO}})$ are the Choi representations of the local CP maps $\M_X^{[o_X]}$ and $``*"$ denotes the \emph{link product}~\cite{chiribella08,chiribella09}, a mathematical operation that describes the composition of quantum operations in terms of their Choi representation (see Methods section \hyperref[app:choi]{``The Choi isomorphism and the link product''}).
$W \in \L(\HS^{A_{IO}B_{IO}C_{IO}\ldots})$ is a Hermitian operator called the \emph{process matrix}. 
The requirement that the probabilities in Eq.~\eqref{eq:generalisedBornRule} should be non-negative, even when the operations of the parties are extended so as to act on additional, possibly entangled ancillary input systems, is equivalent to $W \ge 0$.
The requirement that the probabilities should be normalised (i.e., they should sum up to $1$ for any choice of local operations) is equivalent to $W$ satisfying certain linear constraints~\cite{oreshkov12,araujo15,oreshkov16,araujo17,wechs19}, and having the trace $\Tr(W) = d_{A_O} d_{B_O} d_{C_O} \ldots \ $. 

The process matrix is the central object of the formalism, which describes the physical resource or environment through which the parties are connected.
Mathematically, the process matrix defines (i.e., it is the Choi representation of) a quantum channel $\W: \L(\HS^{A_O B_O C_O \ldots}) \to \L(\HS^{A_I B_I C_I \ldots})$ from all output systems of the parties to their input systems.
Eq.~\eqref{eq:generalisedBornRule} then describes the composition of that channel with the local operations, which can be interpreted as a \emph{circuit with a cycle} as represented graphically (for the bipartite case) in Fig.~\hyperref[fig:process_matrices]{1(a)}. 

\begin{figure}[h]
      \includegraphics[width=0.89\textwidth]{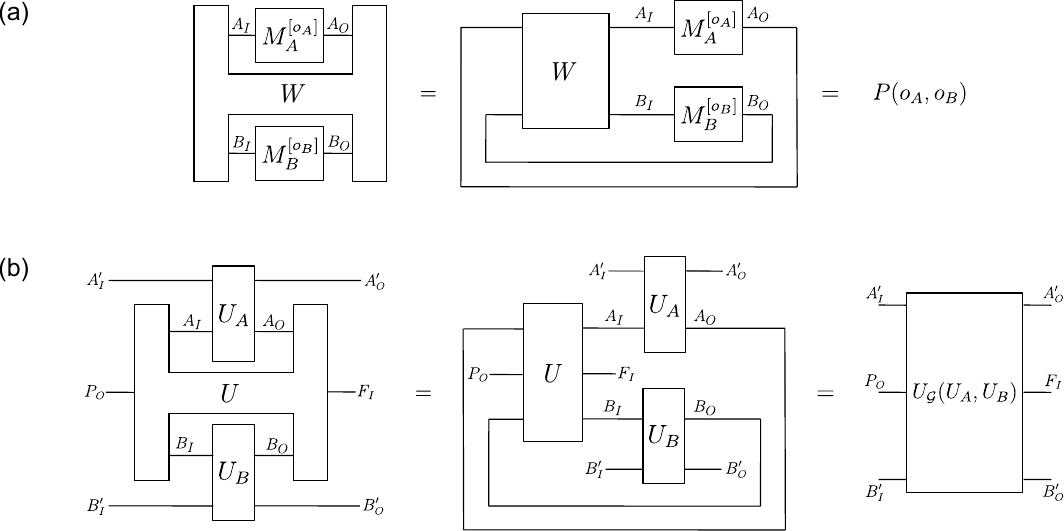}
    \caption{\textbf{Process matrix scenarios as cyclic circuits.} 
(a) In the process matrix framework, the operations performed by the parties (here, Alice and Bob) are composed with the \emph{process matrix}, which defines a channel from the output systems $A_O B_O$ of the parties back to their input systems $A_I B_I$. This composition can be seen as a cyclic circuit, and provides  the probabilities for the classical outcomes $o_A$ and $o_B$.\\
(b) Composing a unitarily extended process matrix with unitary operations performed by the parties gives rise to a unitary operation from the outgoing system $P_O$ of the \emph{global past} party $P$ and the incoming ancillas of the parties $A_I'$, $B_I'$ to the incoming system $F_I$ of the \emph{global future} party $F$ and the outgoing ancillas of the parties $A_O'$, $B_O'$~\cite{araujo17}.}
        \label{fig:process_matrices}
\end{figure}

Through the top-down approach outlined here, one recovers standard quantum scenarios, such as joint measurements on multipartite quantum states, or, more generally, quantum circuits in which the parties apply their operations in a fixed causal order (and the process matrix corresponds to the acyclic circuit fragment consisting of the operations in between the parties~\cite{chiribella09,gutoski06}). 
However, one also finds processes that are incompatible with any definite causal order between the local operations. Such processes are said to be \emph{causally nonseparable}~\cite{oreshkov12,araujo15,oreshkov16,wechs19}. Furthermore, some causally nonseparable processes can generate correlations $P(o_A,o_B,o_C,\ldots|i_A,i_B,i_C,\ldots)$, where $i_X$ are local classical inputs based on which the local operations are chosen, that violate so-called causal inequalities~\cite{oreshkov12,oreshkov16,branciard16,abbott16,baumeler14a,baumeler16}, which certifies their incompatibility with a definite causal order in a device-independent way. Such processes are referred to as \emph{noncausal}.

A class of processes that is of particular interest in this paper is that of \emph{unitarily extendible processes}, which were first discussed in Ref.~\cite{araujo17}.
A \emph{unitary extension} of a process matrix $W$ is a process matrix which involves an additional party $P$ with a trivial, one-dimensional input Hilbert space, as well as an additional party $F$ with a trivial, one-dimensional outgoing Hilbert space, such that the corresponding channel from $P_O A_O B_O C_O \ldots$ to $F_I A_I B_I C_I \ldots$ is unitary, and such that the original process matrix $W$ is recovered when $P$ prepares some fixed state and $F$ is traced out. 
That is, the extended process matrix is a rank-one projector $\dketbra{U}$, where $\dket{U}$ is the pure Choi representation (see \hyperref[app:choi]{Methods}) of a unitary $U: \HS^{P_O A_O B_O C_O \ldots} \to  \HS^{F_I A_I B_I C_I \ldots}$, which satisfies
\begin{equation}
    W = \dket{U}\dbra{U} * (\ket{0}\bra{0}^{P_O} \otimes \id^{F_I}).
\end{equation}
The additional parties $P$ and $F$ can be interpreted as being in the \emph{global past}, respectively \emph{global future}, of all other parties, since they do not receive, respectively send out, a quantum system.

Note that the unitary extension also needs to be a valid process matrix, i.e., it needs to satisfy the above-mentioned constraints which ensure that it yields valid probabilities when the parties (including $P$ and $F$) perform arbitrary local operations. 
In Ref.~\cite{araujo17}, it was found that some process matrices do not admit such a unitary extension, and unitary extendibility was postulated as a necessary condition for a process matrix to describe a physically realisable scenario.
It was also shown that unitary extensions are equivalent to processes that preserve the reversibility of quantum operations.
That is, when the slots of $P$ and $F$ are left open, and all other parties perform unitary operations $\U_X: \L(\HS^{X_I X_I'}) \to \L(\HS^{X_O X_O'})$, which act on $X_I$ and $X_O$ as well as some (possibly trivial) additional ancillary incoming and outgoing systems $X_I'$ and $X_O'$, the resulting \emph{global operation}, which takes the initial systems $P_O A_I' B_I' C_I' \ldots$ to the final systems $F_I A_O' B_O' C_O' \ldots$, is again unitary (see Fig.~\hyperref[fig:process_matrices]{1(b)}).  

In this case, the Choi representations of the local operations, as well as the unitarily extended process matrix, are rank-one projectors, and we can describe their composition in terms of their pure Choi representations.
The global unitary operation $\U_\G(\U_A,\U_B,\U_C,\cdots): \L(\HS^{P_O A_I' B_I' C_I' \ldots}) \to \L(\HS^{F_I A_O' B_O' C_O' \ldots})$, in its pure Choi representation, is given by 
\begin{equation}
\dket{U_\G(U_A,U_B,U_C,\cdots)} = \dket{U} * \big( \dket{U_A} \otimes \dket{U_B} \otimes \dket{U_C}  \otimes \ldots \big) \quad \in \HS^{P_O A_{I}' B_{I}' C_{I}' \ldots F_I A_{O}' B_{O}' C_{O}' \ldots}
\label{eq:globalTrafo_pure}
\end{equation}
where $\dket{U_X} \in \HS^{X_{IO}X_{IO}'}$ are the pure Choi representations of the local unitary operations $\U_X$, and $``*"$ denotes here the so-called \emph{vector link product}~\cite{wechs18} (cf. \hyperref[app:choi]{Methods}). 
In the following, we are going to refer to $\dket{U}$ as the \emph{process vector} of the unitary process under consideration. 

The process matrices that we are concerned with in this work are unitary extensions of bipartite or tripartite process matrices. 
Moreover, any local operation can be dilated to a unitary channel acting on the original incoming and outgoing systems together with an additional incoming and outgoing ancilla, followed by a measurement of the outgoing ancilla.
Throughout the paper, we will therefore not consider the actions of the global past and global future parties explicitly, but rather work with the description as per Eq.~\eqref{eq:globalTrafo_pure} in terms of pure Choi representations, which is convenient. 
We will also take the incoming and outgoing Hilbert spaces of all parties, except for $P$ and $F$, to be of equal dimension $d_{X_I} = d_{X_O} \eqqcolon d$. 
This simplification saves us some technicalities, and it does not entail any loss of generality. 
Namely, if these dimensions do not match, one can treat the process under consideration as a part of an extended process with process vector $\dket{U} \otimes \dket{\id}^{P_A\tilde{A}_I} \otimes \dket{\id}^{\tilde{A}_O F_A} \otimes \dket{\id}^{P_B\tilde{B}_I} \otimes \dket{\id}^{\tilde{B}_OF_B} \otimes \dket{\id}^{P_C\tilde{C}_I} \otimes \dket{\id}^{\tilde{C}_OF_C} \otimes \ldots$, which involves additional identity channels between additional outgoing (incoming) Hilbert spaces $\HS^{P_A}$, $\HS^{P_B}$, $\HS^{P_C}, \ldots$ ($\HS^{F_A}$, $\HS^{F_B}$, $\HS^{F_C}, \ldots$) of the global past (future) party, and additional incoming (outgoing) Hilbert spaces $\HS^{\tilde{A}_I}$, $\HS^{\tilde{B}_I}$, $\HS^{\tilde{C}_I}, \ldots$ ($\HS^{\tilde{A}_O}$, $\HS^{\tilde{B}_O}$, $\HS^{\tilde{C}_O}, \ldots$) of the remaining parties, whose dimensions are chosen such that $d_{X_I \tilde X_I} = d_{X_O \tilde X_O} = d$ for all parties (except $P$ and $F$).\\

\textbf{Time-delocalised subsystems and operations} \label{sec:td_general} \quad In this section, we discuss the concept of time-delocalised subsystem, first introduced in Ref.~\cite{oreshkov18}.
Briefly summarised, the idea is that a quantum circuit, consisting of operations that act at definite times on specific input and output systems, can be described in terms of a different choice of systems, corresponding to an alternative factorisation of the joint Hilbert spaces of the input and output systems of operations at different times. In general, the new systems may be delocalised relative to the old ones and thus spread over different times.
When described in terms of such alternative time-delocalised subsystems, the circuit generally contains cycles as considered in the process matrix framework (see Fig.~\ref{fig:process_matrices}).
We first discuss the general formalisation of this idea, then we recall how it applies to the case of the quantum switch, as well as general unitary extensions of bipartite processes, for which it was shown in Ref.~\cite{oreshkov18} that realisations on such time-delocalised subsystems always exist.

The concept of time-delocalised subsystem arises from combining two notions from standard quantum theory, namely the definition of quantum subsystem decompositions in terms of tensor product structures, and the fact that a fragment of a quantum circuit containing multiple operations implements itself a quantum operation from all its incoming to all its outgoing systems. 

In quantum theory, the division of a physical system into subsystems is formally described through the choice of a \emph{tensor product structure}. 
Equipping a given Hilbert space $\HS^Y$, corresponding to some quantum system $Y$, with a tensor product structure means choosing an isomorphism (i.e., a unitary transformation) $J: \HS^Y \to \bigotimes_{i=1}^n \HS^{Y_n}$, where $\HS^{Y_1},\ldots,\HS^{Y_n}$ are Hilbert spaces of dimensions $d_{Y_1},\ldots,d_{Y_n}$, with $\Pi_{i=1}^n d_{Y_i} = d_Y$. Such a choice establishes a notion of locality on $\HS^Y$, and defines a decomposition of the system $Y$ into subsystems $Y_1, \ldots, Y_n$. 
For instance, the operators in $\L(\HS^Y)$ that are local on the subsystem $Y_i$ are those of the form $J^\dagger (O^{Y_i} \otimes \id^{Y_1,\ldots,Y_{i-1}Y_{i+1}\ldots Y_n})J$
with $O^{Y_i} \in \L(\HS^{Y_i})$. 
(Equivalently, the tensor product structure can also be defined in terms of the algebras of operators that are local on the different subsystems~\cite{zanardi04}.)
Since the choice of such a tensor product structure is not unique, there are many different ways to view $\HS^Y$ as the state space of some quantum system with multiple subsystems.

In standard quantum theory, physical systems evolve with respect to a fixed background time.
At an abstract level, such standard quantum mechanical time evolution can be described in terms of a \emph{quantum circuit}, that is, a collection of quantum operations (pictorially represented by boxes) that are composed through quantum systems (pictorially represented by wires) in an acyclic network. 
The operations in such a quantum circuit thus act on their incoming and outgoing quantum systems (which may consist of several subsystems) at definite times.
One may however also consider quantum operations that act on several subsystems associated with different times.
In fact, this possibility arises naturally within the quantum circuit framework. Namely, if one considers a generic fragment of a quantum circuit
containing many operations, that fragment implements a quantum operation from the joint system of all wires that enter into it, to the joint system of all wires that go out of it \cite{chiribella09}, where the incoming and outgoing wires are generally associated with Hilbert spaces at different times (see Fig.~\hyperref[fig:circuit_generic]{2(a)} for an example).

\begin{figure}[h]
     \includegraphics[width=0.89\textwidth]{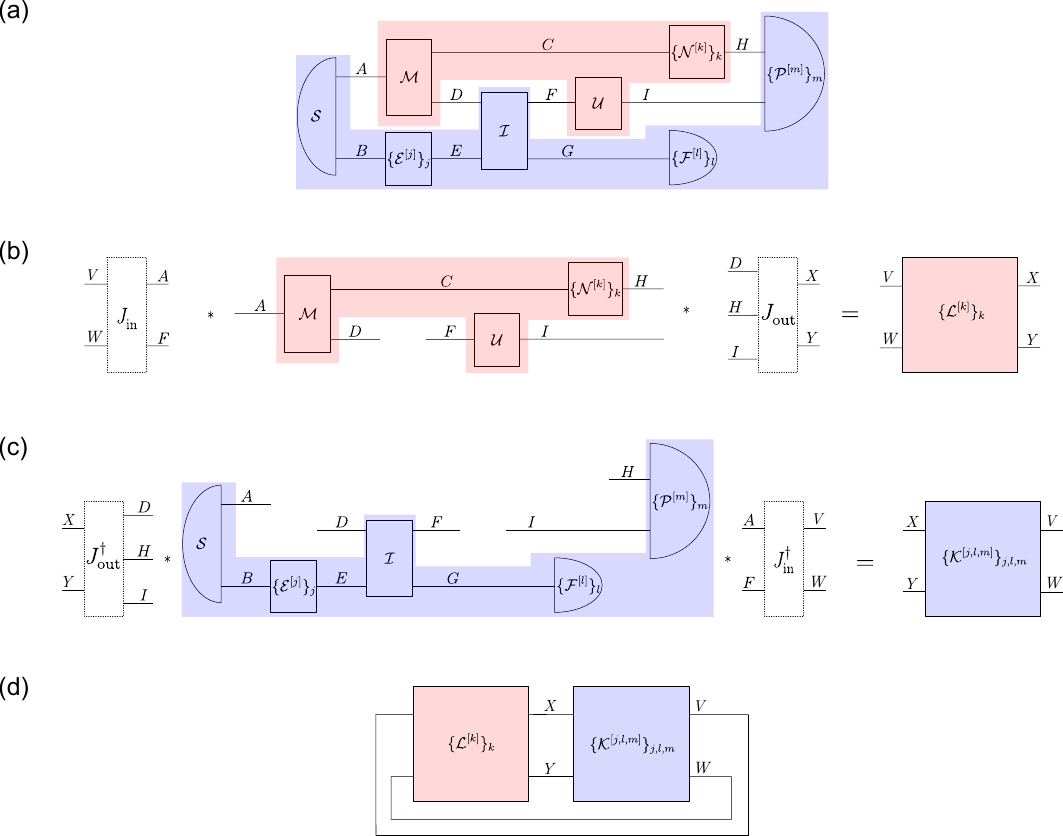}
     \caption{\textbf{Description of a quantum circuit in terms of time-delocalised subsystems.} (a) Example of a quantum circuit, consisting of quantum operations $\mathcal{S}$, $\M$, $\{\E^{[j]}\}_j$, $\I$, $\U$, $\{\mathcal{N}^{[k]}\}_k$, $\{\mathcal{F}^{[l]}\}_l$, $\{\P^{[m]}\}_m$, which are composed through the systems $A$, $B$, $C$, $D$, $E$, $F$, $G$, $H$, $I$, and a decomposition thereof into fragments, corresponding to the red and blue boxes. 
The red fragment implements itself a quantum operation from the incoming systems $A$ and $F$ to the outgoing systems $D$, $H$  and $I$, which are each associated with different times.
It is composed with its complement, the blue fragment, which implements a quantum operation from the systems $D$, $H$, $I$ to the systems $A$, $F$.\\
(b) Description of the red circuit fragment in terms of time-delocalised subsystems $V$, $W$, $X$, $Y$, which are defined by isomorphisms $J_{\text{in}}: \HS^{VW} \to \HS^{AF}$ and $J_{\text{out}}: \HS^{DHI} \to \HS^{XY}$. We obtain a new operation $\{ \L^{[k]} \}_k$ from $V$, $W$ to $X$, $Y$.\\
(c) Description of the blue circuit fragment in terms of the time-delocalised subsystems $V$, $W$, $X$, $Y$. We obtain a new operation $\{ \K^{[j,l,m]} \}_{j,l,m}$ from $X$, $Y$ to $V$, $W$.\\
(d) In the new subsystem description in terms of the time-delocalised subsystems $V$, $W$, $X$, $Y$, we obtain a cyclic circuit composed of $\{ \L^{[k]} \}_k$ and $\{ \K^{[j,l,m]} \}_{j,l,m}$.}
    \label{fig:circuit_generic}
\end{figure}

One may choose to describe such a quantum operation implemented by a fragment with respect to a different subsystem decomposition.
Formally, this is achieved by composing its incoming, respectively outgoing, wires with some isomorphisms that define a different tensor product structure on the corresponding joint Hilbert spaces (see Fig.~\hyperref[fig:circuit_generic]{2(b)}).  
The resulting subsystems are then in general not associated with a definite time. 
This is what one understands by \emph{time-delocalised subsystems}.

To describe the full circuit in terms of these newly chosen time-delocalised subsystems, the operation implemented by the complement of the fragment under consideration needs to be composed with precisely the inverse of the chosen isomorphisms (see Fig.~\hyperref[fig:circuit_generic]{2(c)}).
The composition of the two fragments (which, for a circuit with no open wires, corresponds to the joint probability of the measurement outcomes of the different operations in the circuit~\cite{hardy09,chiribella10}, see Fig.~\hyperref[fig:process_matrices]{1(a)}) then indeed remains the same in the old and new descriptions.
This follows from the properties of the link product (see Methods, Eqs.~\eqref{eq:unitary_canceling_pure} and~\eqref{eq:unitary_canceling_mixed}), which provides a formal tool to connect the different fragments that a quantum circuit is decomposed into~\cite{chiribella08,chiribella09}.  

Importantly, the structure of a given circuit with respect to such a choice of time-delocalised subsystems can also be tested operationally~\cite{oreshkov18}. In particular, the circuit can be disconnected at the chosen subsystems and each of the time-delocalised operations that occur on these subsystems can be experimentally addressed and verified, similarly to the way one would test the circuit description with respect to the standard time-local factorisation. 
In this sense, such an alternative description of the experiment is operationally just as meaningful. 
This is discussed in more detail in Supplementary Note~\hyperref[app:testing]{1}.\\ 

\textbf{Processes with indefinite causal order on time-delocalised subsystems} \label{sec:td_processes} \quad In the laboratory experiments that have been proposed as implementations of the quantum switch, one considers a \emph{target} quantum system at two possible times. 
The operation $U_A$ is applied to the target system $T_1$ at the earlier time, or to the target system $T_2$ at the later time, depending on  whether another two-dimensional quantum system, the \emph{control qubit}, is in the computational basis state $\ket{0}$ or $\ket{1}$, and conversely for the operation $U_B$. 
There has been much debate (see e.g. Refs.~\cite{maclean17,oreshkov18,vilasini22,ormrod22}) about whether experiments of that type can be interpreted as valid realisations of the quantum switch, understood as an abstractly defined scenario in the process matrix formalism~\cite{araujo15}.
Indeed, the relation between the above outlined experimental procedure, and the situation considered in the process matrix framework, where one instance of each $U_A$ and $U_B$ is composed with the process matrix in a circuit with a cycle, is a priori unclear.
A heuristic argument that is sometimes invoked to justify that each of the two operations is indeed applied once and only once is that each operation occurs precisely once in each of the two superposed coherent branches, and is therefore used once overall. To further corroborate this, one could introduce a \emph{flag} or \emph{counter} system~\cite{araujo14,purves21} that keeps track of the usage of the operations.
To really understand the sense in which the quantum switch is realised in these experiments, it is however desirable to rigorously formalise the link between the standard quantum description of the experiments, and the process matrix scenario.
This question was addressed in Ref.~\cite{oreshkov18}. It was shown that the temporally ordered quantum circuit that describes the experimental situation outlined above indeed takes the structure of a circuit with a cycle as in the process matrix framework (i.e., as in Fig.~\ref{fig:process_matrices}), when one changes to a description in terms of specific time-delocalised subsystems---whose choice, broadly speaking, formalises the intuition that the input system is $T_1$ when the control system is in state $\ket{0}$ and $T_2$ when the control system is in state $\ket{1}$, and similarly for the output systems~\cite{oreshkov18}.
In other words, when these experiments are realised physically, what happens on these alternative systems is precisely the structure described in the process matrix framework. 
It is in that sense that these experiments can be considered \emph{realisations} of the abstract mathematical concept.

It was then shown that this argument can be generalised, and that there exist other types of processes which have a realisation in this sense. 
Notably, this is the case for the entire class of unitary extensions of bipartite processes, of which the quantum switch is a particular example.
It was subsequently shown in Refs.~\cite{barrett20,yokojima20} that all such processes are variations of the quantum switch, but the proof of Ref.~\cite{oreshkov18} did not rely on this knowledge. 
It is the idea behind the original proof from Ref.~\cite{oreshkov18}, together with the subsequent result of Refs.~\cite{barrett20,yokojima20}, that will allow us to generalise the proof to the tripartite case.
We therefore recall the bipartite result from Ref.~\cite{oreshkov18}, in the language and conventions we use in this paper (notably employing the Choi representation and the link product), in \hyperref[app:bipartite_methods]{Methods}, and the corresponding proofs in Supplementary Note~\hyperref[app:bipartite_proof]{2}.\\

\textbf{Unitary extensions of tripartite processes on time-delocalised subsystems} \label{sec:td_tripartite} \quad For unitary extensions of processes with more than two parties, it is a priori unclear whether and how a realisation on time-delocalised subsystems can be found.
In the following, we will establish the result for unitary extensions of tripartite processes.
Briefly summarised, we show that for any unitarily extended tripartite process, there exists a standard, temporally ordered quantum circuit,
with operations that depend on the local operations $U_A$, $U_B$ and $U_C$ applied in the process, which precisely corresponds to the situation considered in the process matrix framework, with one instance of each $U_A$, $U_B$ and $U_C$ composed with the process matrix in a circuit with a cycle, when described in terms of a suitable choice of time-delocalised subsystems.

Formally, we prove the following proposition.\\

\textit{Proposition 1.} \ \ Consider a unitary extension of a tripartite process, described by a process vector $\dket{U} \in \HS^{P_O A_{IO}B_{IO} C_{IO} F_I}$, composed with unitary local operations $U_A:\HS^{A_I A_I'} \to \HS^{A_O A_O'}$, $U_B:\HS^{B_I B_I'} \to \HS^{B_O B_O'}$ and $U_C:\HS^{C_I C_I'} \to \HS^{C_O C_O'}$.  
For any such process, the following exist.
\begin{enumerate}
\item A temporal circuit of the form shown in Fig.~\ref{fig:tripartite_implementation_main}, in which $U_A$ and $U_B$ are applied on the target input and output systems $T_1^{(\prime)}$ or $T_2^{(\prime)}$, coherently conditioned on the state of the control systems $Q_1^{(\prime)}$ and $Q_2^{(\prime)}$, and which is composed of circuit operations that depend on $U_C$.
\item Isomorphisms $J_{\textup{in}}: \HS^{A_I B_I C_I Y Z} \to \HS^{T_1 T_2 \bar T_1' \bar T_2' Q_1 P_O}$ and $J_{\textup{out}}: \HS^{ T_1' T_2' \bar T_1 \bar T_2 Q_2' F_I} \to \HS^{A_O B_O C_O \bar Y \bar Z}$, such that, with respect to the subsystems $A_I$, $B_I$ and $C_I$ of $T_1 T_2 \bar T_1' \bar T_2' Q_1 P_O$ and the subsystems $A_O$, $B_O$ and $C_O$ of $T_1' T_2' \bar T_1 \bar T_2 Q_2' F_I$ that these isomorphisms define, the circuit in Fig.~\ref{fig:tripartite_implementation_main} takes the form of a cyclic circuit composed of $U$, $U_A$, $U_B$ and $U_C$\, as in the process matrix framework (see Fig.~\ref{fig:tripartite_proof}).
\end{enumerate}

\begin{figure}[h]
    \centering
    \includegraphics[width=0.8\textwidth]{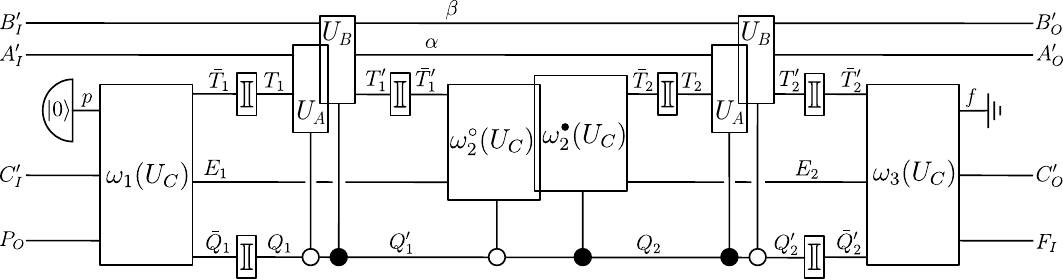}
    \caption{\textbf{Temporal circuit for a general tripartite unitary process.} 
$U_A$ and $U_B$ are applied either on the time-local target system $T_1^{(\prime)}$ or $T_2^{(\prime)}$ (and the ancillary systems), depending coherently on the state of the two-dimensional control systems $Q_1^{(\prime)}$ and $Q_2^{(\prime)}$.
These two applications of the coherently controlled operations $U_A$ and $U_B$ are surrounded by circuit operations $\omega_1(U_C): \HS^{p C_I' P_O} \to \HS^{\bar{T}_1 E_1 \bar Q_1}$, $\omega_2^\circ(U_C): \HS^{\bar T_1' E_1} \to \HS^{\bar T_2 E_2}$, $\omega_2^\bullet(U_C): \HS^{\bar T_1' E_1} \to \HS^{\bar T_2 E_2}$ (these two also being coherently controlled), and $\omega_3(U_C): \HS^{\bar T_2' E_2 \bar Q_2'} \to \HS^{f C_O' F_I}$, which can (together with the therein introduced ancillary systems $E_1, E_2$) in general all depend on $U_C$, the third party's (Charlie's) operation. 
The boxes $\mathbb{I}$ stand for identity channels that relate the systems with and without the bars.
The ancillary system $p$ is prepared in the state $\ket{0}^p$ in the beginning, and the final ancillary system $f$ is discarded. (Note that, with a slight abuse of notation, we use the ground symbol for this discarding of $f$, which is commonly used for mixed circuits where the boxes represent CP maps, rather than for circuits consisting of pure operations, as we have here. The system $f$ however always ends up in the state $\ket{0}^f$ (see Supplementary Note~\hyperref[app:tripartite_circuit]{3~A}), so that taking the partial trace over $f$ is equivalent to projecting onto $\ket{0}^f$, and does not introduce any decoherence or loss of purity. 
The coherently controlled applications of $U_A$ and $U_B$, as well as of $\omega_2^\circ(U_C)$ and $\omega_2^\bullet(U_C)$, are displayed with a slight shift for graphical clarity, but they can be taken to act at the same time.}
    \label{fig:tripartite_implementation_main}
\end{figure}

In the following, we outline the proof.
All technical proofs and calculations for this tripartite construction are given in Supplementary Note~\hyperref[app:tripartite_proof]{3}.\\

\textit{Outline of proof.} 
The existence of a temporal circuit as in Fig.~\ref{fig:tripartite_implementation_main} is shown in Supplementary Note~\hyperref[app:tripartite_circuit]{3~A}.
It follows from the result that all unitary extensions of bipartite processes can be implemented as \emph{variations of the quantum switch}~\cite{barrett20,yokojima20}, in which the time of the two local operations is controlled coherently.
Any unitary extension of a tripartite process can thus be implemented as a variation of the quantum switch, with two local operations whose time is controlled coherently, and which is composed of circuit operations that depend on the third local operation. 
The isomorphisms $J_{\textup{in}}: \HS^{A_I B_I C_I Y Z} \to \HS^{T_1 T_2 \bar T_1' \bar T_2' Q_1 P_O}$ and $J_{\textup{out}}: \HS^{ T_1' T_2' \bar T_1 \bar T_2 Q_2' F_I} \to \HS^{A_O B_O C_O \bar Y \bar Z}$ (where $Y, Z, \bar Y$ and $\bar Z$ are appropriate complementary subsystems) are defined in Supplementary Note~\hyperref[app:iso_tripartite]{3~B}, based on a specific decomposition of unitarily extended process vectors which plays a central role in the bipartite proof (cf. Eq.~\eqref{eq:decomp_bipartite}), and which generalises to the multipartite case (cf. Eq.~\eqref{eq:decomp_tripartite}).

In Supplementary Note~\hyperref[app:tripartite_systems]{3~C}, we change to the description of the circuit in terms of the corresponding time-delocalised subsystems. 
For that purpose, we decompose the circuit into the red and blue circuit fragment shown in Fig.~\ref{fig:tripartite_proof}. By construction, when composed with $J_\text{in}$ and $J_\text{out}$, the red circuit fragment shown in Fig.~\hyperref[fig:tripartite_proof]{4(a)} consists of precisely one application of $U_A$ and $U_B$, in parallel to a unitary operation $R(U_C): \HS^{C_I' C_I Y Z \bar Q_2'} \to \HS^{C_O' C_O \bar Y \bar Z \bar Q_1}$.
Under that change of subsystems, the complementary blue fragment needs to be composed with the inverse isomorphisms $J_\text{in}^\dagger$ and $J_\text{out}^\dagger$, which results in an operation $R': \HS^{P_O A_O B_O C_O \bar Y \bar Z \bar Q_1} \to \HS^{F_I A_I B_I C_I Y Z \bar Q_2'}$ (see Fig.~\hyperref[fig:tripartite_proof]{4(b)}).
$R(U_C)$ and $R'$ cannot be further decomposed for now.

At this point, we thus have a cyclic circuit which consists of the four boxes $U_A$, $U_B$, $R(U_C)$ and $R'$, and which involves the systems $P_O$, $A_{IO}^{(\prime)}$, $B_{IO}^{(\prime)}$, $C_{IO}^{(\prime)}$, $F_I$, as well as $Y$, $\bar Y$, $Z$, $\bar Z$, $\bar Q_1$, $\bar Q_2'$ (see the left-hand side of Fig.~\hyperref[fig:tripartite_proof]{4(c)}).
In order to obtain a description with respect to only the systems $P_O$, $A_{IO}^{(\prime)}$, $B_{IO}^{(\prime)}$, $C_{IO}^{(\prime)}$, $F_I$, we need to evaluate the composition of $R(U_C)$ and $R'$ over the systems $Y$, $\bar Y$, $Z$, $\bar Z$, $\bar Q_1$, $\bar Q_2'$ (but not over the systems $C_I$ and $C_O$, which we wish to maintain in the description).
The isomorphisms $J_{\textup{in}}$ and $J_{\textup{out}}$ are constructed in precisely such a way (based on the abstract relation between the systems in the process that is also used in the bipartite proof) that, when this composition of $R(U_C)$ and $R'$ over $Y$, $\bar Y$, $Z$, $\bar Z$, $\bar Q_1$, $\bar Q_2'$ is evaluated, the result is a cyclic circuit fragment consisting of the unitary operation $U: \HS^{P_O A_O B_O C_O} \to \HS^{F_I A_I B_I C_I}$ that defines the process, composed with the operation $U_C: \HS^{C_I C_I'} \to \HS^{C_O C_O'}$ (see the middle of Fig.~\hyperref[fig:tripartite_proof]{4(c)}). 
(Note the particularity that $U_C$ only appears as an explicit part of the cyclic circuit after this composition of $R(U_C)$ with $R'$, and is not a tensor product factor of $R(U_C)$).

Therefore, in its description with respect to the systems $P_O$, $A_{IO}^{(\prime)}$, $B_{IO}^{(\prime)}$, $C_{IO}^{(\prime)}$, $F_I$, the circuit in Fig.~\ref{fig:tripartite_implementation_main} indeed consists of the four operations $U_A: \HS^{A_I A_I'} \to \HS^{A_O A_O'}$, $U_B: \HS^{B_I B_I'} \to \HS^{B_O B_O'}$, $U_C: \HS^{C_I C_I'} \to \HS^{C_O C_O'}$ and $U: \HS^{P_O A_O B_O C_O} \to \HS^{F_I A_I B_I C_I}$, connected in a cyclic circuit as in the process matrix framework (see the right-hand side of Fig.~\hyperref[fig:tripartite_proof]{4(c)}).
This establishes the tripartite result.

Note that a similar construction is possible when one considers an asymmetric tripartite temporal circuit where $U_A$ is applied at a given, well-defined time, and $U_B$ either before or after it, coherently depending on the control systems (or vice versa, with the roles of $A$ and $B$ exchanged).

\begin{figure}[h]
   \includegraphics[width=0.9\textwidth]{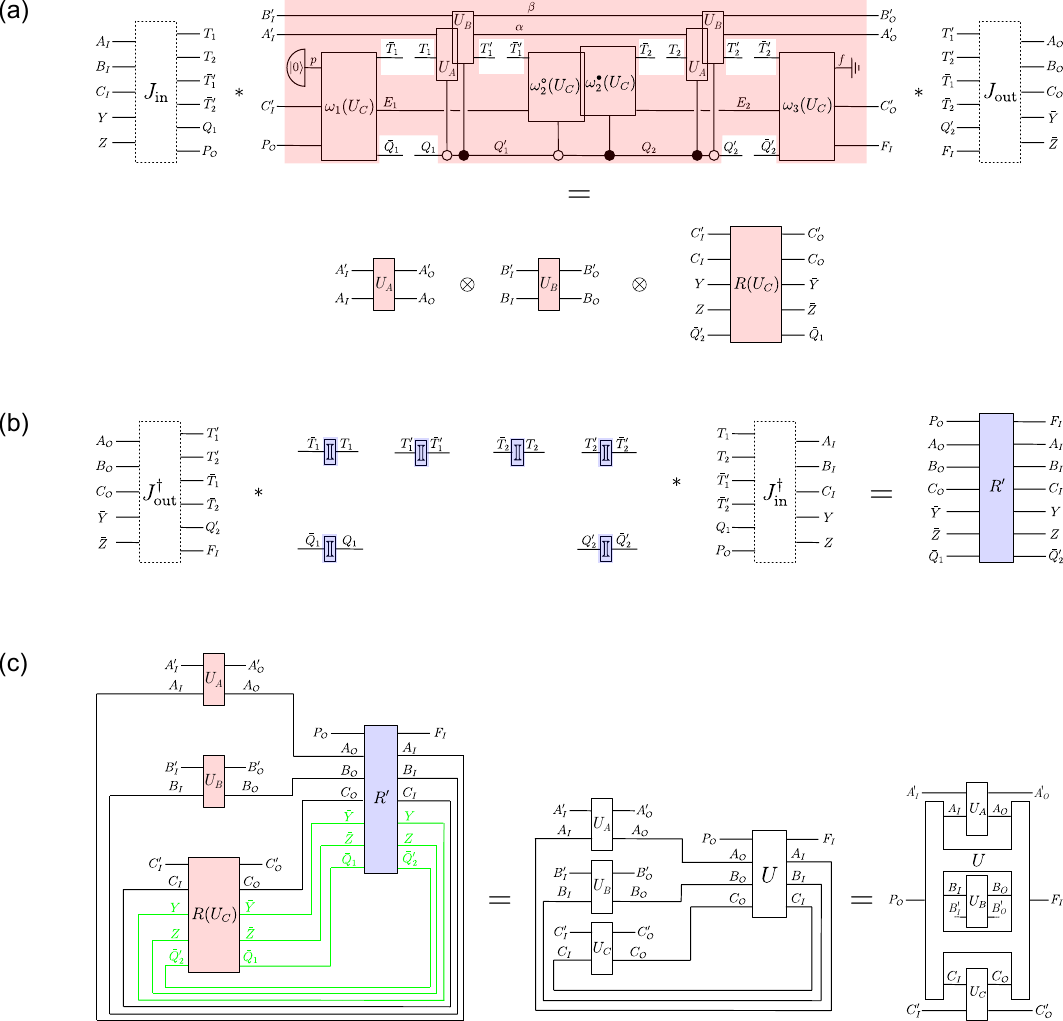}
    \caption{\textbf{Description of the tripartite temporal circuit in terms of time-delocalised subsystems.}\\ (a) Description of the red circuit fragment in terms of the time-delocalised subsystems $A_I$, $B_I$, $C_I$, $Y$, $Z$ of the joint system $T_1 T_2 \bar T_1' \bar T_2' Q_1 P_O$, and $A_O$, $B_O$, $C_O$, $\bar Y$, $\bar Z$ of the joint system $T_1' T_2' \bar T_1 \bar T_2  Q_2' F_I$. \\ 
(b) Description of the blue circuit fragment in terms of the time-delocalised subsystems $A_I$, $B_I$, $C_I$, $Y$, $Z$ of the joint system $T_1 T_2 \bar T_1' \bar T_2' Q_1 P_O$, and $A_O$, $B_O$, $C_O$, $\bar Y$, $\bar Z$ of the joint system $T_1' T_2' \bar T_1 \bar T_2  Q_2' F_I$.\\
(c) The composition of the operations $R(U_C)$ and $R'$ over the systems $Y$, $\bar Y$, $Z$, $\bar Z$, $\bar Q_1$, $\bar Q_2'$ gives rise to a cyclic circuit fragment consisting of the operation $U_C$ and the unitary $U$ that defines the process. That is, when evaluating the composition of $R(U_C)$ and $R$ over the wires shown in green (but not over $C_I$ and $C_O$), we obtain the cyclic circuit in the middle, consisting of the operations $U_A$, $U_B$, $U_C$ and $U$. With respect to the systems $P_O$, $A_{IO}'$, $B_{IO}'$, $C_{IO}'$, $F_I$, as well as the time-delocalised systems $A_{IO}$, $B_{IO}$, $C_{IO}$, the circuit therefore consists of $U_A$, $U_B$, $U_C$ and $U$, composed in a cyclic manner as in the process matrix framework.}
    \label{fig:tripartite_proof}
\end{figure}

\clearpage

\textbf{A process that violates causal inequalities on time-delocalised subsystems} \label{sec:td_bw} \quad In Ref.~\cite{baumeler14a}, it was shown that, for three and more parties, there exist process matrices that violate causal inequalities and that can be interpreted as \emph{classical} process matrices, since they are diagonal in the computational basis.  
An example, first found by Ara\'ujo and Feix and further studied by Baumeler and Wolf in Refs.~\cite{baumeler16,baumeler17}, is the process matrix
\begin{align}
\label{eq:pm_bw_trivialPF}
W_{\text{AF}} &= \sum_{a_O b_O c_O} \ket{\neg b_O \!\land\! c_O,\neg c_O \!\land\! a_O,\neg a_O \!\land\! b_O}\bra{\neg b_O \!\land\! c_O,\neg c_O \!\land\! a_O,\neg a_O \!\land\! b_O}^{A_I B_I C_I} \otimes \ket{a_O,b_O,c_O}\bra{a_O,b_O,c_O}^{A_OB_OC_O},
\end{align}
where $a_O,b_O,c_O \in \{0,1\}$ and where $\neg$ is the negation.
It was then shown by Baumeler and Wolf~\cite{baumeler17} (cf. also Refs.~\cite{araujo17,araujo17a}) that $W_{\text{AF}}$ has a unitary extension $W_{\text{BW}} = \dketbra{U_{\text{BW}}}$, with
\begin{align}
\label{eq:pm_bw}
\dket{U_{\text{BW}}} &= \sum_{\substack{a_O b_O c_O\\p_1 p_2 p_3}} \ket{p_1,p_2,p_3}^{P_1 P_2 P_3} \otimes \ket{p_1 \oplus \neg b_O \land c_O, p_2 \oplus \neg c_O \land a_O, p_3 \oplus \neg a_O \land b_O}^{A_I B_I C_I} \notag \\[-4mm] &\hspace{75mm}\otimes \ket{a_O,b_O,c_O}^{A_OB_OC_O} \otimes \ket{a_O,b_O,c_O}^{F_1 F_2 F_3} 
\end{align}
(with $p_1,p_2,p_3 \in \{0,1\}$, i.e., $\HS^{P_O} = \HS^{P_1 P_2 P_3}$ and $\HS^{F_I} = \HS^{F_1 F_2 F_3}$ consisting of three qubits each, and with $\oplus$ denoting addition modulo 2). 
$W_{\text{AF}}$ is recovered from $\dketbra{U_{\text{BW}}}$ when the global past party prepares the state $\ketbra{0,0,0}{0,0,0}^{P_1 P_2 P_3}$, and the global future party is traced out.
What kind of temporal circuit do we obtain when we apply the general tripartite considerations from the previous section to this particular example? 
A possible such realisation of this process on time-delocalised subsystems is given by the circuit shown in Fig.~\ref{fig:tripartite_example} (similar circuits corresponding to this process have also been studied in other contexts in Refs.~\cite{guerin18b,araujo17a,baumann21}).
 \begin{figure}[h]
     \centering
     \includegraphics[width=0.7\textwidth]{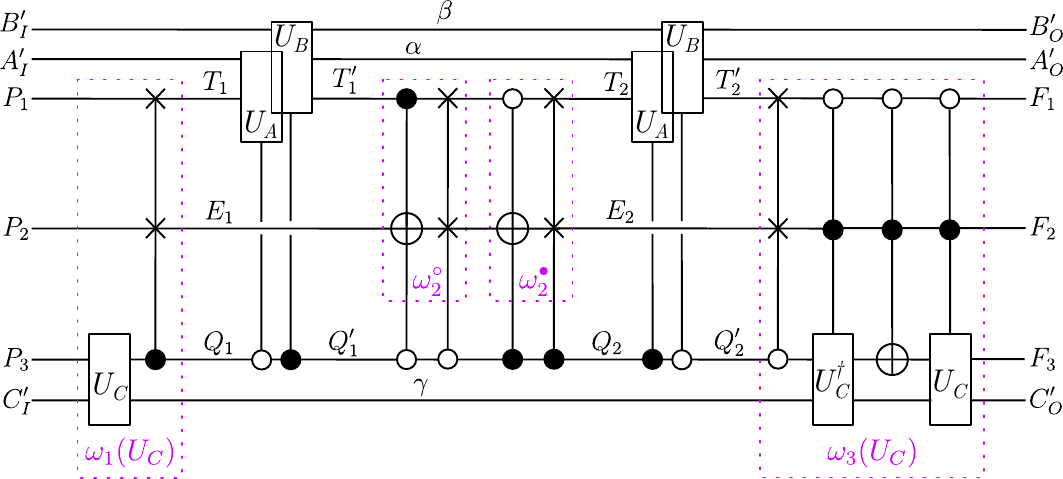}
     \caption{\textbf{Realisation on time-delocalised subsystems of} $\dket{U_{\text{BW}}}$. 
Note that for this particular process, the two circuit operations $\omega_2^\circ$ and $\omega_2^\bullet$ do not depend on $U_C$.
For simplicity of the representation, the identity channels $\id^{\bar T_1 \to T_1}$, $\id^{T_1' \to \bar T_1'}$, $\id^{\bar T_2 \to T_2}$, $\id^{T_2' \to \bar T_2'}$, $\id^{\bar Q_1 \to Q_1}$ and $\id^{Q_2' \to \bar Q_2'}$ that constitute the blue circuit fragment in Fig.~\ref{fig:tripartite_proof} are omitted in the figure here.
Note that, with respect to the general tripartite circuit of Fig.~\ref{fig:tripartite_implementation_main}, we can make a few simplifications for this particular process.
In order to match the general form, the ancilla $\gamma$ would need to be incorporated into the circuit ancillas $E_1$, and $E_2$. But since it is just transmitted identically from $\omega_1(U_C)$ to $\omega_3(U_C)$, we may keep it as a separate wire.
We can also omit the additional systems $p$ and $f$, which we introduce in Supplementary Note~\hyperref[app:tripartite_circuit]{3~A} in order to derive an alternative temporal circuit (Fig.~\ref{fig:bipartite_symm}) for general unitarily extended bipartite processes (and from which we then obtain the circuit of Fig.~\ref{fig:tripartite_implementation_main} for general unitarily extended tripartite processes).
The four circuit operations can be further broken down into several temporal steps, as shown within the purple boxes.
This allows one to get a descriptive understanding of how the time-delocalisation of Charlie's operation comes about in this realisation.
Namely, a time-local instance of $U_C$ is applied once as part of the first circuit operation, and determines the state of the control systems that determine coherently whether $U_A$ is applied on $T_1^{(\prime)}$ and $U_B$ on $T_2^{(\prime)}$ or vice versa (i.e., their order). 
After they have both been applied, a reversal and reapplication of $U_C$ may occur, with a NOT gate in between, and whether this happens or not is determined jointly (and coherently, again) by the operations of Alice and Bob. 
However, we emphasize again that the occurence of several time-local operations that depend on $U_C$ should not be interpreted as $U_C$ being applied multiple times. 
Instead, just like $U_A$ and $U_B$, it is applied once and only once, on time-delocalised input and output systems.}
     \label{fig:tripartite_example}
 \end{figure}

In Supplementary Note~\hyperref[app:tripartite_example_quantum]{4~A}, we give the explicit expressions of the circuit operations in Fig.~\ref{fig:tripartite_example}, as well as for the isomorphisms that define the description in terms of time-delocalised subsystems for this particular case, and we sketch how to apply the general tripartite proof to this example.

The abstract process $W_{\text{AF}}$ in Eq.~\eqref{eq:pm_bw_trivialPF} violates causal inequalities when each party performs a computational basis measurement on its incoming Hilbert space (and outputs the measurement result $o_X$), and prepares the computational basis state $\ket{i_X}$ (corresponding to its classical input $i_X$) on its outgoing Hilbert space.
The corresponding unitary operations that need to be applied in the pure description of the process (and therefore in the circuit of Fig.~\ref{fig:tripartite_example}) are $U_X = \id^{X_I \to X_O'} \otimes \id^{X_I' \to X_O}$, with each incoming ancillary system being prepared in the state $\ket{i_X}^{X_I'}$ and the outgoing ancillary systems being measured in the computational bases. 
One obtains the deterministic correlation $P(o_A,o_B,o_C|i_A,i_B,i_C) = \delta_{o_A,\neg i_B \land i_C}\delta_{o_B,\neg i_C \land i_A}\delta_{o_C,\neg i_A \land i_B}$, which was shown to violate causal inequalities in Ref.~\cite{baumeler16}.

An example of a causal inequality that is violated by this correlation is 
\begin{align}
    &P(0,0,0|0,0,1) + P(0,0,1|0,0,1) + P(0,0,0|1,0,0) + P(1,0,0|1,0,0)\notag \\ &\hspace{35mm}+ P(0,0,0|0,1,0) + P(0,1,0|0,1,0) - P(0,0,0|0,0,0) \eqqcolon I_1 \ge 0, \label{eq:causal_ineq}
\end{align}
which was derived in Ref.~\cite{abbott16}. 
(It corresponds to Eq.~(26) given there, with $0$ and $1$ exchanged for all inputs and outputs).
Here, we find that $I_1 = -1$.

Interestingly, for that particular process with these particular local operations, all operations involved in the tripartite construction simply take computational basis states to computational basis states. 
These can be understood as deterministic operations between \emph{classical random variables}, rather than unitary operations between quantum systems.
In Supplementary Note~\hyperref[app:tripartite_example_classical]{4~B}, we explain this in more detail. 

All things considered, our main result is thus that there exist classical, deterministic circuits, composed of operations between \emph{time-local variables}, which, when described in terms of suitable \emph{time-delocalised variables}, correspond to classical, deterministic processes that violate causal inequalities. \\

\textbf{Noncausal correlations between time-delocalised variables} \label{sec:correlations_main} \quad After having established that this realisation of a noncausal process exists, we now turn to the question of what we should conclude from the fact that a causal inequality can be violated in such a situation. 
The general reasoning behind causal inequalities is similar to that behind Bell inequalities---one considers certain assumptions which restrict the correlations that can arise from some experiment, and their violation then implies that not all of these assumptions are satisfied. 
To determine whether a causal inequality violation is a meaningful device-independent witness of causal indefiniteness, one must therefore clarify whether the assumptions underlying causal inequalities are plausible or compelling in the setting under consideration---a question that is subtle, notably in regimes of relativistic quantum information and quantum gravity~\cite{ho18,debski22}, but, as it will turn out, also in the standard quantum situations we consider here.
In the following, we will therefore analyse our result in this regard, and argue that causal inequalities are indeed a meaningful concept to show the absence
of a definite causal order between the time-delocalised variables we identified.

In the original approach developed in Ref.~\cite{oreshkov12}, one firstly assumes that the events involved in the experiment take place in a \emph{causal order} (which, in general, can be dynamical and subject to randomness~\cite{oreshkov16,abbott16}). 
With respect to this causal order, there are two further assumptions that enter the derivation of causal inequalities.
Firstly, the classical inputs which the parties receive are subject to \emph{free choice}.
Technically speaking, this means that they cannot be correlated with any properties pertaining to their causal past or elsewhere (see \hyperref[app:methods_CI]{Methods}).
Secondly, the parties operate in \emph{closed laboratories}.
That is, intuitively speaking, they open their laboratory once to let a physical system enter, interact with it and open their laboratory once again to send out a physical system, which provides the sole means of information exchange between the local variables and the rest of the experiment.
More formally, the closed laboratory assumption says that, for each party $X$, any causal influence from the \emph{setting} variable $I_X$, which describes its classical input, to any other variable, except the variable $O_X$ that describes its classical outcome, has to pass through the outgoing variable $X_O$. Similarly, any causal influence to $O_X$ from any other variable except $I_X$ has to pass through $X_I$. Furthermore, $X_I$ is in the causal past of $X_O$ (see \hyperref[app:methods_CI]{Methods}).
In order to clarify whether the violation of a causal inequality discovered here is meaningful and interesting, the question that we need to address is whether one would naturally expect the free choice and closed laboratory assumptions to be satisfied in our scenario with time-delocalised (classical) variables, or whether one of them is manifestly violated.

In the Methods section \hyperref[app:methods_CI]{``Causal inequality assumptions''}, we formulate these assumptions, for the multipartite case, in a way that is suitable for our time-delocalised setting, namely directly in terms of the variables involved (rather than in terms of \emph{events} as in~\cite{oreshkov12}), and show that they indeed imply that causal inequalities must be respected.
Our formulation provides a strengthening of the original derivation in Ref.~\cite{oreshkov12} by relaxing the closed laboratories assumption---rather than imposing that the incoming variable $X_I$ is always in the causal past of the outgoing variable $X_O$, we only require this constraint to hold for at least one particular value of the corresponding setting variable $I_X$ (see \hyperref[app:methods_CI]{Methods}). As we discuss in the following, this formulation of the assumptions is directly motivated by the observable causal relations between the variables of interest. Thus, the violation of a causal inequality in the experiment can be seen as a compelling, device-independent demonstration of the nonexistence of a possibly dynamical and random causal order between the variables. 

The causal relations between the incoming and outgoing variables $X_I$ and $X_O$, as well as the setting and outcome variables $I_X$ and $O_X$, $X = A,B,C$, can be graphically represented by a directed graph as in Fig.~\ref{fig:CSM}, where the arrows describe direct causal influences. 

\begin{figure}[h]
    \centering
    \includegraphics[width=0.6\textwidth]{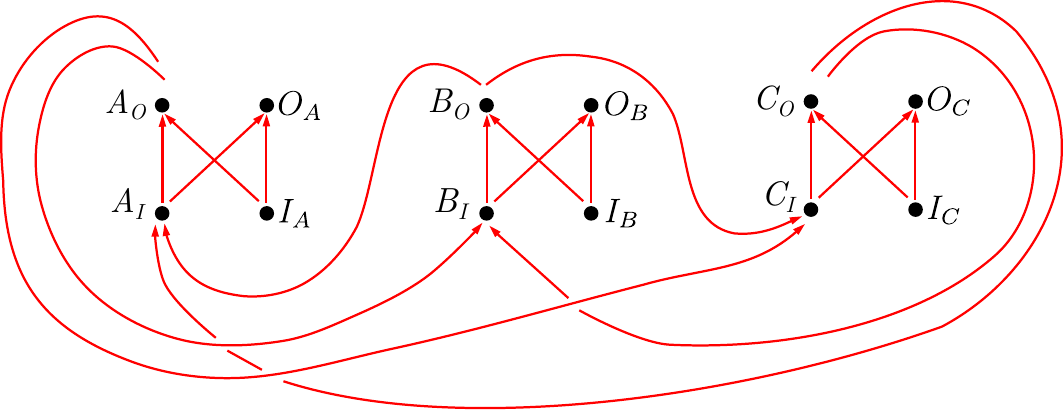}
    \caption{\textbf{Causal structure of the cyclic causal model corresponding to the process} $W_{\text{AF}}$. The causal influences represented by the arrows can be rigorously defined in the framework of \emph{cyclic split-node causal models}~\cite{barrett20} (or, in the case of more general quantum processes, \emph{cyclic quantum causal models}~\cite{barrett20}). 
In particular, if we regard each single variable as a split-node (or in the more general case of quantum processes, a quantum node where the output Hilbert space is the dual of the input Hilbert space), the experiment can be viewed as a process on a larger number of nodes, which is given by the (tensor) product of the original process and the local operations of the parties. 
The causal relations between the new nodes form the cyclic causal structure in Fig.~\ref{fig:CSM}. 
This follows from the known cyclic causal structure of the process $W_{\text{AF}}$~\cite{barrett20} and the most general causal structure that each local operation from $X_I$, $I_X$ to $X_O$, $O_X$, $X = A,B,C$, can have. 
(Here, we are imagining an experiment in which each party could choose over a finite set of local operations that could instantiate all these different causal relations. 
The set of operations over which the party can choose can be embedded within a single deterministic operation with the maximally connected acyclic causal structure displayed in the figure by choosing $I_X$ and $O_X$ of sufficiently large cardinality.)
This causal structure can be operationally verified: by applying time-delocalised SWAP operations on the time-delocalised variables (or quantum systems) so as to disconnect the operations of the parties (see Supplementary Note~\hyperref[app:testing]{1}), one could intervene on the variables (or quantum systems) and verify which ones are directly influenced by which other ones. 
Note that the process $W_{\text{AF}}$ was first studied in the framework of cyclic causal models in Ref.~\cite{barrett20}, but from the perspective of the coarse-grained split nodes defined by the pairs of variables $(X_I, X_O) \equiv X$, for $X=A, B, C$.}
    \label{fig:CSM}
\end{figure}

In the causal structure in Fig.~\ref{fig:CSM}, the variables $I_X$ are \emph{root variables} and hence they can only be correlated with other variables as a result of causal influence from them to these other variables. 
It is thus natural to assume the same would be true if there existed an explanation of the correlations in terms of a definite causal order, which legitimates the free choice assumption.    

Regarding the closed laboratory assumption, in the graph of Fig.~\ref{fig:CSM}, any causal influence from $I_X$ to variables other than $O_X$ and $X_O$ is mediated, or \emph{screened off}, by $X_O$.
Similarly, any influence onto $O_X$ by variables other than $I_X$ and $X_I$ is mediated by $X_I$. 
It is natural to assume that these constraints would also hold in any potential explanation of the correlations in terms of a definite causal order.
Finally, the causal diagram displays causal influence from $X_I$ to $X_O$.
Note that this causal influence from $X_I$ to $X_O$ can be turned on or off depending on the value of the setting variable $I_X$. 
This is precisely the reason why we introduced the weakened form of the closed laboratory assumption described above, which indeed allows for $X_O$ to be inside or outside of the causal future of $X_I$, depending on the value of $I_X$. 

To summarise, we have shown that there is a set of natural assumptions about the possible underlying causal orders between the variables of interest in our experiment, which are directly motivated by the observable causal relations between these variables, and which imply that the correlations in the experiment would need to respect causal inequalities. 
The observable violation of a causal inequality in the experiment thus implies that an underlying causal order compatible with these assumptions cannot exist.

Are there any considerations that would lead us to drop one assumption over another in this type of experiment? 
In particular, could it be that, in spite of the outlined considerations about the observable causal relations, a more careful inspection of the temporal description of the experiment would reveal that it is in fact the free choice or closed laboratory assumptions that is violated, as opposed to the existence of a causal order per se? In the discussion below and in Supplementary Note~\hyperref[app:CI_assumptions]{6}, we analyse this question and argue that if the hypothetical causal order is expected to be imposed by spatiotemporal relations, it is the existence of causal order per se that seems violated, since the variables of interest do not admit an effective localisation in spacetime.

\section*{Discussion}
\label{sec:discussion}

A central question in the study of quantum causality is which processes with indefinite causal order have a realisation within standard quantum theory.
In order to address this question, it is first of all necessary to clarify what it means for a causally indefinite process to have a standard quantum theoretical realisation, a question that is subtle and has led to a lot of controversy. 
An answer to this question is provided by the concept of \emph{time-delocalised subsystems}, which establishes a bridge between the standard quantum theoretical description of the scenarios under consideration and their description in the process matrix framework, in which the notion of indefinite causal order is formalised.
Prior to our work, it had been known that indefinite causal order can be realised on systems that are time-delocalised in a coherently controlled manner---that is, intuitively speaking, the input and output systems of each party effectively reduce to one or another time-local system, conditionally on the state of a control quantum system.
Here, we showed that this paradigm does not encompass all possibilities, and that standard quantum theory also allows for more radical ways to realise indefinite causal order processes.
Notably, there exist processes that have realisations on time-delocalised subsystems and that violate causal inequalities, a feature that is generally believed to be impossible within standard (quantum) physics~\cite{purves21}.
We analysed a concrete tripartite example, for which it turned out that the situation can entirely be understood in terms of classical variables, rather than quantum systems.
There, Alice's and Bob's input and output variables are time-delocalised in a classically controlled way, while the situation for Charlie is quite different.
From the point of view of the temporal description of the experiment, one time-local instance of Charlie's operation is applied in the beginning of the circuit, which may be reversed and reapplied at the end of the circuit, conditionally on the output of Alice and Bob.
We then analysed this causal inequality violation with regard to the assumptions that underlie the derivation of causal inequalities, and found that the free choice and closed laboratory assumptions are not manifestly violated, which makes causal inequalities a meaningful device-independent concept to qualify these realisations as incompatible with a definite causal order.

Let us further elaborate on the subtleties that this analysis involves, in particular with respect to the closed laboratory assumption (see a more detailed discussion in Supplementary Note~\hyperref[app:CI_assumptions]{6}). 
From an intuitive reading of the circuit in Fig.~\ref{fig:tripartite_example}, one may be tempted to say that Charlie acts multiple times or receives several inputs, and sends out several outputs.
At first sight, this seems to violate the closed laboratory assumption, which essentially stipulates that each party is involved in a single round of information exchange, where they receive information about the past through the input variable $X_I$ and subsequently send out information into the future through the output variable $X_O$.
However, it is crucial to realise that the causal inequality assumptions concern concrete variables (or quantum systems), which in our case we have explicitly specified, and which are not the same as what one might intuitively assume if one thinks of this experiment as involving three laboratories existing through time that exchange information with each other. 
In particular, the parties Alice, Bob and Charlie must be understood abstractly as agents who control the parameters that describe the operations taking place on the time-delocalised variables.
As such, they indeed apply their operations once and only once on the pairs of input and output variables we have identified.
To say that the closed laboratory assumption is violated, one would need to come up with an account for the process in terms of variables which are embedded into a causal order, but for which the closed laboratory assumption fails.
We are not aware of any explanation in terms of the time-local variables in the temporal circuit and the causal order defined by their spatiotemporal relations (or any other operationally meaningful variables) where this is the case.
In particular, the above-outlined intuitive reading of the circuit, with the operations being effectively localised in time, conditioned on other variables in the process, while meaningful for quantumly controlled time-delocalised operations, does not make operational sense in our case (as it would mean that some future parties can influence what has happened in the past, see Supplementary Note~\hyperref[app:CI_assumptions]{6}).
In Supplementary Note~\hyperref[app:CI_assumptions]{6}, we show that, for some of the time-delocalised variables we identified, there do not exist time-local variables that take their value, meaning that they do not admit any effective localisation in time. 

The further implications of this finding are yet to be unravelled, and raise various open questions.
In a more general sense, there is a causal explanation for how these correlations in our process come about—namely, precisely the tripartite circuit realisation we found.
This raises the question of whether and how the concept of causal inequalities in itself could be revised or modified. For instance, could there be a notion of \emph{causal process} which is more relaxed, and which includes such possibilities?

What other processes beyond the classes considered here have a realisation on time-delocalised subsystems, and what other types of time-delocalisation would this involve?
Could it be that any indefinite causal order process admits such a realisation, or are there counterexamples?
The proof for unitarily extended tripartite processes is crucially based on the fact that the bipartite unitarily extended process resulting from fixing one of the operations has a particular standard form---namely, a variation of the quantum switch~\cite{barrett20,yokojima20}. 
Establishing whether a similar standard form exists for unitarily extended processes with more than two parties could give insight into whether the constructions presented here can be generalised to more parties.

Note that there are also unitary extensions of bipartite processes---i.e., variations of the quantum switch---that have realisations of the kind considered here, with one of the operations being reversed and reapplied (for instance, one obtains such a realisation when one fixes Alice's or Bob's operation in the circuit of Fig.~\ref{fig:tripartite_example}). 
This raises the question of whether, conversely, the process considered in this work could have an alternative, more intuitive interpretation as a superposition of processes with different definite causal orders in some sense (although it cannot be achieved by direct multipartite generalisations of the quantum switch~\cite{wechs18}). 
The decomposition of this process into a direct sum of causal unitary processes shown in~\cite{barrett20} may offer insights into this question. 
 
Finally, in the way the process framework was originally conceived, the operations performed by the parties were imagined to be local from the point of view of some local notion of time for each party. 
Can we conceive of a notion of a quantum temporal reference frame with respect to which the time-delocalised variables considered here would look local, and what implications would this have for our understanding of the spacetime causal structure in which these experiments are embedded? 
In view of the fact that the example considered here is purely classical, the question arises of which part of a noncausal process is actually related to the quantumness of causal relations. 
On the practical side, an obvious question is whether our finding could unveil new applications.
For instance, could we use such time-delocalised variables for new cryptographic or other information-processing protocols?

\section*{Acknowledgements}
This publication was made possible through the support of the ID\# 61466 grant and ID\# 62312 grant from the John Templeton Foundation, as part of the \href{https://www.templeton.org/grant/the-quantum-information-structure-of-spacetime-qiss-second-phase}{``The Quantum Information Structure of Spacetime'' Project (QISS)}. The opinions expressed in this project/publication are those of the author(s) and do not necessarily reflect the views of the John Templeton Foundation. This work was supported by the Program of Concerted Research Actions (ARC) of the Universit\'{e} libre de Bruxelles and by the French National Research Agency through its \emph{``Investissements d'avenir''} (ANR-15-IDEX-02) program and the ANR-22-CE47-0012 project. J. W. is supported by the Chargé de Recherche fellowship of the Fonds de la Recherche Scientifique FNRS
(F.R.S.-FNRS). O. O. is a Research Associate of the Fonds de la Recherche Scientifique (F.R.S.–FNRS). Published with the support of the University Foundation of Belgium.

\section*{Methods}

\textbf{The Choi isomorphism and the link product} \quad \label{app:choi} The Choi isomorphism~\cite{choi75} is a convenient way to represent linear maps between vector spaces as vectors themselves, and linear maps between spaces of operators as operators themselves. In order to define it, we choose for each Hilbert space $\HS^Y$ a fixed orthonormal, so-called \emph{computational} basis $\{\ket{i}^Y\}_i$. For a Hilbert space $\HS^{YZ} = \HS^Y \otimes \HS^Z$, with computational bases $\{\ket{i}^Y\}_i$ of $\HS^Y$ and $\{\ket{j}^Z\}_j$ of $\HS^Z$, respectively, the computational basis is taken to be $\{\ket{i,j}^{YZ} \coloneqq \ket{i}^Y \otimes \ket{j}^Z\}_{i,j}$. 
We then define the \emph{pure Choi representation} of a linear operator $V: \HS^Y \to \HS^Z$ as
\begin{align}\label{eq:pure_cj}
\dket{V} \coloneqq \id \otimes V \dket{\id}^{YY} = \sum_i \ket{i}^Y \otimes V\ket{i}^Y \ \in \HS^Y \otimes \HS^Z,
\end{align}
with $\dket{\id}^{YY} \coloneqq \sum_i \ket{i}^Y \otimes \ket{i}^Y \in \HS^Y \otimes \HS^Y$.
Similarly, we define the \emph{(mixed) Choi representation} of a linear map $\M: \L(\HS^{Y}) \to \L(\HS^{Z})$ as
\begin{align}
M & \coloneqq \, \big({\cal I}^Y \otimes \M \big) \big( \dketbra{\id}^{YY} \big)  = \, \sum_{i,i'} \ketbra{i}{i'}^Y \otimes \M\big( \ketbra{i}{i'}^Y \big) \quad \in \L\big(\HS^{YZ}\big) \label{eq:def_Choi_matrix}
\end{align}
where ${\cal I}^Y$ denotes the identity map on $\L(\HS^Y)$.

The \emph{link product}~\cite{chiribella08,chiribella09} is a tool which allows one to compute the Choi representation of a composition of maps in terms of the Choi representation of the individual maps.
Consider two tensor product Hilbert spaces $\HS^{XY} = \HS^X \otimes \HS^Y$ and $\HS^{YZ} = \HS^Y \otimes \HS^Z$ which share the same (possibly trivial) space factor $\HS^Y$, and with non-overlapping $\HS^X, \HS^Z$. 
The link product of any two vectors $\ket{a} \in \HS^{XY}$ and $\ket{b} \in \HS^{YZ}$ is defined (with respect to the computational basis $\{\ket{i}^Y\}_i$ of $\HS^Y$) as~\cite{wechs18}
\begin{align}
\ket{a} * \ket{b} \coloneqq \, &\big( \id^{XZ} \otimes \dbra{\id}^{YY} \big) (\ket{a} \otimes \ket{b}) = \, \sum_i \ket{a_i}^X \otimes \ket{b_i}^Z \quad \in \HS^{XZ} \label{eq:def_pure_link_product}
\end{align}
with $\ket{a_i}^X \coloneqq (\id^X \otimes \bra{i}^Y) \ket{a} \in \HS^X$ and $\ket{b_i}^Z \coloneqq (\bra{i}^Y \otimes \id^Z) \ket{b} \in \HS^Z$. 
Similarly, the link product of any two operators $A \in \L(\HS^{XY})$ and $B \in \L(\HS^{YZ})$ is defined as~\cite{chiribella08,chiribella09}%
\begin{align}
A * B &\coloneqq  \, \big( \id^{XZ} \otimes \dbra{\id}^{YY} \big) (A \otimes B) \big( \id^{XZ} \otimes \dket{\id}^{YY} \big) =  \, \sum_{ii'} A_{ii'}^X \otimes B_{ii'}^Z \quad \in \L\big(\HS^{XZ}\big) \label{eq:def_mixed_link_product}
\end{align}
with $A_{ii'}^X \coloneqq (\id^X \otimes \bra{i}^Y) A (\id^X \otimes \ket{i'}^Y) \in \L(\HS^X)$ and $B_{ii'}^Z \coloneqq (\bra{i}^Y \otimes \id^Z) A (\ket{i'}^Y \otimes \id^Z) \in \L(\HS^Z)$. 

The link products thus defined are commutative (up to a re-ordering of the tensor products), and associative provided that each constituent Hilbert space appears at most twice~\cite{chiribella09,wechs18}. 
For $\ket{a} \in \HS^X$ and $\ket{b} \in \HS^Z$, or $A \in \L(\HS^X)$ and $B \in \L(\HS^Z)$ in distinct, non-overlapping spaces, they reduce to tensor products ($\ket{a} * \ket{b} = \ket{a} \otimes \ket{b}$ or $A * B = A \otimes B$). 
For $\ket{a}, \ket{b} \in \HS^Y$, or $A , B \in \L(\HS^Y)$ in the same spaces, they reduce to scalar products ($\ket{a} * \ket{b} = \sum_i \braket{i}{a} \braket{i}{b} = \ket{a}^T \ket{b}$ or $A * B = \Tr[A^T B]$). 

For two linear operators $V_1: \HS^{X} \to \HS^{X'Y}$ and $V_2: \HS^{YZ} \to \HS^{Z'}$, the pure Choi representation of the composition $V \coloneqq (\id^{X'} \otimes V_2)(V_1 \otimes \id^Z): \HS^{XZ} \to \HS^{X'Z'}$ is obtained, in terms of the pure Choi representations $\dket{V_1} \in \HS^{XX'Y}$ and $\dket{V_2} \in \HS^{YZZ'}$ of the individual operators $V_1$ and $V_2$, as
\begin{align}
\dket{V} = \dket{V_1} * \dket{V_2} \quad \in \HS^{XX'ZZ'}. \label{eq:compose_link_prod_pure}
\end{align}
Similarly, for two linear maps $\M_1: \L(\HS^{X}) \to \L(\HS^{X'Y})$ and $\M_2: \L(\HS^{YZ}) \to \L(\HS^{Z'})$
the Choi representation of the composition $\M \coloneqq ({\cal I}^{X'} \otimes \M_2) \circ (\M_1 \otimes {\cal I}^Z): \L(\HS^{XZ}) \to \L(\HS^{X'Z'})$ is obtained, in terms of the Choi representations of the individual maps $M_1 \in \L(\HS^{XX'Y})$ and $M_2 \in \L(\HS^{YZZ'})$ of $\M_1$ and $\M_2$, as
\begin{align}
M = M_1 * M_2 \quad \in \L\big(\HS^{XX'ZZ'}\big). \label{eq:compose_link_prod_mixed}
\end{align}

Another property of the link product, which can easily be verified from its definition, is that for any $\ket{a} \in \HS^{XY}$, $\ket{b} \in \HS^{YZ}$ and any unitary $U: \HS^Y \to \HS^{Y'}$, it holds that 
\begin{equation}
\label{eq:unitary_canceling_pure}
(\ket{a} * \dket{U}) * (\dket{U^\dagger} * \ket{b}) = \ket{a} * \ket{b}.
\end{equation}
Similarly, for any $A \in \L(\HS^{XY})$, $B \in \L(\HS^{YZ})$ and any unitary $U: \HS^Y \to \HS^{Y'}$, it holds that 
\begin{equation}
\label{eq:unitary_canceling_mixed}
(A * \dket{U}\dbra{U}) * (\dket{U^\dagger}\dbra{U^\dagger} * B) = A * B.
\end{equation}
This is precisely the property we use in the main text when changing the subsystem description of a circuit.
Namely, it is due to this property that the overall composition of two circuit fragments remains the same when we compose one fragment with certain isomorphisms (i.e., unitary transformations) defining new subsystems, and the complementary fragment with the inverses of these isomorphisms.\\

\textbf{Unitary extensions of bipartite processes on time-delocalised subsystems} \label{app:bipartite_methods} \quad
In summary, the bipartite result says that for any unitarily extended bipartite process, there exists a temporally ordered quantum circuit, with operations that depend on the local operations $U_A$ and $U_B$ applied in the process, which precisely corresponds to the situation considered in the process matrix framework, with one instance of each $U_A$ and $U_B$ composed with the process matrix in a cyclic circuit, when described in terms of a suitable choice of time-delocalised subsystems.

Formally, the bipartite result can be stated as follows.\\

\textit{Proposition 2.} \ \ Consider a unitary extension of a bipartite process, described by a process vector $\dket{U} \in \HS^{P_O A_{IO}B_{IO} F_I}$, composed with unitary local operations $U_A:\HS^{A_I A_I'} \to \HS^{A_O A_O'}$ and $U_B:\HS^{B_I B_I'} \to \HS^{B_O B_O'}$. 
For any such process, the following exist.
\begin{enumerate}
\item A temporal circuit as in Fig.~\ref{fig:bipartite_implementation_main}, in which $U_A$ is applied on some systems $A_I$ and $A_O$ at a definite time, preceded and succeded respectively by two unitary circuit operations $\omega_1(U_B): \HS^{B_I' P_O} \to \HS^{A_I E}$ and $\omega_2(U_B): \HS^{A_O E} \to \HS^{B_O' F_I}$ that depend on $U_B$.
\item Isomorphisms $J_\text{in}: \HS^{B_I Z} \to \HS^{A_O P_O}$ and $J_\text{out}: \HS^{A_I F_I} \to \HS^{B_O \bar Z}$, such that, with respect to the subsystem $B_I$ of $A_OP_O$ and the subsystem $B_O$ of $A_IF_I$ that these isomorphisms define, the circuit in Fig.~\ref{fig:bipartite_implementation_main} takes the form of a cyclic circuit composed of $U$, $U_A$ and $U_B$, as in the process matrix framework (see Fig.~\ref{fig:bipartite_proof}). 
\end{enumerate}

\begin{figure}[h]
\centering
\includegraphics[width=0.5\textwidth]{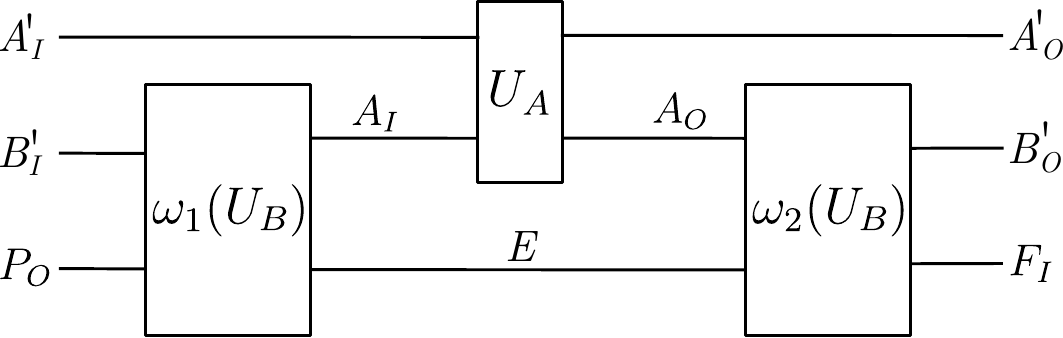}
\caption{\textbf{Temporal circuit in the bipartite case.} Temporal circuit for a bipartite unitary process, with 
$U_A$ being applied on time-local systems $A_I$ and $A_O$, and with circuit operations  $\omega_1(U_B): \HS^{B_I' P_O} \to \HS^{A_I E}$ and $\omega_2(U_B): \HS^{A_O E} \to \HS^{B_O' F_I}$ that depend on $U_B$, and that are connected by an ancillary system $E$.}
\label{fig:bipartite_implementation_main}
\end{figure}

Here, we outline the main points of the proof. 
All technical details and calculations are given in Supplementary Note~\hyperref[app:bipartite_proof]{2}.

\textit{Outline of proof.} 
The existence of a temporal circuit with the form of Fig.~\ref{fig:bipartite_implementation_main} is shown in Supplementary Note~\hyperref[sec:bipartite_circuit]{2~A}. 
It follows from the fact that any unitary extension of a one-party process can be implemented as a fixed-order circuit or \emph{quantum comb}~\cite{chiribella08,chiribella09}, in which the party applies its operation at a definite time.
For a unitary extension of a bipartite process, one can therefore find a fixed-order circuit in which one of the parties acts at a definite time, and which is composed of circuit operations that depend on the operation of the other party.

In Supplementary Note~\hyperref[app:iso_bipartite]{2~B}, we show that the unitary $U$ which defines the process isomorphically maps some subsystem of $A_O P_O$ to $B_I$, and $B_O$ to some subsystem of $A_I F_I$.
The corresponding isomorphisms $J_\text{in}: \HS^{B_I Z} \to \HS^{A_O P_O}$ and $J_\text{out}: \HS^{A_I F_I} \to \HS^{B_O \bar Z}$ (where $Z$ and $\bar Z$ are appropriate complementary subsystems) can be taken to define an alternative description of the circuit in Fig.~\ref{fig:bipartite_implementation_main} in terms of time-delocalised subsystems, since there, $P_O$, $A_I$, $A_O$ and $F_I$ are time-local wires.

In Supplementary Note~\hyperref[sec:bipartite_subsystems]{2~C}, we change to the description of the circuit in terms of these time-delocalised subsystems.
For that purpose, we decompose the circuit into the red and blue circuit fragment shown in Fig.~\ref{fig:bipartite_proof}. 
By construction, when composed with $J_\text{in}$ and $J_\text{out}$, the red fragment consists of precisely one application of $U_B: \HS^{B_I B_I'} \to \HS^{B_O B_O'}$, in parallel to an identity channel from $Z$ to $\bar Z$ (see Fig.~\hyperref[fig:bipartite_proof]{8(a)}).
The blue fragment, which is just the operation $U_A$, needs to be composed with the inverse isomorphisms $J_\text{in}^\dagger$ and $J_\text{out}^\dagger$ so that the overall, global transformation implemented by the circuit remains the same (see Fig.~\hyperref[fig:bipartite_proof]{8(b)}).
In the new description of the circuit of Fig.~\ref{fig:bipartite_implementation_main} in terms of these subsystems, one thus obtains a cyclic circuit as on the left-hand side of Fig.~\hyperref[fig:bipartite_proof]{8(c)}.

The final step is to note that the composition of the inverse isomorphisms $J_{\text{in}}^\dagger$ and $J_{\text{out}}^\dagger$ with the identity channel $\id^{Z \to \bar Z}$ over the systems $Z$ and $\bar Z$ is precisely the unitary operation $U$ that defines the process.
Therefore, in this coarse-grained description with respect to the systems $P_O$, $A_{IO}^{(\prime)}$, $B_{IO}^{(\prime)}$, and $F_I$, the circuit indeed consists of three transformations $U_A: \HS^{A_I A_I'} \to \HS^{A_O A_O'}$, $U_B: \HS^{B_I B_I'} \to \HS^{B_O B_O'}$ and $U: \HS^{P_O A_O B_O} \to \HS^{F_I A_I B_I}$ that are composed in a cyclic circuit as in the process matrix picture (see the right-hand side of Fig.~\hyperref[fig:bipartite_proof]{8(c)}).
In other words, it is precisely that structure that happens on the subsystems with respect to which we chose to describe the circuit. 
This establishes the bipartite result.

Applying the bipartite constructions presented here to the particular case of the quantum switch leads to an asymmetric implementation with Alice performing a time-local operation and Bob's operation being time-delocalised through coherent control of the times at which it is applied. For symmetric implementations in which both Alice's and Bob's operation are time-delocalised, a similar argument can be made~\cite{oreshkov18}.\\

\begin{figure}[h]
    \includegraphics[width=0.9\textwidth]{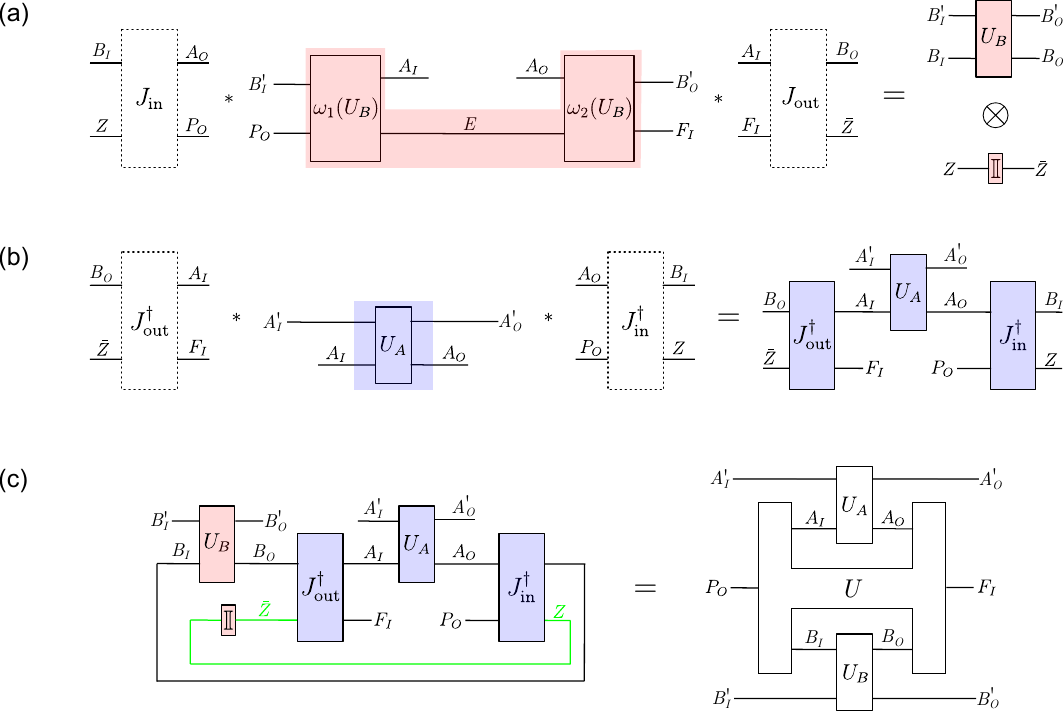}
    \caption{ \textbf{Description of the bipartite temporal circuit in terms of time-delocalised subsystems.}\\
(a) Description of the red circuit fragment, which implements an operation from $\HS^{B_I' P_O A_O}$ to $\HS^{B_O' A_I F_I}$, in terms of the time-delocalised subsystems $B_I$, $Z$ of the joint system $A_O P_O$ and $B_O$, $\bar Z$ of $A_I F_I$. 
\\
(b) Description of the blue circuit fragment, which is simply the operation $U_A$, in terms of the time-delocalised subsystems $B_I$, $Z$ of $A_O P_O$ and $B_O$, $\bar Z$ of $A_I F_I$. 
\\
(c) In the new subsystem description, one obtains a cyclic circuit, as considered in the process matrix framework, where the unitary operation $U$ that defines the process is obtained by composing the inverse isomorphisms $J_{\text{in}}^\dagger$ and $J_{\text{out}}^\dagger$ and the identity channel $\id^{Z \to \bar Z}$ over the subsystems $Z$ and $\bar Z$ (i.e., over the wires shown in green).}
    \label{fig:bipartite_proof}
\end{figure}
\newpage
\textbf{Causal inequality assumptions} \label{app:methods_CI} \quad A causal order between the elements of some set $\mathcal{S}$ is formally described by a \emph{strict partial order} (SPO) on $\mathcal{S}$~\cite{oreshkov12,oreshkov16}.
A SPO is a binary relation $\prec$, which, for all $X$, $Y$, $Z \in \mathcal{S}$, satisfies \emph{irreflexivity} (not $X \prec X$) and \emph{transitivity} (if $X \prec Y$ and $Y \prec Z$, then $X \prec Z$). (Note that irreflexivity and transitivity together imply \emph{asymmetry}, i.e., if $X \prec Y$, then not $Y \prec X$.)
If $X \prec Y$, we will say that $X$ \emph{is in the causal past of} $Y$ (equivalently, $Y$ \emph{is in the causal future of} $X$).
For $X \neq Y$ and not $X \prec Y$, we will use the notation $X \npreceq Y$, and the terminology $X$ \emph{is not in the causal past of} $Y$ (equivalently, $Y$ \emph{is not in the causal future of} $X$).
If $X \npreceq Y$ and $Y \npreceq X$, we will say that $X$ \emph{is in the causal elsewhere of} $Y$~\cite{eddington28} (sometimes also termed $X$ \emph{is not causally connected to} $Y$, or $X$ \emph{is causally disconnected from} $Y$).
For subsets $\mathcal{S}' \subset \mathcal{S}$, we will use the short-hand notation $X \npreceq \mathcal{S}'$ to denote that $\forall \ Y \in \mathcal{S}', X \npreceq Y$.
We furthermore define the \emph{causal past} of $X$ as the set $\P_X \coloneqq \{Y \in \mathcal{S}|Y \prec X \}$, the \emph{causal future} of $X$ as $\F_X \coloneqq \{Y \in \mathcal{S}|X \prec Y\}$ and the \emph{causal elsewhere} of $X$ as $\E_X \coloneqq \{Y \in \mathcal{S}|Y \npreceq X \ \text{and} \ X \npreceq Y\}$. 
Also, note that a SPO on $\mathcal{S}$ naturally induces a SPO on any subset of $\mathcal{S}$.

The variables involved in the process under consideration are the time-delocalised incoming and outgoing variables $A_I$, $A_O$, $B_I$, $B_O$, $C_I$, $C_O$, as well as the settings and outcomes, which can be described by random variables $I_A$, $I_B$, $I_C$ (with values $i_A$, $i_B$, $i_C$, respectively) and $O_A$, $O_B$, $O_C$ (with values $o_A$, $o_B$, $o_C$, respectively).
We will abbreviate the set of all these variables to $\Gamma \coloneqq \{A_I,A_O,B_I,B_O,C_I,C_O,I_A,O_A,I_B,O_B,I_C,O_C\}$.
The assumption that the correlations $P(o_A,o_B,o_C|i_A,i_B,i_C)$ arise from a situation in which these variables occur in a (generally probabilistic and dynamical) causal order can be formalised as follows.
\\ \\
\textit{Causal order assumption.} There exists a random variable which takes values $\kappa(\Gamma)$ in the possible strict partial orders on the set $\Gamma$, and a joint probability distribution $P(o_A, o_B, o_C, \kappa(\Gamma)|i_A,i_B,i_C)$, which, when marginalised over that variable, yields the correlations $P(o_A, o_B, o_C, |i_A,i_B,i_C)$ observable in the process, i.e.,
\begin{equation}
\label{eq:sum_causalStructure}
\sum_{\kappa(\Gamma)} P(o_A, o_B, o_C, \kappa(\Gamma) |i_A,i_B,i_C) = P(o_A, o_B, o_C, |i_A,i_B,i_C).
\end{equation}
This probability distribution satisfies the following two conditions.

\emph{1) Free choice.} The settings $I_A$, $I_B$ and $I_C$ are assumed to be \emph{freely chosen}, which means that they cannot be correlated with any properties pertaining to their causal past or elsewhere.
That is, the probability for their causal past and elsewhere to consist of certain variables, for the variables in these sets to have a certain causal order, and for the outcome variables in these sets to take certain values, cannot depend on the respective setting.
Formally, with respect to $I_A$, for any (disjoint) subsets $\Y$ and $\Z$ of $\Gamma \backslash \{I_A\}$, and any causal order $\kappa(\Y \cup \Z)$ on the variables in $\Y \cup \Z$, the following must hold:
\begin{align}
\label{eq:free_choice}
&P(o^\Y,o^\Z, \P_{I_A} = \Y, \E_{I_A} = \Z, \kappa(\Y \cup \Z) | i_A, i_B, i_C) = P(o^\Y,o^\Z, \P_{I_A} = \Y, \E_{I_A} = \Z, \kappa(\Y \cup \Z) | i_B, i_C) .
\end{align}
Here, by $P(o^\Y,o^\Z, \P_{I_A} = \Y, \E_{I_A} = \Z, \kappa(\Y \cup \Z) | i_A, i_B, i_C)$, we denote the probability that is obtained from $P(o_A, o_B, o_C, \kappa(\Gamma)|i_A,i_B,i_C)$ by marginalising over all $O_X \notin \Y \cup \Z$, and by summing over all $\kappa(\Gamma)$ that satisfy the specified constraints---that is, all $\kappa(\Gamma)$ for which the causal past $\P_{I_A}$ of $I_A$ is $\Y$, the causal elsewhere $\E_{I_A}$ of $I_A$ is $\Z$, and the causal order on the subset $\Y \cup \Z$ is $\kappa(\Y \cup \Z)$. 
The free choice assumption is that this probability is independent of the value of $I_A$.
The analogous conditions must hold with respect to $I_B$ and $I_C$.   

\emph{2) Closed laboratories.} The second constraint is the \emph{closed laboratory} assumption, which says, intuitively speaking, that causal influence from $I_A$ to any other variable except $O_A$ has to pass through $A_O$; that, similarly, any causal influence to $O_A$ from any other variable except $I_A$ has to pass through $A_I$; and that $A_I$ is in the causal past of $A_O$ (and analogously for $B$ and $C$).
Note that, in the original derivation of causal inequalities~\cite{oreshkov12}, it was assumed that $X_I \prec X_O$ always holds. 
Here, we weaken this assumption by requiring that this constraint only holds for at least one particular value of the corresponding setting variable $I_X$.
The reason is that this weakened form of the assumption (unlike the stronger assumption of $X_I \prec X_O$ regardless of the value of $I_X$) is directly motivated by the observable causal relations in our situation with time-delocalised variables (see the discussion in the main text).

This closed laboratory assumption can be formalised as a constraint on the possible causal orders as follows.
\begin{align}
& P(o_A,o_B,o_C,\kappa(\Gamma)|i_A,i_B,i_C) > 0 \text{ only if $\kappa(\Gamma)$ satisfies the following properties for all $Y \in \Gamma$:} \notag \\
\label{eq:CL1}
& \qquad i)\ I_A \prec Y, \ \text{iff} \ Y = O_A \ \text{or} \ Y = A_O \ \text{or} \ A_O \prec Y.
\quad ii) \ Y \prec O_A, \ \text{iff} \ Y = I_A \ \text{or} \ Y = A_I \ \text{or} \ Y \ \prec A_I.
\end{align}
Furthermore, there exists at least one value $i_A^*$ of $I_A$ for which $A_I \prec A_O$ with certainty, that is
\begin{equation}
\label{eq:closedLab3}
    P(o_A,o_B,o_C,\kappa(\Gamma)|i_A^*,i_B,i_C) > 0 \quad \text{only if} \ \kappa(\Gamma) \ \text{satisfies} \ A_I \prec A_O.
\end{equation}
The analogous conditions must be satisfied for $B$ and $C$.\\

We show in Supplementary Note~\hyperref[app:causal_correlations]{5} that this causal order assumption---notably, even with the weakened form of the closed laboratory condition we introduced---implies that the correlations $P(o_A, o_B, o_C|i_A,i_B,i_C)$ that are established in the process must be \emph{causal}~\cite{oreshkov16,abbott16,branciard16}. 
Such correlations form a polytope, whose facets precisely define causal inequalities~\cite{oreshkov16,abbott16,branciard16}.  

(Note furthermore that we could similarly weaken the assumption that $O_A$ is always in the causal future of $A_I$. 
This would however change nothing about the argument, and the proof from Supplementary Note~\hyperref[app:causal_correlations]{5} would go through in the same way.)

Here, we presented the argument in the classical case for concreteness, but it can be readily extended to a quantum process, or even an abstract process~\cite{oreshkov16} possibly compatible with more general operational probabilistic theories (OPTs)~\cite{hardy09,chiribella10}, where there is no analogue of the classical variables $X_I$ and $X_O$. 
Indeed, in the general case all elements of the argument remain the same, except that the objects $X_I$ and $X_O$ over which the partial order is assumed would be general systems rather than classical variables ($I_X$ and $O_X$ will remain classical). 
Moreover, the argument applies analogously for any number of parties, so we have assumptions applicable to the most general case of a process. 


\newpage

\section*{Supplementary Note 1---Testing operations on time-delocalised subsystems}
\label{app:testing}

\subsection{General circuits on time-delocalised subsystems}
\label{sec:testing_general}

The structure of a circuit with respect to a particular choice of time-delocalised subsystems, as described for a generic circuit in Sec.~\hyperref[sec:td_general]{``Time-delocalised subsystems and operations''} of the main text, can be tested operationally~\cite{oreshkov18}.
Figuratively speaking, to achieve this, one ``disconnects'' the circuit fragment under consideration from its complement by ``cutting through'' its incoming and outgoing wires, and by ``pulling all incoming, respectively outgoing, wires to the same time''.
Formally, this means that one performs additional SWAP operations, which send some additional incoming ancillary systems to the incoming wires of the fragment, and its outgoing wires to some additional outgoing ancillary systems (see Fig.~\ref{fig:fragment_generic_intervention} for the example of a fragment considered in \jwchanges{Fig.~\ref{fig:circuit_generic} of the main text}).
This results in the operations implemented by the fragment and its complement effectively taking place on these additional ancillary systems.
By performing suitable time-local preparations (respectively, measurements) on the additional incoming (respectively, outgoing) ancillas, one can then perform tomography on these operations.
Through such a procedure, one can thus in particular probe the operations that happen on the time-delocalised subsystems arising from a given choice of tensor product structure on the incoming and outgoing wires of the fragments, and test operationally that the circuit has a particular (generally cyclic) form when described with respect to these subsystems.
In doing so, we make the assumption that the operations still act on these systems in the ``non-disconnected'' circuit, which is however completely reasonable---in fact, it is standard also for quantum circuits on time-local systems.
Namely, probing the operations that make up a given standard circuit requires intervening around each operation (e.g. with suitable preparations and measurements so as to make tomography of the operation). 
It is an assumption that the practical procedure we employ for doing this does not alter the original operation, that is, the procedure can be described by a modified circuit that contains the original operation acting on the original systems but now connected to the probing operations via these systems, instead of to the operations from the original circuit. 
What we do here is the same, except that the systems we consider are time-delocalised.

\begin{figure}[h]
    \centering
    \includegraphics[width=0.7\textwidth]{}
    \caption{\jwchanges{Disconnecting circuit fragments with SWAP operations.} The circuit \jwchanges{in Fig.~\hyperref[fig:circuit_generic]{2(a)}} can effectively be decomposed into the red fragment and its blue complement, by “cutting” and “pulling to the same time” the corresponding “wires”.
That is, one performs SWAP operations which send some incoming ancillas $\tau_A$, $\tau_F$ to the incoming systems $A$ and $F$ of the fragment, and which send the corresponding output systems of the complementary blue fragment to some outgoing ancillas $\tilde \tau_A$, $\tilde \tau_F$.
Similarly, one inserts SWAP operations which send the outgoing wires $D$, $H$, $I$ of the fragment to outgoing ancillas $\tilde \tau_D$, $\tilde \tau_H$, $\tilde \tau_I$, and some incoming ancillas $\tau_D$, $\tau_H$, $\tau_I$ to the corresponding input systems of the complementary blue fragment.
This allows to test the operations implemented by the circuit fragments operationally by preparing suitable states and performing suitable measurements on the ancillas.
}
    \label{fig:fragment_generic_intervention}
\end{figure}

In the realisations of processes we considered in this work, some of the systems need to be composed in the cyclic circuit in order for the structure from the process matrix framework to emerge (namely, the systems $Z$, $\bar Z$ in the bipartite case, and the systems $Y$, $\bar Y$, $Z$, $\bar Z$, $\bar Q_1$ and $\bar Q_2'$ in the tripartite case). 
This raises notably the question of whether one could test the structure of the cyclic circuit in a way that leaves these systems connected, so as to probe precisely the constituents that appear in the process matrix picture.
In the following, we will outline how this can be achieved for the cases considered in our paper.

\newpage

\subsection{Unitary extensions of bipartite processes on time-delocalised subsystems}

Applying the general argument from Supplementary Note~\hyperref[sec:testing_general]{1~A} to the case of unitary extensions of bipartite processes studied in \hyperref[app:bipartite_methods]{Methods}, one could test operationally that, for any local operations $U_A$ and $U_B$, and with respect to the systems $P_O$, $A_I$, $A_O$, $A_I'$, $A_O'$ $B_I'$, $B_O'$, $F_I$ and the time-delocalised subsystems $B_I$, $B_O$, $Z$ and $\bar Z$ we identified, the bipartite circuit of Fig.~\jwchanges{7} consists of the five operations $U_A$, $U_B$, $J_{\text{in}}^\dagger$, $J_{\text{out}}^\dagger$ and $\id^{Z \to \bar Z}$, which are connected in a cyclic manner as shown on the left-hand side of Fig.~\hyperref[fig:bipartite_proof]{8(c)}.

Once this statement has been established as an operationally verifiable fact, there exists, in particular, an operationally verifiable way to apply SWAP operations only to the incoming and outgoing time-delocalised wires
$B_I$ and $B_O$ of $U_B$, so as to ``cut through'' and ``pull to the future, respectively past'' only these wires, while the time-delocalised wires $Z$ and $\bar Z$ remain connected in the cyclic circuit.
Namely, one can realise a ``modified'' temporal circuit with operations $\omega_1(U_B^{(s)})$ and $\omega_2(U_B^{(s)})$, where $U_B^{(s)}$ is related to the original $U_B$ by $U_B^{(s)}: \HS^{B_I B_I' \tau_{B_I} \tau_{B_O}} \to \HS^{B_O B_O' \tilde \tau_{B_I} \tilde \tau_{B_O}}$ 
with $U_B^{(s)} = (\id^{B_O \to \tilde \tau_{B_O}} \otimes \id^{\tau_{B_O} \to B_O} \otimes \id^{\tilde \tau_{B_I} B_O'}) \cdot (U_B \otimes \id^{\tilde \tau_{B_I} \tau_{B_O}}) \cdot (\id^{B_I \to \tilde \tau_{B_I}} \otimes \id^{\tau_{B_I} \to B_I} \otimes \id^{B_I' \tau_{B_O}})$.
(For the purpose of constructing the corresponding temporal circuit according to Supplementary Note~\hyperref[sec:bipartite_circuit]{2~A}, the ancillas $\tau_{B_I}$, $\tau_{B_O}$ ($\tilde \tau_{B_I}$, $\tilde \tau_{B_O}$) can be incorporated into the incoming (outgoing) ancillary systems of $U_B$).
By construction, in the time-delocalised description as on the left-hand side of Fig.~\hyperref[fig:bipartite_proof]{8(c)}, this modification translates to the SWAP operations which achieve the desired ``disconnecting'' of $U_B$ (see Fig.~\ref{fig:test_ua} for the description of the red fragment of the modified circuit in terms of the time-delocalised subsystems $B_I$, $Z$, $B_O$, $\bar Z$).
Disconnecting only $U_B$ from the cyclic circuit (and disconnecting $U_A$ through standard, time-local SWAP operations on $A_I$ and $A_O$) then in turn allows to operationally test the ``coarse-grained'' structure of the cyclic circuit on the right-hand side of Fig.~\hyperref[fig:bipartite_proof]{8(c)}, where $J_{\text{in}}^\dagger$, $J_{\text{out}}^\dagger$ and $\id^{Z \to \bar Z}$ are composed over the systems $Z$, $\bar Z$ so as to form the operation $U$ that defines the process.

\begin{figure}[h]
         \centering
         \includegraphics[width=0.8\textwidth]{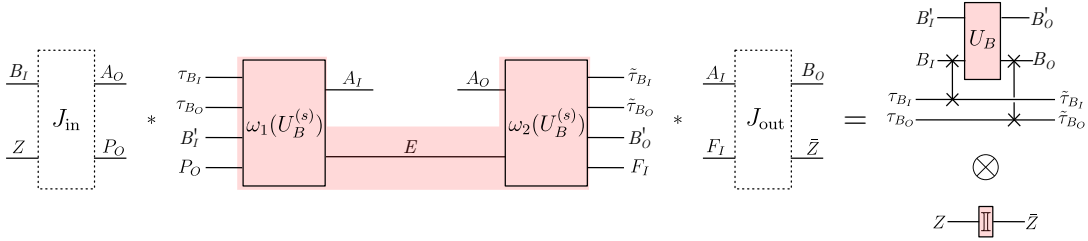}
    \caption{``Disconnecting'' the time-delocalised wires $B_I$ and $B_O$. 
    }
        \label{fig:test_ua}
\end{figure}

\subsection{Unitary extensions of tripartite processes on time-delocalised subsystems}

The tripartite case is a bit more involved, but by successively verifying a sequence of several statements that build on one another, one can similarly test that the circuit of Fig.~\ref{fig:tripartite_implementation_main} in the main text has the cyclic form on the right-hand side of Fig.~\hyperref[fig:tripartite_proof]{4(c)} when described with respect to the systems $P_O$, $A_{IO}^{(\prime)}$, $B_{IO}^{(\prime)}$, $C_{IO}^{(\prime)}$, $F_I$.
By disconnecting the red and blue fragments and applying the general procedure from Supplementary Note~\hyperref[sec:testing_general]{1~A}, one can test that, for any $U_A$, $U_B$, $U_C$, and with respect to the systems $P_O$, $A_{IO}^{(\prime)}$, $B_{IO}^{(\prime)}$, $C_{IO}^{(\prime)}$, $F_I$,  $Y$, $\bar Y$, $Z$, $\bar Z$, $\bar Q_1$, $\bar Q_2'$, we obtain a cyclic circuit consisting of the operations $U_A$, $U_B$, $R(U_C)$ and $R'$ \jwchanges{(see the left-hand side of Fig.~\hyperref[fig:tripartite_proof]{4(c)})}.
In this cyclic circuit, there then exists an operationally verifiable way to apply time-delocalised SWAP operations that ``disconnect'' precisely the time-delocalised wires $C_I$ and $Z$, while the other wires remain connected. 
Namely, in order to achieve this, one modifies the tripartite temporal circuit by inserting some additional operations,  shown in green in Fig.~\hyperref[fig:test_uc]{11(a)}, in the red fragment.
\begin{figure}[h]
     \flushleft{(a)}\\
        \centering
         \includegraphics[width=0.85\textwidth]{test_UC_1_1.pdf}
     \vspace{3mm}\\
      \flushleft{(b)}\\
         \centering
         \includegraphics[width=0.85\textwidth]{test_UC_2_2.pdf}
    \caption{\jwchanges{Disconnecting the time-delocalised systems $C_I$, $Z$, $C_O$ and $\bar Z$.}
    (a) Modifying the red fragment of the tripartite temporal circuit by inserting the additional operations shown in green (with $u_1 \coloneqq (\id^{C_I \to \tau_{C_I}} \otimes \id^{Z \to \tau_{Z}}) \cdot U_1 \cdot (\id^{\tau_{P_O} \to P_O} \otimes \id^{\tau_{A_O} \to A_O} \otimes \id^{\tau_{B_O} \to B_O})$ and $\tilde u_1 \coloneqq (\id^{C_I \to \tilde \tau_{C_I}} \otimes \id^{Z \to \tilde \tau_{Z}}) \cdot U_1 \cdot (\id^{\tilde \tau_{P_O} \to P_O} \otimes \id^{\tilde \tau_{A_O} \to A_O} \otimes \id^{\tilde \tau_{B_O} \to B_O})$, with $U_1$ from Eq.~\eqref{eq:decomp_tripartite})\jwchanges{,} corresponds precisely to performing a SWAP operation that ``cuts'' the time-delocalised wires $C_I$ and $Z$. (Note that, in order to achieve this, the second green control-SWAP operation needs to be inserted in between the identity channel relating $T_2'$ and $\bar T_2'$ and the identity channel relating $Q_2'$ and $\bar Q_2'$, so these identity channels need to be ``shifted'' against each other.
    But the exact placement of these identity channels in the temporal circuit is irrelevant, as long as they are between the second controlled application of $U_A$ and $U_B$ and the circuit operation $\omega_3(U_C)$, so that one can always choose them to be placed in that way.)  \\
    (b) The time-delocalised wires $C_O$ and $\bar Z$ can be ``cut'' in a similar way.
        }
        \label{fig:test_uc}
\end{figure}
In the description of the circuit with respect to $P_O$, $A_{IO}^{(\prime)}$, $B_{IO}^{(\prime)}$, $C_{IO}^{(\prime)}$, $F_I$,  $Y$, $\bar Y$, $Z$, $\bar Z$, $\bar Q_1$, $\bar Q_2'$, these additional operations correspond precisely to a SWAP operation that sends the incoming ancilla $\tau_{C_I}$ into the operation $R(U_C)$ and the time-delocalised subsystem $C_I$ to an outgoing ancilla $\tilde \tau_{C_I}$, and similarly for the system $Z$, as shown on the right-hand side of Fig.~\hyperref[fig:test_uc]{11(a)}.
In an analogous way, one can modify the temporal circuit so as ``disconnect'' the wires $C_O$ and $\bar Z$, which is shown in Fig.~\hyperref[fig:test_uc]{11(b)}.
That the described modification corresponds to such SWAP operations is again an operationally verifiable statement, which can be tested by disconnecting the red and blue fragment in the modified circuit and applying the general argument from Supplementary Note~\hyperref[sec:testing_general]{1~A}, in the same way as for the tripartite circuit without the additional operations.

Once this fact has been established, by disconnecting the systems $C_I$, $C_O$, $Z$ and $\bar Z$ in this way, while leaving $Y$, $\bar Y$, $\bar Q_1$, and $\bar Q_2'$ composed (and disconnecting $A_I$, $A_O$, $B_I$ and $B_O$, which can be done in the same way as for $B_I$ and $B_O$ in the bipartite case) one can then test the structure of the cyclic circuit with respect to the systems $P_O$, $A_I$, $A_O$, $B_I$, $B_O$, $C_I$, $C_O$, $F_I$, $Z$ and $\bar Z$. 
(With respect to these systems, the circuit consists of operations $U_1: \HS^{P_O A_O B_O} \to \HS^{C_I Z}$, $U_2: \HS^{C_O \bar Z} \to \HS^{A_I B_I F_I}$, $\id^{Z \to \bar Z}$, $U_A$, $U_B$ and $U_C$ (cf. Eq.~\eqref{eq:tripartite_final}).
Finally, once the structure of the circuit with respect to these systems has been established as an operationally verifiable fact (and, in particular, it has been established that $U_C$ acts on the systems $C_I$ and $C_O$), one can apply the analogous argument from the bipartite case to $U_C$ in order to only ``disconnect'' the wires $C_I$ and $C_O$ and leave $Z$ and $\bar Z$ connected.
This in turn allows to ``test'' also the structure of the circuit as \jwchanges{on the right-hand side of} Fig.~\hyperref[fig:tripartite_proof]{4(c)}, where $U_1$, $U_2$ and $\id^{Z \to \bar Z}$ are connected so as to form the operation $U$.

\newpage
\section*{Supplementary Note 2---Unitary extensions of bipartite processes on time-delocalised subsystems}
\label{app:bipartite_proof}
\setcounter{subsection}{0}

In this \jwchanges{Supplementary Note}, we will show all technical proofs and calculations pertaining to \hyperref[app:bipartite_methods]{``Unitary extensions of bipartite processes on time-delocalised subsystems''} in Methods.

\subsection{Construction of the temporal circuit}
\label{sec:bipartite_circuit}

First, we show how to construct the temporal circuit shown in Fig.~\ref{fig:bipartite_implementation_main}.
For any local operation $U_B$ performed by Bob, its composition with the process vector 
\begin{equation}
\label{eq:def_red_a}
\dket{U} * \dket{U_B} \eqqcolon \dket{U_\G (\cdot, U_B)} \qquad \in \HS^{P_O A_{IO} B_{IO}' F_I}
\end{equation}
is, mathematically, the process vector of a unitarily extended one-party process for the remaining party Alice, where the global past party has an output space $\HS^{B_I' P_O}$, and where the global future party has an input space $\HS^{B_O' F_I}$. (This follows from the fact that, for any unitary operation $U_A$ performed by Alice, $\dket{U_\G(\cdot, U_B)} * \dket{U_A} = \dket{U_\G(U_A,U_B)}$ is a unitary operation from $\HS^{A_I' B_I' P_O}$ to $\HS^{A_O' B_O' F_I}$).
It is well known~\cite{chiribella08a} that the most general process of this kind is a \textit{quantum comb}~\cite{chiribella09}, and thus has a realisation as a fixed-order quantum circuit. 
In general, this circuit realisation can be taken to consist of two isometric operations before and after Alice's operation, which are connected by a circuit ancilla or ``quantum memory'' system, and whose link product is the minimal Stinespring dilation of the process matrix, with the dilating system being traced out at the end of the circuit~\cite{chiribella08a,gutoski06,chiribella09}. 
Since the process here is unitary, the minimal Stinespring dilation is trivial, and the corresponding circuit isometries must be unitaries.
In other words, $\dket{U_\G(\cdot,U_B)}$ can be decomposed as 
\begin{equation}
\label{eq:decomp_red_a}
\dket{U_\G(\cdot,U_B)} = \dket{\omega_1(U_B)} * \dket{\omega_2(U_B)},
\end{equation}
with two unitary operations $\omega_1(U_B): \HS^{B_I'P_O} \to \HS^{A_I E}$ and $\omega_2(U_B): \HS^{A_O E} \to \HS^{B_O' F_I}$, which depend on $U_B$, and with an ancillary system $E$ (whose dimension also depends on $U_B$ in general). 

The composition $\dket{\omega_1(U_B)} * \dket{U_A} * \dket{\omega_2(U_B)}$ then describes a temporally ordered quantum circuit which by construction implements the global output operation $U_\G(U_A, U_B)$ of the process, and which consists of the three subsequent operations $\omega_1(U_B)$, $U_A: \HS^{A_I A_I'} \to \HS^{A_O A_O'}$ and $\omega_2(U_B)$, as shown in Fig.~\ref{fig:bipartite_implementation_main}.

\subsection{Derivation of the isomorphisms $J_{\text{in}}$ and $J_{\text{out}}$}
\label{app:iso_bipartite}

To derive the isomorphisms $J_{\text{in}}$ and $J_{\text{out}}$, we first derive a mathematical relation between the systems in the process. 
Namely, we note that the unitary $U$ isomorphically maps some subsystem of $A_O P_O$ to Bob's incoming system $B_I$, and Bob's outgoing system $B_O$ to a subsystem of $A_I F_I$. 
That is, formally, $U$ can be decomposed as $U = (\id^{B_I} \otimes U_2)\cdot (\id^{B_O} \otimes \id^{B_I} \otimes \id^{Z \to \bar Z}) \cdot (\id^{B_O} \otimes U_1)$, or, in the Choi representation,
\begin{equation}
\label{eq:decomp_bipartite}
    \dket{U} = \dket{U_1} * \dket{\id}^{Z \bar Z} * \dket{U_2}
\end{equation}
(see Fig.~\ref{fig:bipartite_decomposition}), where $U_1: \HS^{A_O P_O} \to \HS^{B_I Z}$ and $U_2: \HS^{B_O \bar Z} \to \HS^{A_I F_I}$ are unitary (and the identity channel from $Z$ to $\bar Z$ between the two isomorphic complementary subsystems $Z$ and $\bar Z$ is introduced for later convenience).
\begin{figure}[h]
    \centering
    \includegraphics[width=0.6\textwidth]{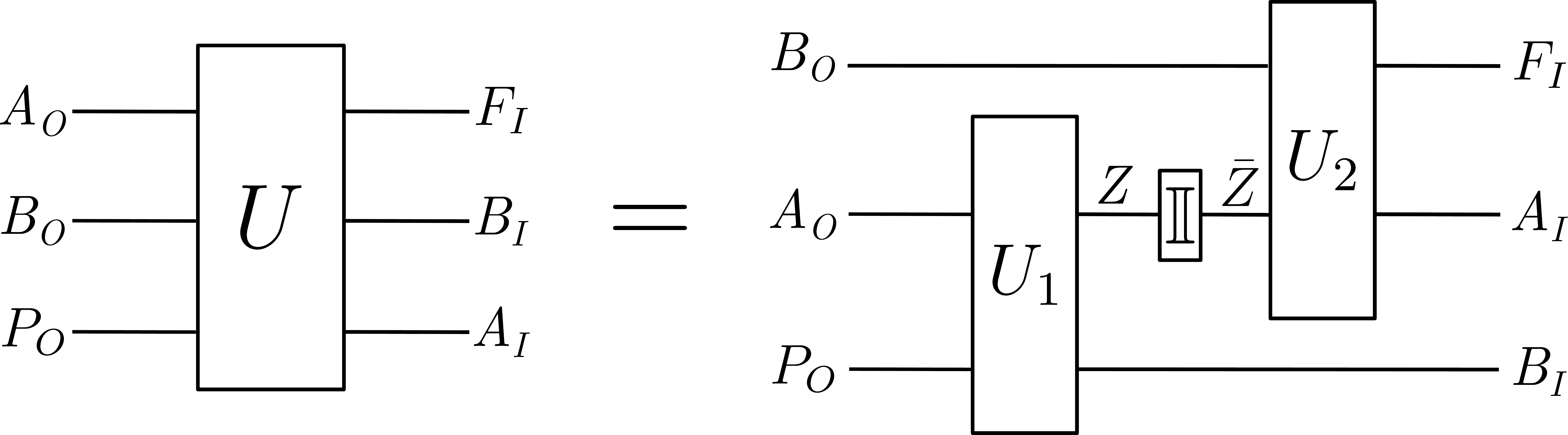}
    \caption{\jwchanges{Graphical illustration of Eq.~\eqref{eq:decomp_bipartite}.} The process vector describing a unitary extension of a bipartite process can be decomposed as $\dket{U} = \dket{U_1} * \dket{\id}^{Z \bar Z} * \dket{U_2}$, with unitaries $U_1: \HS^{A_O P_O} \to \HS^{B_I Z}$ and $U_2: \HS^{B_O \bar Z} \to \HS^{A_I F_I}$.
    The circuit on the right-hand side shows that the unitary $U$ maps some subsystem of $A_O P_O$ to $B_I$, and $B_O$ to some subsystem of $A_I F_I$. }
    \label{fig:bipartite_decomposition}
\end{figure}

Such a decomposition with unitary $U_1$ and $U_2$ exists for any unitarily extended bipartite process. 
This can be shown as follows. 
Consider the situation where Alice performs the specific SWAP unitary operation $S_A \coloneqq \id^{A_I' \to A_O} \otimes \id^{A_I \to A_O'}$, i.e., she performs identity channels from an ancillary system $A_I'$ to $A_O$ and from $A_I$ to an ancillary system $A_O'$ which effectively ``pull her output wire to the past'' and ``pull her input wire to the future''. 
When the process vector is composed with that particular local operation, the result $\dket{U_\G(S_A,\cdot)} \coloneqq \dket{U} * \dket{S_A} \in \HS^{P_O A_{IO}' B_{IO} F_I}$ is now the process vector of a unitarily extended one-party process for the remaining party Bob, where the global past party has an output space $\HS^{A_I' P_O}$, and where the global future party has an input space $\HS^{A_O' F_I}$, and which can again be realised as a fixed-order circuit.
That is, $\dket{U_\G(S_A,\cdot)}$ can be decomposed as $\dket{U_\G(S_A,\cdot)} = \dket{\tilde U_1} * \dket{\id}^{Z \bar Z} * \dket{\tilde U_2}$, with two circuit operations $\tilde U_1: \HS^{A_I'P_O} \to \HS^{B_I Z}$ and $\tilde U_2: \HS^{B_O \bar Z} \to \HS^{A_O'F_I}$ (which can be taken to be unitaries by the same argument as for $\omega_1(U_B)$ and $\omega_2(U_B)$ in Supplementary Note~\hyperref[sec:bipartite_circuit]{2~A} above, i.e., because the circuit realisation can be taken to achieve the minimal Stinespring dilation of the quantum comb). 
Since $\dket{U}$ can be recovered from $\dket{U_\G(S_A,\cdot)}$ through $\dket{U} = \dket{U_\G(S_A,\cdot)} * \dket{\id}^{A_O A_I'} * \dket{\id}^{A_O' A_I}$, one obtains the decomposition~\eqref{eq:decomp_bipartite} for $\dket{U}$, with $\dket{U_1} \coloneqq \dket{\tilde U_1} * \dket{\id}^{A_O A_I'}$ and $\dket{U_2} \coloneqq \dket{\tilde U_2} * \dket{\id}^{A_O' A_I}$. 

Up to this point, Eq.~\eqref{eq:decomp_bipartite} is just an abstract, mathematical relation between certain systems in the process.  
In the circuit of Fig.~\ref{fig:bipartite_implementation_main}, where $A_O$, $P_O$ as well as $A_I$, $F_I$ are time-local wires, this decomposition can now be taken to define an alternative description of the circuit in terms of  time-delocalised subsystems. 
Namely, we define the isomorphisms $J_\text{in}: \HS^{B_I Z} \to \HS^{A_O P_O}$ and $J_\text{out}: \HS^{A_I F_I} \to \HS^{B_O \bar Z}$ simply to be the inverses of $U_1$ and $U_2$, respectively, that is, $J_\text{in} \coloneqq U_1^\dagger$ and $J_\text{out} \coloneqq U_2^\dagger$.

\subsection{Changing to the description of the circuit in terms of time-delocalised subsystems}
\label{sec:bipartite_subsystems}

The gates of the circuit in Fig.~\ref{fig:bipartite_implementation_main} compose to the overall transformation 
\begin{equation}
\label{eq:global_trafo_bipartite}
    \dket{\omega_1(U_B)}^{P_O B_I' A_I E} * \dket{U_A}^{A_{IO} A_{IO}'} * \dket{\omega_2(U_B)}^{A_O E B_O' F_I} 
\end{equation}
from its initial systems $P_O A_I'B_I'$ to its final systems $F_I A_O' B_O'$. 
To change to the alternative description of the circuit in terms of time-delocalised subsystems, we start by decomposing the circuit into the blue and red circuit fragments shown in Fig.~\jwchanges{8}.
That is, formally, by using the properties of the link product (i.e., its commutativity, associativity and the fact that it reduces to a tensor product for non-overlapping Hilbert spaces), we rewrite~\eqref{eq:global_trafo_bipartite} as
\begin{equation}
\label{eq:fragments_1}
    \big[\dket{\omega_1(U_B)}^{P_O B_I' A_I E} * \dket{\omega_2(U_B)}^{A_O E B_O' F_I}\big] * \dket{U_A}^{A_{IO} A_{IO}'}.
\end{equation}
The red fragment, which corresponds to the term in the first pair of square brackets, implements a unitary operation from $\HS^{P_O B_I' A_O}$ to $\HS^{F_I B_O' A_I}$.  
This unitary operation satisfies
\begin{align}
\label{eq:first_fragment_rewritten}
    &\dket{\omega_1(U_B)}^{P_O B_I' A_I E} * \dket{\omega_2(U_B)}^{A_O E B_O' F_I} =  \dket{U_1}^{A_O P_O B_I Z} * \dket{\id}^{Z \bar Z} * \dket{U_2}^{B_O \bar Z A_I F_I} * \dket{U_B}^{B_{IO} B_{IO}'}  \notag \\
    &= \dket{J_{\text{in}}^\dagger}^{A_O P_O B_I Z} * (\dket{U_B} \otimes \dket{\id}^{Z \bar Z}) * \dket{J_{\text{out}}^\dagger}^{B_O \bar Z A_I F_I},
\end{align}
where the first equality follows from combining Eqs.~\eqref{eq:def_red_a}, \eqref{eq:decomp_red_a} and~\eqref{eq:decomp_bipartite} and the second equality follows from the definition of the isomorphisms  $J_\text{in}$ and $J_\text{out}$ as the inverses of $U_1$ and $U_2$, respectively.

Therefore, when we apply $J_{\text{in}}$ and $J_{\text{out}}$ on the incoming and outgoing systems of the fragment, we obtain the tensor product structure
\begin{align}
    \dket{J_{\text{in}}}^{A_O P_O B_I Z} * \big[\dket{\omega_1(U_B)}^{P_O B_I' A_I E} * \dket{\omega_2(U_B)}^{A_O E B_O' F_I}\big] * \dket{J_{\text{out}}}^{B_O \bar Z A_I F_I} = \dket{U_B}^{B_{IO} B_{IO}'} \otimes \dket{\id}^{Z \bar Z}.
\end{align}
That is, the circuit fragment indeed consists of precisely one instance of $U_B$, which is applied locally on the time-delocalised subsystems $B_I$ and $B_O$, in parallel to an identity channel from $Z$ to $\bar Z$, as is shown in Fig.~{\jwchanges{8}}(a).

In order to describe the entire circuit in terms of the newly chosen subsystem description, we need to rewrite also the blue circuit fragment, which simply consists of $U_A$, in terms of this new subsystem description.
To do this, we compose its incoming wire $A_I$ with the isomorphism $J_{\text{out}}^\dagger = U_2$, and its outgoing wire $A_O$ with $J_{\text{in}}^\dagger = U_1$ (see Fig.~{\jwchanges{8}}(b)). 

Recomposing the two fragments in the new subsystem decomposition over the systems $Z$, $\bar Z$ then yields
\begin{align}
    &[\dket{U_B}^{B_{IO} B_{IO}'} \otimes \dket{\id}^{Z \bar Z}]   *[\dket{U_2}^{B_O \bar Z A_I F_I} * \dket{U_A}^{A_{IO} A_{IO}'}* \dket{U_1}^{A_O P_O B_I Z}]  \notag \\
    = &[\dket{U_1}^{A_O P_O B_I Z} \otimes \dket{U_2}^{B_O \bar Z A_I F_I} * \dket{\id}^{Z \bar Z}] * \dket{U_A}^{A_{IO} A_{IO}'} * \dket{U_B}^{B_{IO} B_{IO}'}\notag \\
    = &\dket{U}^{P_O A_{IO} B_{IO} F_I} * \dket{U_A}^{A_{IO} A_{IO}'} \otimes \dket{U_B}^{B_{IO} B_{IO}'},
\end{align}
as shown graphically in Fig.~\hyperref[fig:bipartite_proof]{8(c)}.


\section*{Supplementary Note 3---Unitary extensions of tripartite processes on time-delocalised subsystems}
\label{app:tripartite_proof}
\setcounter{subsection}{0}

In this Supplementary Note, we will show all technical proofs and calculations pertaining to \hyperref[sec:td_tripartite]{``Unitary extensions of tripartite processes on time-delocalised subsystems''} of the main text. 

\subsection{Construction of the temporal circuit}
\label{app:tripartite_circuit}

The bipartite proof relies crucially on the fact that the unitarily extended one-party process that one obtains by fixing Bob's operation has a particular ``standard form''---namely, a circuit in which Alice's operation acts at a fixed time.
This provides us with some fixed, time-local(ised) physical systems relative to which we can define the time-delocalised subsystems on which Bob's operation acts based on the mapping effected by the unitary that defines the process.
For unitary extensions of bipartite processes, it was shown in Refs.~\cite{barrett20,yokojima20} that a similar standard form exists.
Namely, all unitary extensions of bipartite processes are \emph{variations of the quantum switch}.
This result forms the basis for the tripartite construction we derive in this paper.

In Refs.~\cite{barrett20,yokojima20}, the following characterisation was proven. For any process vector $\dket{U} \in \HS^{P_O A_{IO} B_{IO} F_I}$ of a unitary extension of a bipartite process, the output, respectively input Hilbert spaces of the global past and future parties can be decomposed into a direct sum $\HS^{P_O} = \HS^{P_O^{\ell}} \oplus \HS^{P_O^r}$ and $\HS^{F_I} = \HS^{F_I^{\ell}} \oplus \HS^{F_I^r}$, such that $\dket{U}$ has the form 
\begin{align}
\label{eq:BLO_decomp}
    \dket{U} = &\dket{\nu_1^{A \prec B}} * \dket{\nu_2^{A \prec B}} *  \dket{\nu_3^{A \prec B}} +  \dket{\nu_1^{B \prec A}} *  \dket{\nu_2^{B \prec A}} *  \dket{\nu_3^{B \prec A}}  
\end{align}
with unitary operations $\nu_1^{A \prec B}: \HS^{P_O^{\ell}} \to \HS^{A_I} \otimes \HS^{\lambda_1}$, $\nu_2^{A \prec B}: \HS^{A_O} \otimes \HS^{\lambda_1} \to \HS^{B_I} \otimes \HS^{\lambda_2}$, $\nu_3^{A \prec B}: \HS^{B_O} \otimes \HS^{\lambda_2} \to \HS^{F_I^{\ell}}$, as well as unitary operations  $\nu_1^{B \prec A}: \HS^{P_O^r} \to \HS^{B_I} \otimes \HS^{\rho_1}$, $\nu_2^{B \prec A}: \HS^{B_O} \otimes \HS^{\rho_1} \to \HS^{A_I} \otimes \HS^{\rho_2}$, $\nu_3^{B \prec A}: \HS^{A_O} \otimes \HS^{\rho_2} \to \HS^{F_I^{r}}$ (and conversely, any vector of the form as in Eq.~\eqref{eq:BLO_decomp} is the process vector of a valid unitarily extended bipartite process). In other words, any such $\dket{U}$ can be decomposed into a sum of two process vectors, the first of which describes a fixed-order circuit (i.e., a quantum comb) with the ``global past'' output space $\HS^{P_O^{\ell}}$ and the ``global future'' input space $\HS^{F_I^{\ell}}$, which consists of unitary circuit operations and in which $U_A$ is applied before $U_B$.
Similarly, the second summand in the decomposition~\eqref{eq:BLO_decomp} corresponds to a quantum comb with the ``global past'' output space $\HS^{P_O^r}$ and the ``global future'' input space $\HS^{F_I^r}$, in which $U_B$ is applied before $U_A$.

First, we address a technicality regarding the dimensions of the systems in such a decomposition.
Namely, the two combs in Eq.~\eqref{eq:BLO_decomp} generally involve ancillas of different dimensions $d_{\lambda_1} = d_{\lambda_2}$ and $d_{\rho_1} = d_{\rho_2}$ (or equivalently, global past and future spaces with different dimensions $d_{P_O^{\ell}} = d_{F_I^{\ell}}$ and $d_{P_O^r} = d_{F_I^r}$).
For the construction of an alternative temporal circuit for unitarily extended bipartite circuits that we will consider below (Fig.~\ref{fig:bipartite_symm}), it is however convenient to consider a process in which the dimensions in these two combs are the same.
We therefore show that any unitarily extended bipartite process as in Eq.~\eqref{eq:BLO_decomp} can be recovered as part of a suitable ``enlarged'' process for which this condition on the dimension is indeed satisfied.

To define this enlarged process, we consider arbitrary ``complementary'' unitary operations $\nu_{1\text{,comp.}}^{A \prec B}: \HS^{P_O^{r}} \to \HS^{A_I} \otimes \HS^{\rho_1}$, $\nu_{2\text{,comp.}}^{A \prec B}: \HS^{A_O} \otimes \HS^{\rho_1} \to \HS^{B_I} \otimes \HS^{\rho_2}$, $\nu_{3\text{,comp.}}^{A \prec B}: \HS^{B_O} \otimes \HS^{\rho_2} \to \HS^{F_I^{r}}$, as well as  $\nu_{1\text{,comp.}}^{B \prec A}: \HS^{P_O^\ell} \to \HS^{B_I} \otimes \HS^{\lambda_1}$, $\nu_{2\text{,comp.}}^{B \prec A}: \HS^{B_O} \otimes \HS^{\lambda_1} \to \HS^{A_I} \otimes \HS^{\lambda_2}$ and $\nu_{3\text{,comp.}}^{B \prec A}: \HS^{A_O} \otimes \HS^{\lambda_2} \to \HS^{F_I^{\ell}}$. 
We then define $\HS^{E_1} \coloneqq \HS^{\lambda_1} \oplus \HS^{\rho_1}$ and $\HS^{E_2} \coloneqq \HS^{\lambda_2} \oplus \HS^{\rho_2}$, and introduce two additional two-dimensional Hilbert spaces $\HS^p$ (with computational basis $\{\ket{0}^{p},\ket{1}^{p}\}$) and $\HS^f$ (with computational basis $\{\ket{0}^{f},\ket{1}^{f}\}$).
From that, we implicitly consider the various $\nu$ unitaries above to act on the extended ancillary spaces $\HS^{E_i}$ (by adding null contributions on the originally untouched subspaces $\HS^{\rho_i}$ or $\HS^{\lambda_i}$) and define the ``enlarged'' operations (denoted here with tildes) $\tilde \nu_{1}^{A \prec B}: \HS^{\tilde P_O^\ell} \to \HS^{A_I} \otimes \HS^{E_1}$, with $\HS^{\tilde P_O^\ell} \coloneqq \HS^{P_O^\ell} \otimes \Span\{ \ket{0}^p\} \oplus \HS^{P_O^r} \otimes \Span\{ \ket{1}^p\}$, as
\begin{equation}
\label{eq:nu1_elg_AB}
    \tilde \nu_1^{A \prec B} \coloneqq \nu_1^{A \prec B} \otimes \bra{0}^p + \nu_{1\text{,comp.}}^{A \prec B} \otimes \bra{1}^p,
\end{equation} 
$\tilde \nu_{2}^{A \prec B}: \HS^{A_O} \otimes \HS^{E_1} \to \HS^{B_I} \otimes \HS^{E_2}$ as
\begin{equation}
\label{eq:nu2_elg_AB}
\tilde \nu_{2}^{A \prec B} \coloneqq \nu_2^{A \prec B} + \nu_{2\text{,comp.}}^{A \prec B}
\end{equation}
and $\tilde \nu_3^{A \prec B}: \HS^{B_O} \otimes \HS^{E_2} \to \HS^{\tilde F_I^\ell}$, with $\HS^{\tilde F_I^\ell} \coloneqq \HS^{F_I^\ell} \otimes \Span\{ \ket{0}^f\} \oplus \HS^{F_I^r} \otimes \Span\{ \ket{1}^f\}$, as
\begin{equation}
\label{eq:nu3_elg_AB}
\tilde \nu_3^{A \prec B} \coloneqq \nu_3^{A \prec B} \otimes \ket{0}^f + \nu_{3\text{,comp.}}^{A \prec B} \otimes \ket{1}^f.
\end{equation}
Similarly, we define $\tilde \nu_1^{B \prec A}: \HS^{\tilde P_O^r} \to \HS^{B_I} \otimes \HS^{E_1}$, with $\HS^{\tilde P_O^r} \coloneqq \HS^{P_O^r} \otimes \Span\{ \ket{0}^p\} \oplus \HS^{P_O^\ell} \otimes \Span\{ \ket{1}^p\}$,
as
\begin{equation}
\label{eq:nu1_elg_BA}
\tilde \nu_1^{B \prec A} \coloneqq \nu_1^{B \prec A} \otimes \bra{0}^p + \nu_{1\text{,comp.}}^{B \prec A} \otimes \bra{1}^p,
\end{equation}
$\tilde \nu_2^{B \prec A}: \HS^{B_O} \otimes \HS^{E_1} \to \HS^{A_I} \otimes \HS^{E_2}$ as
\begin{equation}
\label{eq:nu2_elg_BA}
\tilde \nu_2^{B \prec A} \coloneqq \nu_2^{B \prec A} + \nu_{2\text{,comp.}}^{B \prec A}
\end{equation}
and $\tilde \nu_3^{B \prec A}: \HS^{A_O} \otimes \HS^{E_2} \to \HS^{\tilde F_I^r}$, with $\HS^{\tilde F_I^r} \coloneqq \HS^{F_I^r} \otimes \Span\{ \ket{0}^f\} \oplus \HS^{F_I^\ell} \otimes \Span\{ \ket{1}^f\}$, as
\begin{equation}
\label{eq:nu3_elg_BA}
\tilde \nu_3^{B \prec A} \coloneqq \nu_3^{B \prec A} \otimes \ket{0}^f + \nu_{3\text{,comp.}}^{B \prec A} \otimes \ket{1}^f.
\end{equation}
These thus defined enlarged operations act unitarily on input and output Hilbert spaces that all have the same dimension.
They can be combined to an enlarged process vector
\begin{align}
\label{eq:UComp}
    \dket{\tilde U} \coloneqq & \dket{\tilde \nu_1^{A \prec B}} * \dket{\tilde \nu_2^{A \prec B}} *  \dket{\tilde \nu_3^{A \prec B}} + \dket{\tilde \nu_1^{B \prec A}} *  \dket{\tilde \nu_2^{B \prec A}} *  \dket{\tilde \nu_3^{B \prec A}},
\end{align}
with a ``global past'' party that has two output systems $P_O$ and $p$ (with the corresponding Hilbert space $\HS^{P_O p}$ consisting of the two orthogonal subspaces of the same dimensions $\HS^{\tilde P_O^\ell}$, corresponding to the ``$A \prec B$ branch'', and $\HS^{\tilde P_O^r}$, corresponding to the ``$B \prec A$ branch''), and a ``global future'' party that has two input systems $F_I$ and $f$ (with the corresponding Hilbert space $\HS^{F_I f}$ consisting of the two orthogonal subspaces of the same dimensions $\HS^{\tilde F_I^\ell}$, corresponding to the ``$A \prec B$ branch'', and $\HS^{\tilde F_I^r}$, corresponding to the ``$B \prec A$ branch'').
The enlarged process vector has the desired property that the circuit ancillas are the same in each coherent branch (namely, $E_1$ and $E_2$).
The original process is recovered when the state $\ket{0}^p$ is prepared in the output subsystem $p$ of the global past party, and the subsystem $f$ of the global future party is discarded, i.e. $\dket{U}\dbra{U} = \Tr_f [\ket{0}\bra{0}^p * \dket{\tilde U}\dbra{\tilde U}]$ (note that all link products between the original and ``complementary'' operations evaluate to zero, since they are composed over orthogonal subspaces of the circuit ancillas $E_1$ and $E_2$; when inputting the initial state $\ket{0}^p$ the subsystem $f$ is also guaranteed to end up in the state $\ket{0}^f$, so that discarding $f$ above is then in fact equivalent to projecting it onto $\ket{0}^f$, and does not introduce any decoherence).

With this in place, we now construct an alternative temporal circuit for unitary extensions of bipartite processes (different from the one in Fig.~\ref{fig:bipartite_implementation_main}), which will provide the basis for the tripartite generalisation. 
This temporal circuit is shown in Fig.~\ref{fig:bipartite_symm}.

\begin{figure*}[t]
    \centering
    \includegraphics[width=0.7\textwidth]{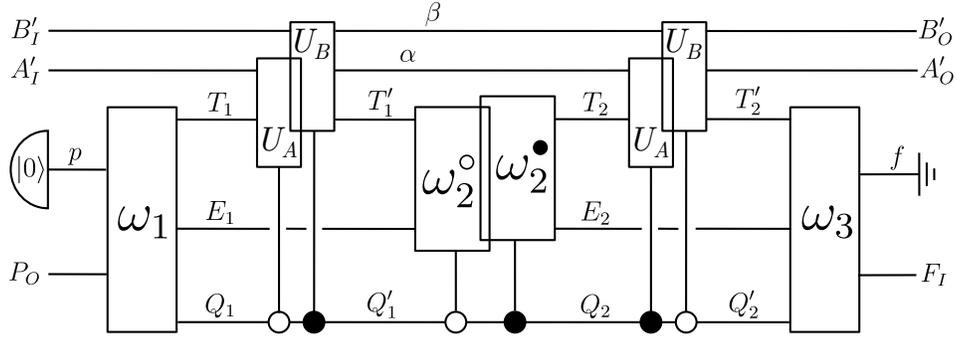}
    \caption{Unitary extensions of bipartite processes can be realised as ``variations of the quantum switch''.}
    \label{fig:bipartite_symm}
\end{figure*}

The circuit receives an input state in the output space $\HS^{P_O}$ of the ``global past'' party, as well as the fixed state $\ket{0}^{p}$ in the space $\HS^p$ (see above). 
A unitary operation $\omega_1: \HS^{P_O p} \to \HS^{T_1 E_1 Q_1}$ given by
\begin{align}
    \label{eq:omega_1}
    \omega_1 = \ & \big[(\id^{E_1} \otimes \id^{A_I \to T_1}) \cdot \tilde \nu_{1}^{A \prec B}\big] \otimes \ket{0}^{Q_1} + \big[(\id^{E_1} \otimes \id^{B_I \to T_1}) \cdot \tilde \nu_{1}^{B \prec A}\big] \otimes \ket{1}^{Q_1} 
\end{align}
(with $\tilde \nu_{1}^{A \prec B}$ from Eq.~\eqref{eq:nu1_elg_AB} and $\tilde \nu_1^{B \prec A}$ from Eq.~\eqref{eq:nu1_elg_BA}, each being implicitly and trivially extended to act on the whole space $\HS^{P_O p} = \HS^{\tilde P_O^\ell} \oplus \HS^{\tilde P_O^r}$ rather than on just one of the subspaces) then takes these systems to a $d$-dimensional ``target system'' $T_1$, a two-dimensional ``control system'' $Q_1$ (with computational basis $\{\ket{0}^{Q_1},\ket{1}^{Q_1}\}$) and a circuit ancilla $E_1$. 
After that, a time-local instance of Alice's operation $U_A$ or Bob's operation $U_B$ is applied to the target system, coherently conditioned on the control system being in the states $\ket{0}^{Q_1}$, or $\ket{1}^{Q_1}$, respectively; that is, formally, the circuit applies the unitary operation 
\begin{align}
& [(\id^{A_O \to T_1'} \otimes \id^{A_O' \to \alpha}) \cdot U_A \cdot (\id^{T_1 \to A_I} \otimes \id^{A_I'})] \otimes \id^{B_I' \to \beta} \otimes \ket{0}^{Q_1'}\bra{0}^{Q_1} \notag \\
& \ + [(\id^{B_O \to T_1'} \otimes \id^{B_O' \to \beta}) \cdot U_B \cdot (\id^{T_1 \to B_I} \otimes \id^{B_I'})] \otimes \id^{A_I' \to \alpha} \otimes \ket{1}^{Q_1'}\bra{1}^{Q_1},
\end{align}
where $\HS^\alpha$ is of the same dimension as Alice's local ancillary spaces $\HS^{A_I'}$ and $\HS^{A_O'}$, and $\HS^\beta$ is of the same dimension as Bob's local ancillary spaces $\HS^{B_I'}$ and $\HS^{B_O'}$. 
The circuit then proceeds with a coherently controlled application of a unitary operation $\omega_2^\circ: \HS^{T_1' E_1} \to \HS^{T_2 E_2}$ or $\omega_2^\bullet: \HS^{T_1' E_1} \to \HS^{T_2 E_2}$. 
That is, it applies the unitary operation
\begin{equation}
\omega_2^\circ \otimes \ket{0}^{Q_2}\bra{0}^{Q_1'} + \omega_2^\bullet \otimes \ket{1}^{Q_2}\bra{1}^{Q_1'}.
\end{equation}
with
\begin{equation}
 \label{eq:omega_2_circ}
    \omega_2^\circ = (\id^{E_2} \otimes \id^{B_I \to T_2}) \cdot \tilde \nu_2^{A \prec B} \cdot (\id^{T_1' \to A_O} \otimes \id^{E_1}), 
\end{equation}
as well as
\begin{equation}
 \label{eq:omega_2_bullet}
    \omega_2^\bullet = (\id^{E_2} \otimes \id^{A_I \to T_2}) \cdot \tilde \nu_2^{B \prec A} \cdot (\id^{T_1' \to B_O} \otimes \id^{E_1}). 
\end{equation}
(with $\tilde \nu_2^{A \prec B}$ from Eq.~\eqref{eq:nu2_elg_AB} and $\tilde \nu_2^{B \prec A}$ from Eq.~\eqref{eq:nu2_elg_BA}).
Then follows the second controlled ``time-local'' application of Alice's and Bob's operations to the target system, which is given by
\begin{align}
    & [(\id^{B_O \to T_2'} \otimes \id^{B_O'}) \cdot U_B \cdot (\id^{T_2 \to B_I} \otimes \id^{\beta \to B_I'})] \otimes \id^{\alpha \to A_O'} \otimes \ket{0}^{Q_2'}\bra{0}^{Q_2} \notag \\
    &\ + [(\id^{A_O \to T_2'} \otimes \id^{A_O'}) \cdot U_A  \cdot (\id^{T_2 \to A_I} \otimes \id^{\alpha \to A_I'})] \otimes \id^{\beta \to B_O'} \otimes \ket{1}^{Q_2'}\bra{1}^{Q_2}.
\end{align}
Finally, the circuit terminates with a unitary operation $\omega_3: \HS^{T_2' E_2 Q_2'} \to \HS^{F_I f}$ given by 
\begin{align}
 \label{eq:omega_3}
    \omega_3 = \ & \big[\tilde \nu_3^{A \prec B} \cdot (\id^{E_2} \otimes \id^{T_2' \to B_O})\big] \otimes \bra{0}^{Q_2'} + \big[\tilde \nu_3^{B \prec A} \cdot (\id^{E_2} \otimes \id^{T_2' \to A_O})\big] \otimes \bra{1}^{Q_2'}
\end{align}
(with $\tilde \nu_3^{A \prec B}$ from Eq.~\eqref{eq:nu3_elg_AB} and $\tilde \nu_3^{B \prec A}$ from Eq.~\eqref{eq:nu3_elg_BA}, again implicitly considering that they each output a state in the whole space $\HS^{F_I f} = \HS^{\tilde F_I^\ell} \oplus \HS^{\tilde F_I^r}$ rather than in just one of the subspaces) that takes the target, control and ancillary systems to the input Hilbert space of the ``global future'' party and the system $\HS^f$, which is then traced out (or equivalently projected onto $\ket{0}^f$, see above).

Such a circuit is an example of a \emph{quantum circuit with quantum control of causal order}~\cite{wechs18}, and similarly to the case of fixed-order quantum circuits, the operations $\omega_1$, $\omega_2^\circ$, $\omega_2^\bullet$ and $\omega_3$ can be constructed explicitly from the process matrix. 

Let us now consider a unitary extension of a tripartite process, described by a process vector $\dket{U} \in \HS^{P_OA_{IO}B_{IO}C_{IO}F_I}$. 
When the process vector is composed with a fixed unitary operation $\dket{U_C}^{C_{IO}C_{IO}'}$ for Charlie, the result $\dket{U_\G(\cdot,\cdot,U_C)} \coloneqq \dket{U} * \dket{U_C}^{C_{IO}C_{IO}'}$ is, mathematically, a process vector describing a unitary extension of a bipartite process (for the remaining parties Alice and Bob) with a ``global past'' party whose output space is $\HS^{P_OC_I'}$ and a ``global future'' party whose input space is $\HS^{F_I C_O'}$, and which can be implemented as a ``variation of the quantum switch'', as presented just above.
For the full tripartite process, one can therefore find a temporal description consisting of a ``variation of the quantum switch'', where the circuit operations $\omega_1$, $\omega_2^\circ$, $\omega_2^\bullet$ and $\omega_3$ and the dimensions of $E_1$ and $E_2$ depend on $U_C$. 
To finally get to the tripartite circuit of Fig.~\jwchanges{3}, for later convenience, we introduce identity channels on the systems $T_1$, $T_2$, $T_1'$, $T_2'$, $Q_1$ and $Q_2'$, which relate the respective system to a copy of it, denoted by the same label with a ``bar'' superscript.

\subsection{Derivation of the isomorphisms $J_{\text{in}}$ and $J_{\text{out}}$}
\label{app:iso_tripartite}

Note that the decomposition in Eq.~\eqref{eq:decomp_bipartite} generalises straightforwardly to an arbitrary numbers of parties. 
That is, for a unitary extension of some multipartite process, it is still true that the unitary maps isomorphically the output system of a given party to a subsystem of the input systems of all other parties and the global future system, and similarly, it maps a subsystem of the output of all but one party and the global past to the input of that remaining party. 
For instance, for a unitary extension of a tripartite process, one can formally decompose its process vector $\dket{U} \in \HS^{P_OA_{IO}B_{IO}C_{IO}F_I}$ as 
\begin{equation}
    \label{eq:decomp_tripartite}
    \dket{U} = \dket{U_1} * \dket{\id}^{Z \bar Z} * \dket{U_2},
\end{equation}
with unitaries $U_1: \HS^{P_O A_O B_O} \to \HS^{C_I Z}$ and $U_2: \HS^{C_O \bar Z} \to \HS^{A_I B_I F_I}$ (cf. Fig.~\ref{fig:tripartite_decomposition}).
This follows from precisely the same argument as in the bipartite case, by considering the reduced process with one operation for Charlie, obtained by transferring Alice's and Bob's input (output) wires to the global future (past) via SWAP operations and identity channels.
\begin{figure}[h]
    \centering
    \includegraphics[width=0.6\textwidth]{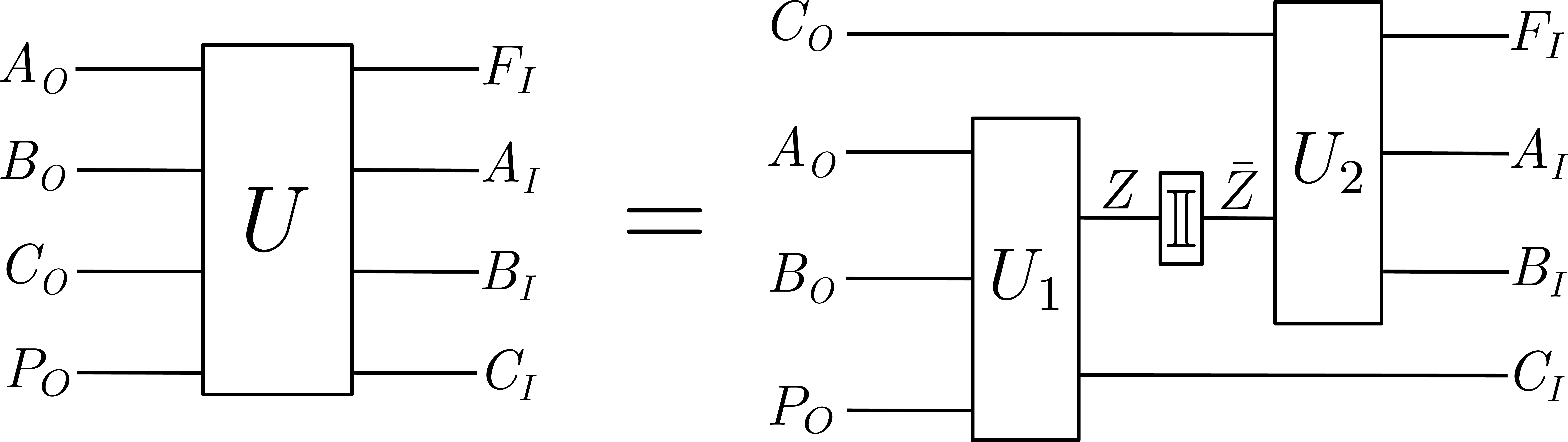}
    \caption{\jwchanges{Graphical representation of Eq.~\eqref{eq:decomp_tripartite}.} The process vector of a tripartite unitary process can be decomposed as $\dket{U} = \dket{U_1} * \dket{\id}^{Z \bar Z} * \dket{U_2}$, with unitaries $U_1: \HS^{A_O B_O P_O} \to \HS^{C_I Z}$ and $U_2: \HS^{C_O \bar Z} \to \HS^{A_I B_I F_I}$.
    The circuit on the right-hand side shows that the unitary $U$ maps some subsystem of $A_O B_O P_O$ to $C_I$, and $C_O$ to some subsystem of $A_I B_I F_I$.}
    \label{fig:tripartite_decomposition}
\end{figure}
(The fact that this abstract, mathematical decomposition generalises to the multipartite case was also noted in Ref.~\cite{guerin18b}.) 
From that decomposition, we then define the isomorphisms
$J_{\textup{in}}: \HS^{A_I B_I C_I Y Z} \to \HS^{T_1 T_2 \bar T_1' \bar T_2' Q_1 P_O}$ and $J_{\textup{out}}: \HS^{ T_1' T_2' \bar T_1 \bar T_2 Q_2' F_I} \to \HS^{A_O B_O C_O \bar Y \bar Z}$ to be
\begin{align}
J_{\textup{in}} = & \id^{A_I \to T_1} \otimes \id^{B_I \to T_2} \otimes [(\id^{A_O \to \bar T_1'} \otimes \id^{B_O \to \bar T_2'} \otimes \id^{P_O}) \, U_1^\dagger ] \otimes \ket{0}^{Q_1}\bra{0}^{Y} \notag \\
& + \id^{B_I \to T_1} \otimes \id^{A_I \to T_2} \otimes [(\id^{B_O \to \bar T_1'} \otimes \id^{A_O \to \bar T_2'} \otimes \id^{P_O}) \, U_1^\dagger ] \otimes \ket{1}^{Q_1}\bra{1}^{Y}
\label{eq:tdls_qcqc_input}
\end{align}
and
\begin{align}
J_{\textup{out}} = & \id^{T_1' \to A_O} \otimes \id^{T_2' \to B_O} \otimes [ U_2^\dagger \, (\id^{\bar T_1 \to A_I} \otimes \id^{\bar T_2 \to B_I} \otimes \id^{F_I}) ] \otimes \ket{0}^{\bar Y}\bra{0}^{Q_2'} \notag \\
& + \id^{T_1' \to B_O} \otimes \id^{T_2' \to A_O} \otimes [ U_2^\dagger \, (\id^{\bar T_1 \to B_I} \otimes \id^{\bar T_2 \to A_I} \otimes \id^{F_I}) ] \otimes \ket{1}^{\bar Y}\bra{1}^{Q_2'}. 
\label{eq:tdls_qcqc_output}
\end{align}

$J_{\textup{in}}$ can be represented graphically as in Fig.~\hyperref[fig:iso_tripartite_general]{15(a)}, and $J_{\textup{out}}$ as in 
Fig.~\hyperref[fig:iso_tripartite_general]{15(b)}.

\begin{figure}[h]
\centering
\begin{minipage}{.5\textwidth}
\flushleft{(a)}\\
  \centering
  \includegraphics[width=.48\linewidth]{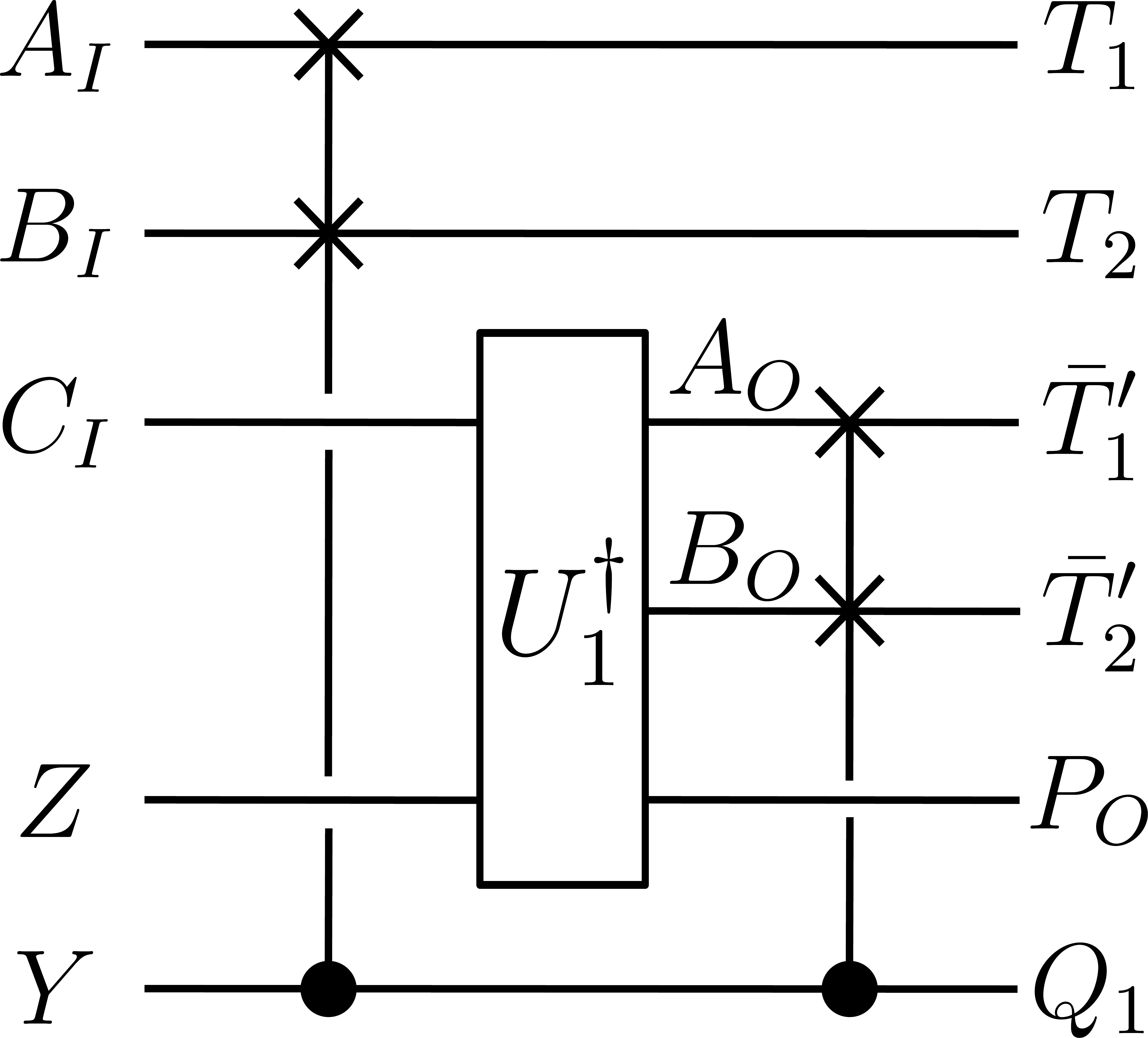}
\end{minipage}%
\begin{minipage}{.5\textwidth}
\flushleft{(b)}\\
  \centering
  \includegraphics[width=.48\linewidth]{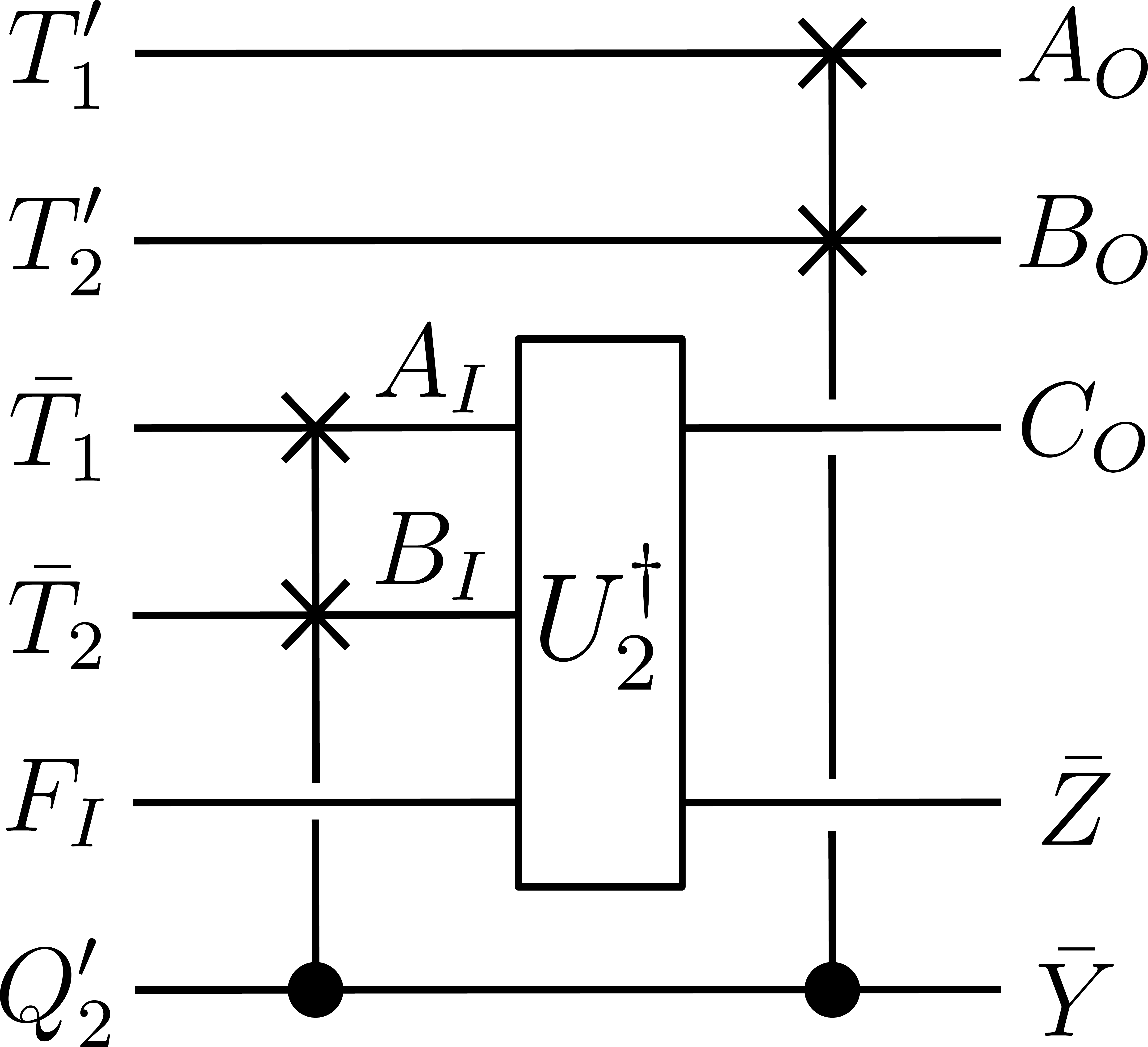}
\end{minipage}
\caption{\jwchanges{Graphical representation of the isomorphisms $J_{\textup{in}}$ and $J_{\textup{out}}$.}
(a) Graphical representation of the isomorphism $J_{\textup{in}}$, with which the incoming systems of the red fragment of Fig.~\hyperref[fig:tripartite_proof]{4(a)} (and the outgoing systems of the blue fragment in Fig.~\hyperref[fig:tripartite_proof]{4(b)}) are composed so as to move to the description in terms of time-delocalised subsystems.
Alice's input system $A_I$ is mapped to the ``target'' incoming system $T_1$, and Bob's input system $B_I$ is mapped to the ``target'' incoming system $T_2$, or vice versa, conditionally on whether the additional system $Y$ (whose computational basis state is transmitted to the control system $Q_1$) is in the state $\ket{0}$ or $\ket{1}$.
The unitary $U_1^\dagger$ from the abstract decomposition in Eq.~\eqref{eq:decomp_tripartite} (see also \jwchanges{Supplementary} Fig.~\ref{fig:tripartite_decomposition}) maps Charlie's input system $C_I$ and the additional system $Z$ to $A_O B_O P_O$, with $A_O$ then being mapped to $\bar T_1'$ and $B_O$ to $\bar{T_2'}$, or vice versa, conditionally on the state of the control system.  
\\
(b) Graphical representation of the isomorphism $J_{\textup{out}}$, with which the outgoing systems in the red fragment of Fig.~\hyperref[fig:tripartite_proof]{4(a)} (and the incoming systems of the blue fragment in Fig.~\hyperref[fig:tripartite_proof]{4(b)}) are composed.
The ``target'' outgoing system $T_1'$ is mapped to Alice's output system $A_O$, and the ``target'' outgoing system $T_2'$ to Bob's output system $B_O$, or vice versa, depending on whether the control system $Q_2'$ (whose computational basis state is transmitted to the additional system $\bar Y$) is in the state $\ket{0}$ or $\ket{1}$.
The ``target'' system $\bar T_1$ is mapped to the system $A_I$, and $\bar T_2$ to $B_I$, or vice versa, conditionally on the state of the control system, and $A_I B_I F_I$ is mapped to Charlie's output system $C_O$ and the additional system $\bar Z$ by the unitary $U_2^\dagger$ from the abstract decomposition in Eq.~\eqref{eq:decomp_tripartite}.
\label{fig:iso_tripartite_general}}
\end{figure} 

\newpage

\subsection{Description of the circuit in terms of time-delocalised subsystems}
\label{app:tripartite_systems}

To describe the circuit in terms of the chosen time-delocalised subsystems, we consider the red circuit fragment in Fig.~\hyperref[fig:tripartite_proof]{4(a)}, which implements a quantum operation $\Phi_1(U_A,U_B,U_C)$ from $\HS^{A_I' B_I' C_I' P_O T_1 T_2 \bar T_1' \bar T_2' Q_1 \bar Q_2'}$ to $\HS^{A_O' B_O' C_O' F_I \bar T_1 \bar T_2 T_1' T_2' \bar Q_1 Q_2'}$, described in the pure Choi representation by
\begin{align}
\label{eq:Alice_systems_1}
& \dket{\Phi_1(U_A,U_B,U_C)} \notag \\[1mm]
& = \dket{\omega_1^{[0]}(U_C)}^{P_O C_I' \bar T_1 E_1 \bar Q_1} * \Big(\dket{U_A^{(1)}}^{T_1 A_I'T_1' \alpha} \otimes \dket{\id}^{B_I' \beta} \otimes \ket{0}^{Q_1} \otimes \ket{0}^{Q_1'} + \dket{U_B^{(1)}}^{T_1 B_I' T_1' \beta} \otimes \dket{\id}^{A_I' \alpha} \otimes \ket{1}^{Q_1} \otimes \ket{1}^{Q_1'}\Big) \notag \\
& \hspace{25mm} * \Big(\dket{\omega_2^\circ(U_C)}^{\bar T_1' E_1 \bar T_2 E_2} \otimes \ket{0}^{Q_1'} \otimes \ket{0}^{Q_2} + \dket{\omega_2^\bullet(U_C)}^{\bar T_1' E_1 \bar T_2 E_2} \otimes \ket{1}^{Q_1'} \otimes \ket{1}^{Q_2}\Big) \notag \\
& \hspace{8mm} * \Big(\dket{U_B^{(2)}}^{T_2 \beta T_2' B_O'} \!\otimes \dket{\id}^{\alpha A_O'} \otimes \ket{0}^{Q_2} \otimes \ket{0}^{Q_2'} + \dket{U_A^{(2)}}^{T_2 \alpha T_2' A_O'} \!\otimes \dket{\id}^{\beta B_O'} \otimes \ket{1}^{Q_2} \otimes \ket{1}^{Q_2'}\Big) \!* \dket{\omega_3^{[0]}(U_C)}^{\bar T_2' E_2 \bar Q_2' F_I  C_O'} \notag \\[1mm]
& = \dket{\omega_1^{[0]}(U_C)}^{P_O C_I' \bar T_1 E_1 \bar Q_1} * \Big(\big[\dket{U_A^{(1)}}^{T_1 A_I'T_1' \alpha}\! * \dket{\id}^{\alpha A_O'}\big] \otimes \dket{\omega_2^\circ(U_C)}^{\bar T_1' E_1 \bar T_2 E_2} \otimes \big[\dket{\id}^{B_I' \beta} \!* \dket{U_B^{(2)}}^{T_2 \beta T_2' B_O'}\big] \otimes \ket{0}^{Q_1} \!\otimes\! \ket{0}^{Q_2'} \notag \\
&+\big[\dket{U_B^{(1)}}^{T_1 B_I'T_1' \beta} \!* \dket{\id}^{\beta B_O'}\big] \otimes \dket{\omega_2^\bullet(U_C)}^{\bar T_1' E_1 \bar T_2 E_2} \otimes \big[\dket{\id}^{A_I' \alpha} \!* \dket{U_A^{(2)}}^{T_2 \alpha T_2' A_O'}\big] \otimes \ket{1}^{Q_1} \!\otimes\! \ket{1}^{Q_2'}\Big) * \dket{\omega_3^{[0]}(U_C)}^{\bar T_2' E_2 \bar Q_2' F_I  C_O'}
\end{align}
with the shorthand notations $\dket{\omega_1^{[0]}(U_C)}^{P_O C_I' \bar T_1 E_1 \bar Q_1} \coloneqq \ket{0}^p * \dket{\omega_1(U_C)}^{P_O p C_I' \bar T_1 E_1 \bar Q_1}$, $\dket{U_A^{(1)}}^{T_1 A_I'T_1' \alpha} \coloneqq \dket{\id}^{T_1 A_I} * \dket{U_A}^{A_{IO} A_{IO}'} * (\dket{\id}^{A_O T_1'} \otimes \dket{\id}^{A_O' \alpha})$,  $\dket{U_B^{(1)}}^{T_1 B_I'T_1' \beta} \coloneqq \dket{\id}^{T_1 B_I} * \dket{U_B}^{B_{IO} B_{IO}'} * (\dket{\id}^{B_O T_1'} \otimes \dket{\id}^{B_O' \beta})$,  $\dket{U_A^{(2)}}^{T_2 \alpha T_2' A_O'} \coloneqq (\dket{\id}^{T_2 A_I} \otimes \dket{\id}^{\alpha A_I'}) * \dket{U_A}^{A_{IO} A_{IO}'} * \dket{\id}^{A_O T_2'}$,  $\dket{U_B^{(2)}}^{T_2 \beta T_2' B_O'} \coloneqq (\dket{\id}^{T_2 B_I} \otimes \dket{\id}^{\beta B_I'}) * \dket{U_B}^{B_{IO} B_{IO}'} * \dket{\id}^{B_O T_2'}$ and $\dket{\omega_3^{[0]}(U_C)}^{\bar T_2' E_2 \bar Q_2' F_I  C_O'} \coloneqq \dket{\omega_3(U_C)}^{\bar T_2' E_2 \bar Q_2' F_I f C_O'} * \ket{0}^f$.
In terms of the time-delocalised systems defined above, the operation implemented by the circuit fragment reads
\begin{align}
\label{eq:Alice_systems_2}
& \dket{J_{\textup{in}}} * \dket{\Phi_1(U_A,U_B,U_C)} * \dket{J_{\textup{out}}}  = \dket{U_A}^{A_{IO} A_{IO}'} \otimes \dket{U_B}^{B_{IO}B_{IO}'} \otimes \dket{R(U_C)}^{C_{IO}C_{IO}' Y \bar Y Z \bar Z \bar Q_1 \bar Q_2'} 
\end{align}
(see Fig.~\hyperref[fig:tripartite_proof]{4(a)}), where we denote by $\dket{R(U_C)}$ the pure Choi representation of the operation $R(U_C): \HS^{C_I' C_I Y Z \bar Q_2'} \to \HS^{C_O' C_O \bar Y \bar Z \bar Q_1}$ happening in parallel to $U_A$ and $U_B$, which is given by
\begin{align}
\label{eq:R_UC}
&\dket{R(U_C)}^{C_{IO}C_{IO}' Y \bar Y Z \bar Z \bar Q_1 \bar Q_2'} \notag \\
& = \big(\dket{U_1^\dagger}^{C_I Z P_O A_O B_O} * [\dket{\id}^{A_O \bar T_1'} \otimes \dket{\id}^{B_O \bar T_2'}]\big) * \dket{\omega_1^{[0]}(U_C)}^{P_O C_I' \bar T_1 E_1 \bar Q_1} * \dket{\omega_2^\circ(U_C)}^{\bar T_1' E_1 \bar T_2 E_2} * \dket{\omega_3^{[0]}(U_C)}^{\bar T_2' E_2 \bar Q_2' F_I C_O'} \notag \\
&\hspace{60mm} * \big([\dket{\id}^{\bar T_1 A_I} \otimes \dket{\id}^{\bar T_2 B_I}] * \dket{U_2^\dagger}^{A_I B_I F_I C_O \bar Z} \big) \otimes \ket{0}^{\bar Y} \otimes \ket{0}^{Y} \notag \\[1mm]
& \ \ + \big(\dket{U_1^\dagger}^{C_I Z P_O A_O B_O} * [\dket{\id}^{B_O \bar T_1'} \otimes \dket{\id}^{A_O \bar T_2'}]\big) 
* \dket{\omega_1^{[0]}(U_C)}^{P_O C_I' \bar T_1 E_1 \bar Q_1} * \dket{\omega_2^\bullet(U_C)}^{\bar T_1' E_1 \bar T_2 E_2} * \dket{\omega_3^{[0]}(U_C)}^{\bar T_2' E_2 \bar Q_2' F_I C_O'} \notag \\
&\hspace{60mm}* \big([\dket{\id}^{\bar T_1 B_I} \otimes \dket{\id}^{\bar T_2 A_I}] * \dket{U_2^\dagger}^{A_I B_I F_I C_O \bar Z} \big) \otimes \ket{1}^{\bar Y} \otimes \ket{1}^{Y}. 
\end{align}
$R(U_C)$ is represented graphically in Fig.~\ref{fig:R_UC}.

\begin{figure}[h]
    \centering
    \includegraphics[width=0.9\textwidth]{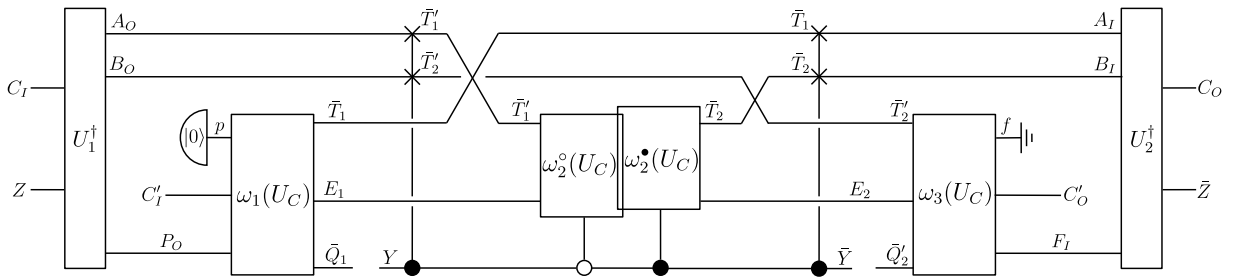}
    \caption{Circuit representation of $R(U_C)$, obtained by linking the red circuit fragment in Fig.~\hyperref[fig:tripartite_proof]{4(a)} with the circuits of $J_{\text{in}}$ and $J_{\text{out}}$ from Fig.~\ref{fig:iso_tripartite_general}, and ``factorising out'' $U_A$ and $U_B$.}
    \label{fig:R_UC}
\end{figure}

To complete the rewriting of the circuit in terms of these time-delocalised subsystems, we also need to compose the operation implemented by the complementary blue fragment, which consists of the identity channels relating the systems with and without the ``bar'' superscripts, with precisely the inverse of the isomorphisms $J_{\text{in}}$ and $J_{\text{out}}$.
We obtain
\begin{align}
\label{eq:R_prime}
\dket{R'} \coloneqq & \dket{J_{\textup{out}}^\dagger} * \big[\dket{\id}^{\bar T_1 T_1} \otimes \dket{\id}^{\bar T_2 T_2} \otimes \dket{\id}^{T_1' \bar T_1'} \otimes \dket{\id}^{T_2' \bar T_2'} \otimes \dket{\id}^{\bar Q_1 Q_1} \otimes \dket{\id}^{Q_2' \bar Q_2'}\big] * \dket{J_{\textup{in}}^\dagger} \notag\\[1mm]
= &\dket{U_2}^{C_O \bar Z A_I B_I F_I}  \otimes \dket{U_1}^{P_O A_O B_O C_I Z} \otimes \big[\ket{0}^{\bar Y} \otimes \ket{0}^{\bar Q_2'}  \otimes \ket{0}^{\bar Q_1} \otimes \ket{0}^{Y} + \ket{1}^{\bar Y} \otimes \ket{1}^{\bar Q_2'}  \otimes \ket{1}^{\bar Q_1} \otimes \ket{1}^{Y}\big] \notag \\
& + \dket{(\mathfrak{S}^{A_I B_I}\otimes\id^{F_I})U_2}^{C_O \bar Z A_I B_I F_I} \otimes \dket{U_1(\mathfrak{S}^{A_O B_O}\otimes\id^{P_O})}^{P_O A_O B_O C_I Z} \notag \\
&\otimes \big[\ket{0}^{\bar Y} \otimes \ket{0}^{\bar Q_2'}  \otimes \ket{1}^{\bar Q_1} \otimes \ket{1}^{Y} + \ket{1}^{\bar Y} \otimes \ket{1}^{\bar Q_2'}  \otimes \ket{0}^{\bar Q_1} \otimes \ket{0}^{Y}\big]
\end{align}
(see Fig.~\hyperref[fig:tripartite_proof]{4(b)}), where $\mathfrak{S}^{A_I B_I} \coloneqq \id^{A_I \to B_I} \otimes \id^{B_I \to A_I}$ and $\mathfrak{S}^{A_O B_O} \coloneqq \id^{A_O \to B_O} \otimes \id^{B_O \to A_O}$ are SWAP operations.
The operation $R'$ is represented graphically in \jwchanges{Supplementary} Fig.~\ref{fig:Rprime}.

\begin{figure}[h]
    \centering
    \includegraphics[width=0.9\textwidth]{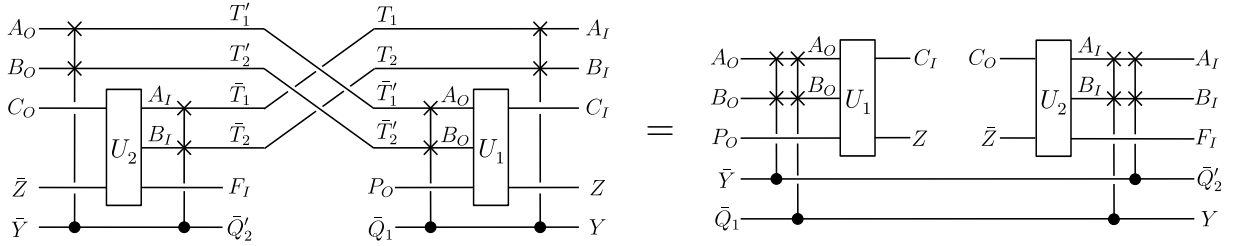}
    \caption{Circuit representation of $R'$, obtained by linking the reversed circuits of $J_{\text{in}}$ and $J_{\text{out}}$ from Fig.~\ref{fig:iso_tripartite_general} with the blue circuit fragment in Fig.~\hyperref[fig:tripartite_proof]{4(b)}. 
    One easily sees, as in Eq.~\eqref{eq:R_prime}, that if the control states in systems $\bar Y\bar Q_2'$ and $\bar Q_1Y$ are the same, then the controlled SWAP operations don't have any effect, while if the control states are different, these effectively swap the $A$ and $B$ systems. (Note that the latter situation, and the corresponding terms in Eq.~\eqref{eq:R_prime}, get cancelled when linking $R'$ to $R(U_C)$.)}
    \label{fig:Rprime}
\end{figure}

Finally, we want to recompose the operations $\dket{R(U_C)}$ and $\dket{R'}$ over the systems $Y$, $\bar Y$, $Z$, $\bar Z$, $\bar Q_1$, $\bar Q_2'$ (but not over the systems $C_I$ and $C_O$).
Inserting the explicit expressions for $\omega_1$, $\omega_2^\circ$, $\omega_2^\bullet$ and $\omega_3$ (Eqs.~\eqref{eq:omega_1},~\eqref{eq:omega_2_circ},~\eqref{eq:omega_2_bullet} and~\eqref{eq:omega_3}, respectively), taking the link product of the right-hand sides of Eq.~\eqref{eq:R_UC} and Eq.~\eqref{eq:R_prime} and evaluating the link products over $Y$, $\bar Y$, $\bar Q_1$, $\bar Q_2'$, $\bar T_1$, $\bar T_2$, $\bar T_1'$, $\bar T_2'$, yields 
\begin{align}
\label{eq:comp_R}
&\dket{R(U_C)} * \dket{R'} \notag \\
& = \dket{U_1}^{P_O A_O B_O C_I Z} 
* \Big(\dket{U_1^\dagger}^{C_I Z P_O A_O B_O} 
 * \big[\dket{\nu_1^{A \prec B}(U_C)}^{P_O C_I' A_I E_1} * \dket{\nu_2^{A \prec B}(U_C)}^{A_O E_1 B_I E_2} * \dket{\nu_3^{A \prec B}(U_C)}^{B_O E_2 F_I C_O'} \notag \\
& \ \ + \dket{\nu_1^{B \prec A}(U_C)}^{P_O C_I' B_I E_1} * \dket{\nu_2^{B \prec A}(U_C)}^{B_O E_1 A_I E_2} * \dket{\nu_3^{B \prec A}(U_C)}^{A_O E_2 F_I C_O'}\big] * \dket{U_2^\dagger}^{A_I B_I F_I C_O \bar Z} \Big) * \dket{U_2}^{C_O \bar Z A_I B_I F_I}. 
\end{align}

We then note that the term in the square bracket on the right-hand side of Eq.~\eqref{eq:comp_R} is precisely $\dket{U_\G(\cdot,\cdot,U_C)} = \dket{U} * \dket{U_C}$ (cf. Eq.~\eqref{eq:BLO_decomp}), which, with the decomposition in Eq.~\eqref{eq:decomp_tripartite}, can be written as $\dket{U_1} * \dket{\id}^{Z \bar Z} * \dket{U_2} * \dket{U_C}$.
Therefore, with Eq.~\eqref{eq:unitary_canceling_pure}, the term in the round bracket on the right-hand side of Eq.~\eqref{eq:comp_R} is precisely $\dket{U_C} * \dket{\id}^{Z \bar Z}$.
Finally, evaluating the link product over $Z$ and $\bar Z$ yields 
\begin{align}
\label{eq:tripartite_final}
\dket{R(U_C)} * \dket{R'} = & \dket{U_1}^{P_O A_O B_O C_I Z} * (\dket{U_C}^{C_{IO} C_{IO}'} \otimes \dket{\id}^{Z \bar Z}) * \dket{U_2}^{C_O \bar Z A_I B_I F_I} \notag \\
&= (\dket{U_1}^{P_O A_O B_O C_I Z} * \dket{\id}^{Z \bar Z} * \dket{U_2}^{C_O \bar Z A_I B_I F_I}) * \dket{U_C}^{C_{IO} C_{IO}'} = \dket{U}^{P_O A_{IO} B_{IO} C_{IO} F} * \dket{U_C}^{C_{IO} C_{IO}'}. 
\end{align}
(see Fig.~\hyperref[fig:tripartite_proof]{4(c)}).

\newpage

\section*{
Supplementary Note 4---Example of a process that violates causal inequalities on time-delocalised subsystems
}
\label{app:tripartite_example}
\setcounter{subsection}{0}

\subsection{Description in terms of quantum operations and time-delocalised subsystems}
\label{app:tripartite_example_quantum}

The first operation in the circuit of Fig.~\ref{fig:tripartite_example}, $\omega_1(U_C): \HS^{P_1 P_2 P_3 C_I' \to \bar T_1 E_1 \bar Q_1 \gamma}$ is given by 
\begin{align}
\label{eq:omega1_bw}
&\omega_1(U_C) = \id^{P_1 \to \bar T_1} \otimes \id^{P_2 \to E_1} \otimes [(\ket{0}^{\bar Q_1}\bra{0}^{C_O} \otimes \id^{C_O' \to \gamma}) \cdot U_C \cdot (\id^{P_3 \to C_I} \otimes \id^{C_I'})] \notag \\
&\hspace{45mm} + \id^{P_1 \to E_1} \otimes \id^{P_2 \to \bar T_1} \otimes [(\ket{1}^{\bar Q_1}\bra{1}^{C_O} \otimes \id^{C_O' \to \gamma}) \cdot U_C \cdot (\id^{P_3 \to C_I} \otimes \id^{C_I'})], 
\end{align}
where $\gamma$ is an ancillary space with dimension $d_\gamma = d_{C_I'}$.
The two circuit operations $\omega_2^\circ: \HS^{\bar T_1' E_1} \to \HS^{\bar T_2 E_2}$ and $\omega_2^\bullet: \HS^{\bar T_1' E_1} \to \HS^{\bar T_2 E_2}$ do not depend on $U_C$ for this particular process. 
They are
\begin{align}
\omega_2^\circ = \ket{0}^{E_2}\bra{0}^{\bar T_1'} \otimes \id^{E_1 \to \bar T_2} + \ket{1}^{E_2}\bra{1}^{\bar T_1'} \otimes \sigma_\textsc{x}^{E_1 \to \bar T_2}
\end{align}
(where $\sigma_\textsc{x}^{Y \to Z} \coloneqq \ket{0}^Z\bra{1}^Y + \ket{1}^Z\bra{0}^Y$ denotes a NOT gate) and
\begin{align}
\omega_2^\bullet = \ket{1}^{E_2}\bra{1}^{\bar T_1'} \otimes \id^{E_1 \to \bar T_2} + \ket{0}^{E_2}\bra{0}^{\bar T_1'} \otimes \sigma_\textsc{x}^{E_1 \to \bar T_2}.
\end{align}

The final circuit operation $\omega_3(U_C): \HS^{\bar T_2' E_2 \bar Q_2' \gamma} \to \HS^{F_1 F_2 F_3 C_O'}$ is
\begin{align}
\label{eq:omega3_bw}
\omega_3(U_C) &= \big[\ket{000}^{ F_1 F_2 F_3}\bra{000}^{\bar T_2' E_2 \bar Q_2'} + \ket{001}^{F_1 F_2 F_3}\bra{001}^{\bar T_2' E_2 \bar Q_2'} + \ket{100}^{F_1 F_2  F_3}\bra{010}^{\bar T_2' E_2 \bar Q_2'} \notag \\
& +\ket{101}^{F_1 F_2 F_3}\bra{101}^{\bar T_2' E_2 \bar Q_2'} + \ket{110}^{F_1  F_2 F_3}\bra{110}^{\bar T_2' E_2 \bar Q_2'} +\ket{111}^{F_1 F_2 F_3}\bra{111}^{\bar T_2' E_2 \bar Q_2'}\big] \otimes \id^{\gamma \to C_O'}\notag \\
& + \ket{01}^{F_1 F_2 }\bra{100}^{\bar T_2' E_2 \bar Q_2'} \otimes \big[\id^{C_O \to F_3} \otimes \id^{C_O'})\cdot (U_C \cdot (\sigma_\textsc{x}^{C_I \to C_I} \otimes \id^{C_I'}) \cdot U_C^\dagger)\cdot (\ket{0}^{C_O} \otimes \id^{\gamma \to C_O'}\big] \notag \\
& + \ket{01}^{F_1 F_2 }\bra{011}^{\bar T_2' E_2 \bar Q_2'} \otimes \big[\id^{C_O \to F_3} \otimes \id^{C_O'})\cdot (U_C \cdot (\sigma_\textsc{x}^{C_I \to C_I} \otimes \id^{C_I'}) \cdot U_C^\dagger)\cdot (\ket{1}^{C_O} \otimes \id^{\gamma \to C_O'}\big]. 
\end{align}

For this particular process, a decomposition as in Eq.~\eqref{eq:decomp_tripartite} is given by 
$\dket{U_{\text{BW}}} = \dket{U_1} * \dket{\id}^{Z \bar Z} * \dket{U_2}$, with the unitaries $U_1: \HS^{P_1 P_2 P_3 A_O B_O} \to \HS^{C_I Z}$ and $U_2: \HS^{C_O \bar Z} \to \HS^{A_I B_I F_1 F_2 F_3}$
\begin{align}
\label{eq:charlieSystems}
U_1 &= \sum_{p_1 p_2 p_3 a_O b_O} \ket{p_3 \oplus \lnot \, a_O \land b_O}^{C_I}\ket{p_1,p_2,a_O,b_O}^{Z}\bra{p_1,p_2,p_3}^{P_1 P_2 P_3}\bra{a_O,b_O}^{A_O B_O} \notag\\
U_2 &= \sum_{a_O b_O c_O p_1 p_2}  \ket{p_1 \oplus \lnot \, b_O \land c_O, p_2 \oplus \lnot \, c_O \land a_O}^{A_I B_I}\ket{a_O,b_O,c_O}^{F_1 F_2 F_3} \bra{c_O}^{C_O}\bra{p_1,p_2,a_O,b_O}^{\bar Z},  
\end{align}
from which the isomorphisms $J_{\textup{in}}: \HS^{A_I B_I C_I Y Z} \to \HS^{T_1 T_2 \bar T_1' \bar T_2' Q_1 P_O}$ and $J_{\textup{out}}: \HS^{ T_1' T_2' \bar T_1 \bar T_2 Q_2' F_I} \to \HS^{A_O B_O C_O \bar Y \bar Z}$ that define the decomposition into time-delocalised subsystems are obtained through Eqs.~\eqref{eq:tdls_qcqc_input} and~\eqref{eq:tdls_qcqc_output}.

Specifically, these are given by
\begin{align}
\label{eq:iso_incoming_bw}
J_{\textup{in}} = & \id^{A_I \to T_1} \otimes \id^{B_I \to T_2} \otimes [ \sum_{\substack{p_1 p_2 p_3\\ a_O b_O}} \ket{p_1,p_2,p_3}^{P_1 P_2 P_3}\ket{a_O,b_O}^{\bar T_1' \bar T_2'} \bra{p_3 \oplus \lnot \, a_O \land b_O}^{C_I}\bra{p_1,p_2,a_O,b_O}^{Z}] \otimes \ket{0}^{Q_1}\bra{0}^{Y} \notag \\
 &+ \id^{B_I \to T_1} \otimes \id^{A_I \to T_2} \otimes [\sum_{\substack{p_1 p_2 p_3\\ a_O b_O}} \ket{p_1,p_2,p_3}^{P_1 P_2 P_3}\ket{a_O,b_O}^{\bar T_2' \bar T_1'} \bra{p_3 \oplus \lnot \, a_O \land b_O}^{C_I}\bra{p_1,p_2,a_O,b_O}^{Z}] \otimes \ket{1}^{Q_1}\bra{1}^{Y},
\end{align}
and
\begin{align}
\label{eq:iso_outgoing_bw}
J_{\textup{out}} = & \id^{T_1' \to A_O} \otimes \id^{T_2' \to B_O} \notag \\ &\hspace{10mm}\otimes [ \sum_{\substack{a_O b_O c_O\\ p_1 p_2}} \ket{c_O}^{C_O}\ket{p_1,p_2,a_O,b_O}^{\bar Z} \bra{p_1 \oplus \lnot \, b_O \land c_O, p_2 \oplus \lnot \, c_O \land a_O}^{\bar T_1 \bar T_2}\bra{a_O,b_O,c_O}^{F_1 F_2 F_3} ] \otimes \ket{0}^{\bar Y}\bra{0}^{Q_2'} \notag \\
&+ \id^{T_1' \to B_O} \otimes \id^{T_2' \to A_O} \notag \\
&\hspace{10mm}\otimes [ \sum_{\substack{a_O b_O c_O\\ p_1 p_2}} \ket{c_O}^{C_O}\ket{p_1,p_2,a_O,b_O}^{\bar Z} \bra{p_1 \oplus \lnot \, b_O \land c_O, p_2 \oplus \lnot \, c_O \land a_O}^{\bar T_2 \bar T_1}\bra{a_O,b_O,c_O}^{F_1 F_2 F_3} ] \otimes \ket{1}^{\bar Y}\bra{1}^{Q_2'}. 
\end{align}
These isomorphisms are represented graphically in Fig.~\ref{fig:iso_bw}.

\begin{figure}[H]
\centering
\begin{minipage}{.5\textwidth}
\flushleft{(a)}\\
  \centering
  \includegraphics[width=.48\linewidth]{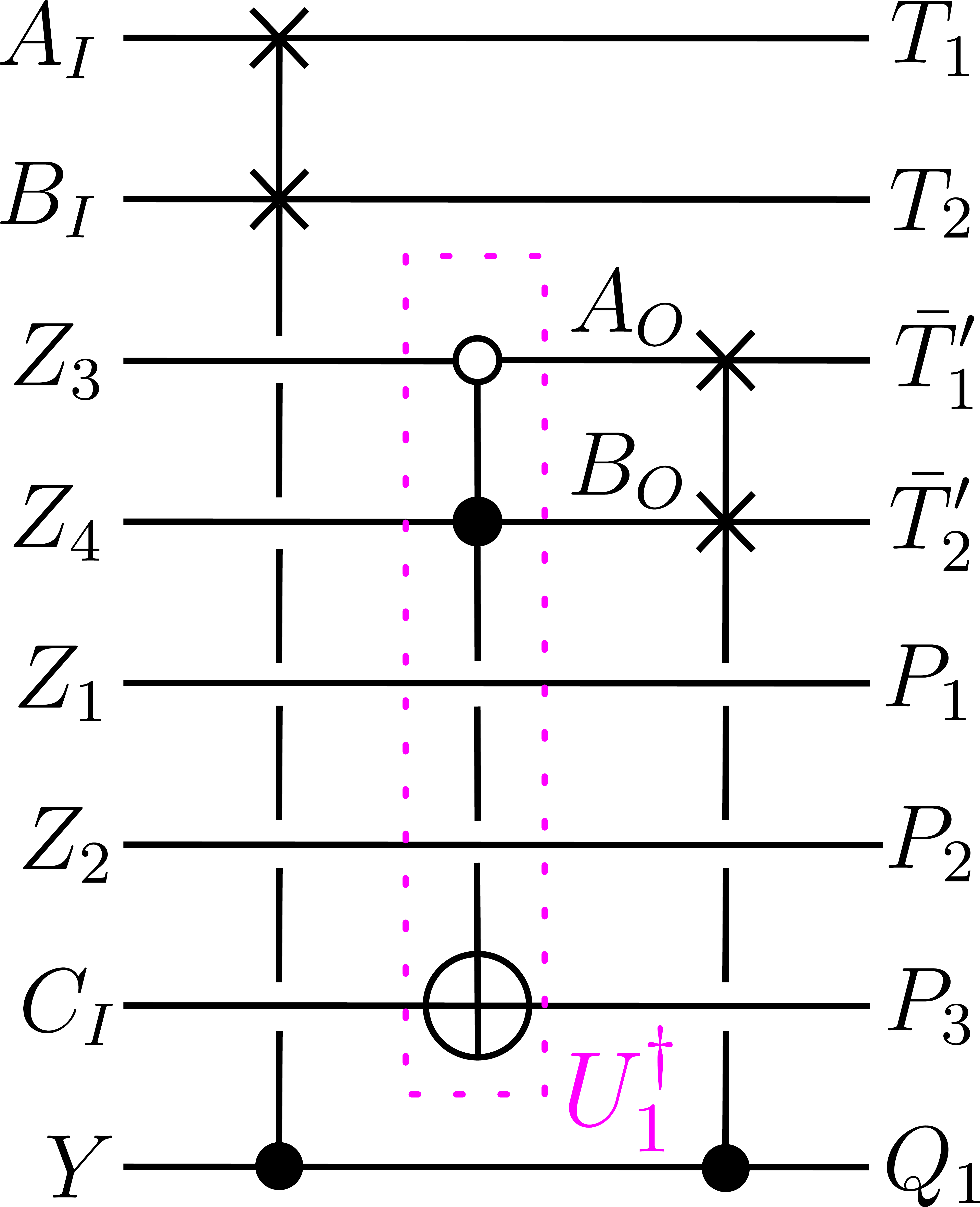}
\end{minipage}%
\begin{minipage}{.5\textwidth}
\flushleft{(b)}\\
  \centering
  \includegraphics[width=.48\linewidth]{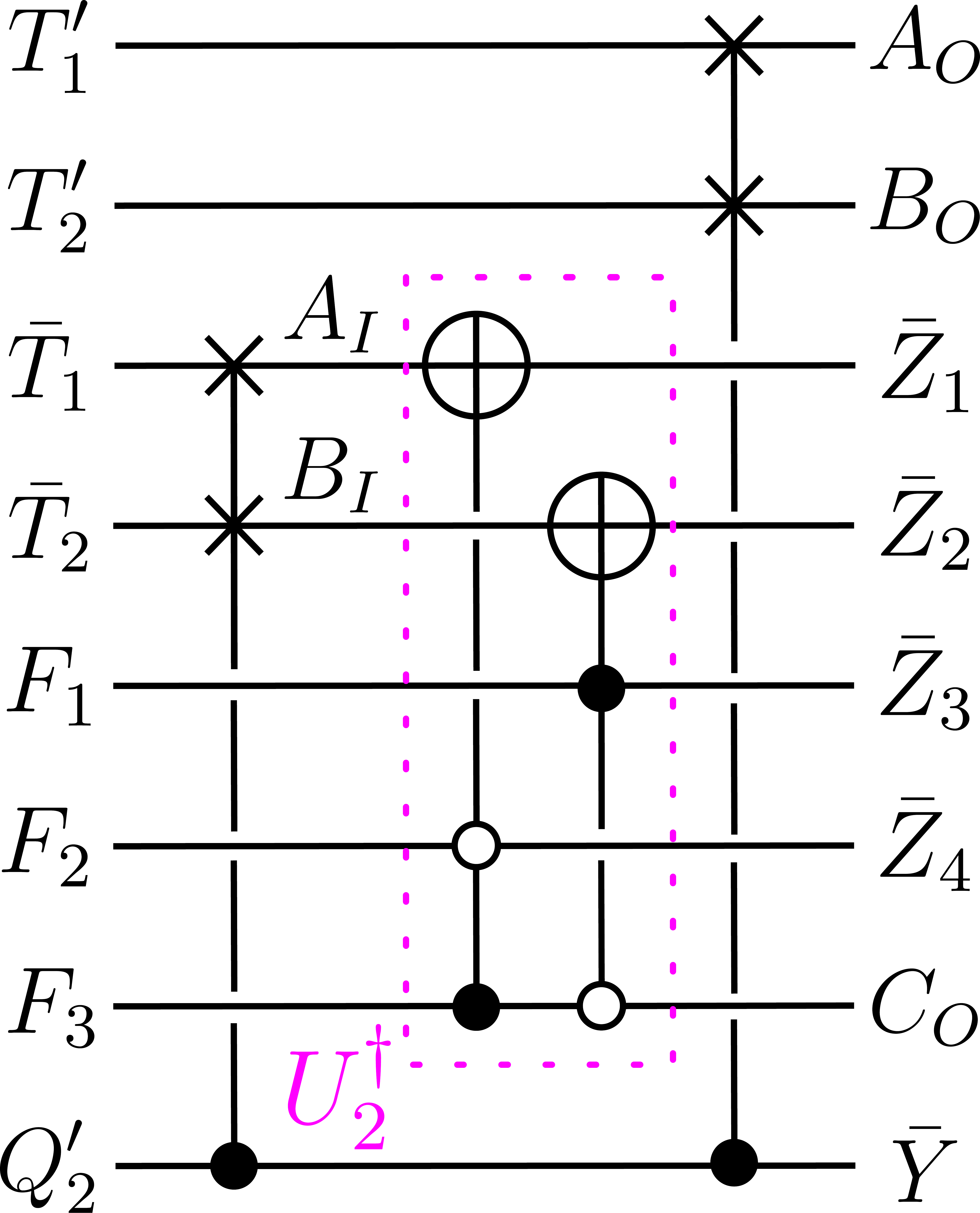}
\end{minipage}
\caption{\jwchanges{Graphical representation of the isomorphisms $J_{\textup{in}}$ and $J_{\textup{out}}$ for the BW process.}
(a) Isomorphism $J_\textup{in}$ that defines the time-delocalised subsystems $A_I$, $B_I$, and $C_I$ for the particular example of the BW process from Eq.~\eqref{eq:pm_bw} and Fig.~\ref{fig:tripartite_example} (with the four qubits constituting $\HS^Z \coloneqq \HS^{Z_1} \otimes \HS^{Z_2} \otimes \HS^{Z_3} \otimes \HS^{Z_4}$ shown as separate systems). 
\\
(b) Isomorphism $J_\textup{out}$ that defines the time-delocalised subsystems $A_O$, $B_O$, and $C_O$ (with again the four qubits constituting $\HS^{\bar Z} \coloneqq \HS^{\bar Z_1} \otimes \HS^{\bar Z_2} \otimes \HS^{\bar Z_3} \otimes \HS^{\bar Z_4}$ shown as separate systems).
  \label{fig:iso_bw}
}
\end{figure}

One can check that by applying the general tripartite ``circuit rewriting'' procedure (represented graphically in Fig.~\ref{fig:tripartite_proof} and detailled mathematically in Supplementary Note~\hyperref[app:tripartite_systems]{3~C}) to the particular temporal circuit of Fig.~\ref{fig:tripartite_example}, with the specific circuit operations $\omega_1$, $\omega_2^\circ$, $\omega_2^\bullet$, $\omega_3$ in Eqs.~\eqref{eq:omega1_bw}--\eqref{eq:omega3_bw} and the specific isomorphisms $J_{\text{in}}$ and $J_{\text{out}}$ defined in Eqs.~\eqref{eq:iso_incoming_bw} and~\eqref{eq:iso_outgoing_bw}, one indeed ends up with the process vector $\dket{U_{\text{BW}}}$ in Eq.~\eqref{eq:pm_bw}, composed with $\dket{U_A}$, $\dket{U_B}$ and $\dket{U_C}$. 
(And where, in the calculation, one replaces $\dket{\omega_1^{[0]}(U_C)}^{P_O C_I' \bar T_1 E_1 \bar Q_1}$ by $\dket{\omega_1(U_C)}^{P_1 P_2 P_3 C_I' \bar T_1 E_1 \bar Q_1 \gamma}$, and $\dket{\omega_3^{[0]}(U_C)}^{\bar T_2' E_2 \bar Q_2' F_I  C_O'}$ by  $\dket{\omega_3(U_C)}^{\bar T_2' E_2 \bar Q_2' \gamma F_1 F_2 F_3 C_O'}$, in order to account for the simplifications we made in the treatment of the circuit ancillas, see Fig.~\ref{fig:tripartite_example}).
The operations $R(U_C)$ and $R'$ for this specific example, as well as their composition, are shown in Figs.~\ref{fig:R_UC_BW} to~\ref{fig:R_Rprime_BW}.

\begin{figure}[h]
    \centering
    \includegraphics[width=0.9\textwidth]{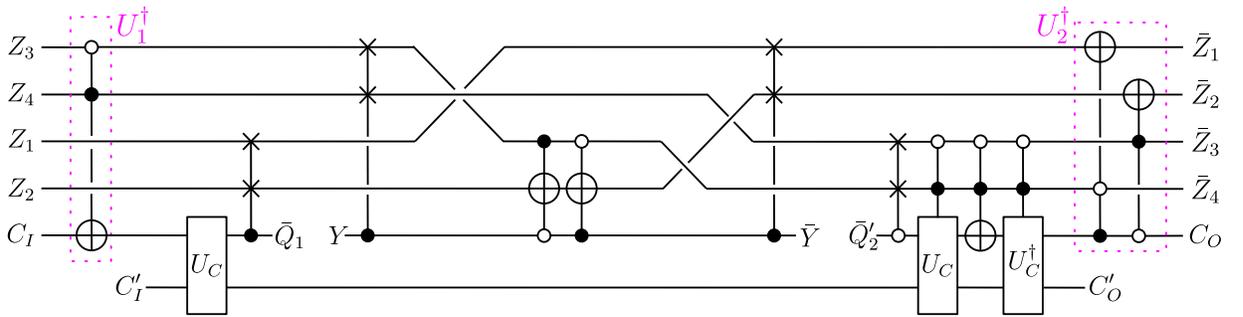}
    \caption{Circuit representation of $R(U_C)$ for the BW process. }
    \label{fig:R_UC_BW}
\end{figure}

\begin{figure}[h]
    \centering
    \includegraphics[width=0.6\textwidth]{Rprime_BW_2.pdf}
    \caption{Circuit representation of $R'$ for the BW process. }
    \label{fig:Rprime_BW}
\end{figure}

\begin{figure}[h]
    \centering
    \includegraphics[width=0.9\textwidth]{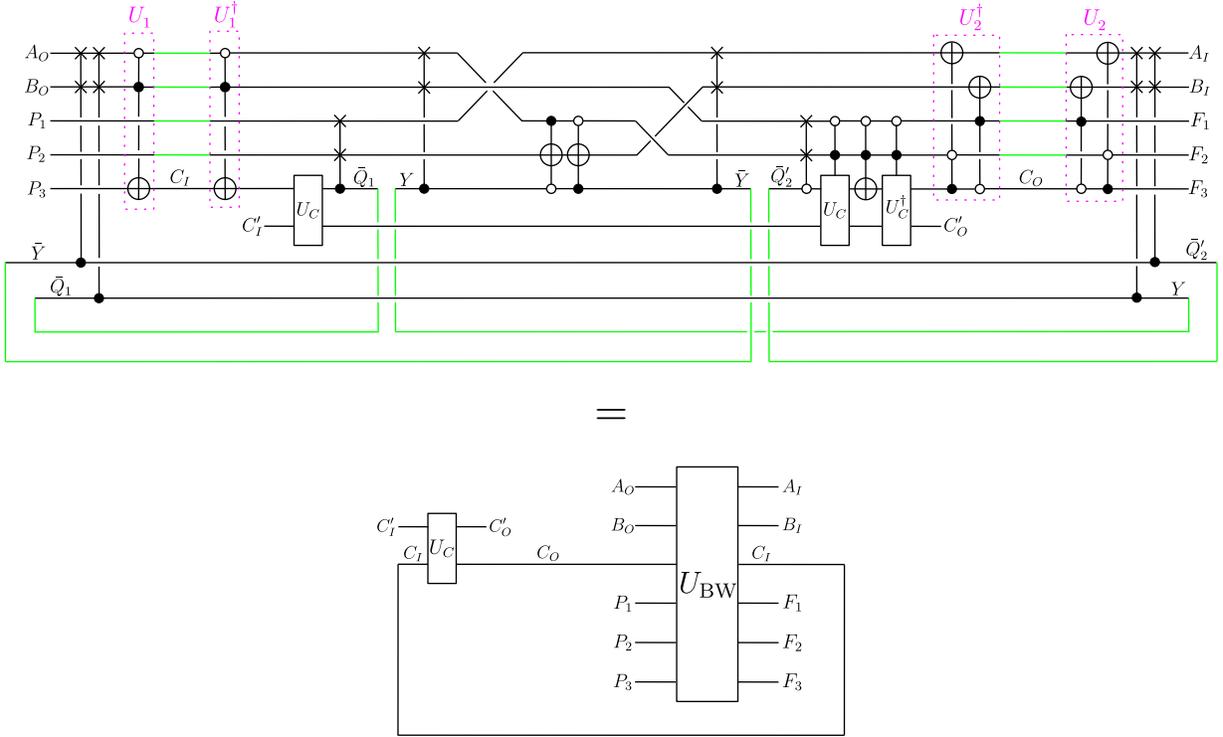}
    \caption{Circuit representation of the composition of $R(U_C)$ and $R'$ for the BW process. Here we see in particular how the time-delocalised systems (or classical variables, here) $C_I, C_O$ are identified. To verify that we indeed get the composition of $U_{\text{BW}}$ with $U_C$, we should evaluate the composition over $C_I, C_O$, as we do on the next figure.
    }
    \label{fig:R_Rprime_BW_partial}
\end{figure}

\begin{figure}[H]
    \centering
    \includegraphics[width=0.7\textwidth]{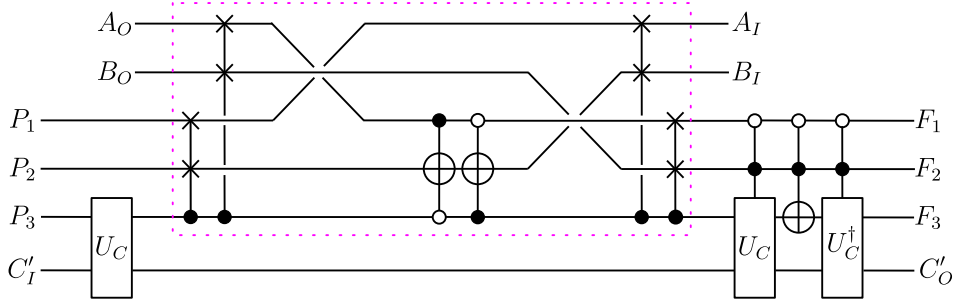}
    \caption{Circuit representation of the composition of $R(U_C)$ and $R'$ for the BW process, after simplification of $U_1^\dagger U_1$ and $U_2 U_2^\dagger$ and other further simplifications in the previous figure. It is easily checked that the circuit fragment in the dashed box realises the unitary operation $\sum_{a_O,b_O,c_O,p_1,p_2} \ketbra{p_1 \oplus \neg b_O \land c_O, p_2 \oplus \neg c_O \land a_O,a_O,b_O,c_O}{a_O,b_O,p_1,p_2,c_O}$, so that the whole fragment shown above realises $\sum_{a_O,b_O,c_O,p_1,p_2} \ketbra{p_1 \oplus \neg b_O \land c_O, p_2 \oplus \neg c_O \land a_O,a_O,b_O}{a_O,b_O,p_1,p_2}\otimes [U_C(\sigma_\textsc{x}\otimes\id)U_C^\dagger]^{\neg a_O \land b_O}(\ketbra{c_O}{c_O}\otimes\id)U_C = \sum_{a_O,b_O,c_O,p_1,p_2,p_3} \ketbra{p_1 \oplus \neg b_O \land c_O, p_2 \oplus \neg c_O \land a_O,a_O,b_O}{a_O,b_O,p_1,p_2}\otimes (\ketbra{c_O}{c_O}\otimes\id)U_C(\ketbra{p_3 \oplus \neg a_O \land b_O}{p_3}\otimes\id)$ (where the equality is obtained rather trivially for the terms with $\neg a_O \land b_O = 0$; for the terms with $\neg a_O \land b_O = 1$, note that $c_O$ only appears in $\sum_{c_O} \ketbra{c_O}{c_O} = \id$; we can then simplify this partial sum, simplify $U_C^\dagger$ together with $U_C$, re-introduce a similar $\sum_{c_O} \ketbra{c_O}{c_O}$ to the left of the remaining $U_C$, and finally write $\sigma_\textsc{x}^{\neg a_O \land b_O} = \sum_{p_3} \ketbra{p_3 \oplus \neg a_O \land b_O}{p_3}$). From this expression it is then easy to verify that the above circuit indeed realises precisely the composition of the process $U_{\text{BW}}$, as defined from Eq.~\eqref{eq:pm_bw}, with $U_C$ (see Fig.~\hyperref[fig:tripartite_proof]{4(c)}). 
    }
    \label{fig:R_Rprime_BW}
\end{figure}


\subsection{Description in terms of classical operations and time-delocalised variables}
\label{app:tripartite_example_classical}

In this Supplementary Note, we explain in more detail that the example in \hyperref[sec:td_bw]{``A process that violates causal inequalities on time-delocalised subsystems''} of the main text describes a \emph{classical} noncausal process, with classical operations that take place on \emph{time-delocalised variables}.
The classical counterpart of a quantum system $X$ with a Hilbert space $\HS^{X}$ is a random variable $X$ with values in a set $\mathcal{S}^X \coloneqq \{0, \ldots, d_X - 1\}$. 
The classical counterpart of a quantum operation (i.e., most generally, of a quantum instrument) from an incoming quantum system $X$ to an outgoing quantum system $Y$ is a ``classical instrument'', that is, a conditional probability distribution $P(r,y|x)$ which specifies the probability that the outgoing variable $Y$ takes the value $y \in \mathcal{S}^Y$ and the ``outcome'' of the operation (which can also be described by a random variable) is $r$, given that the incoming variable $X$ has the value $x \in \mathcal{S}^X$.

If the outgoing variable $Y$ is then passed on to a subsequent classical operation (say, an operation from incoming random variables $S$ and $Y$ to outgoing random variables $T$ and $Z$, specified by a conditional probability distribution $P(t,z|s,y)$), the two operations compose to a new classical operation $P(r,t,z|x,s) = \sum_y P(r,y|x) P(t,z|s,y)$. 
This is the classical counterpart of the link product.

If one then has an acyclic network composed of classical operations, it can be described in terms of \emph{time-delocalised variables} by decomposing it into fragments and composing these with deterministic reversible operations that take the incoming and outgoing ``time-local'' variables to new ones, analogously to the quantum case.
For instance, if the operations in the circuit of Fig.~\ref{fig:circuit_generic} are all classical, the red fragment defines a classical operation $P_{\text{red}}(k,d,h,i|a,f)$ from incoming time-local variables $A$, $F$ to outgoing time-local variables $D$, $H$, $I$ and with an outcome $k$, and the blue circuit fragment an operation $P_{\text{blue}}(j,l,m,a,f|d,h,i)$, from the incoming variables $D$, $H$, $I$ to the outgoing variables $A$, $F$ and with outcomes $j$, $l$, $m$.
We can then change to a description in terms of time-delocalised variables $V$, $W$, $X$, $Y$ defined by any bijective functions $J_{\text{in}}: \mathcal{S}^V \times \mathcal{S}^W \to \mathcal{S}^A \times \mathcal{S}^F$ and $J_{\text{out}}: \mathcal{S}^D \times \mathcal{S}^H \times \mathcal{S}^I \to \mathcal{S}^X \times \mathcal{S}^Y$, by taking
\begin{equation}
P'_{\text{red}}(k,x,y|v,w) = \sum_{a,d,f,h,i} P_{\text{red}}(k,d,h,i|a,f) \delta_{(a,f),J_{\text{in}}(v,w)} \delta_{(x,y),J_{\text{out}}(d,h,i)}
\end{equation}
and 
\begin{equation}
P'_{\text{blue}}(j,l,m,v,w|x,y) = \sum_{a,d,f,h,i} P_{\text{blue}}(j,l,m,a,f|d,h,i) \delta_{(d,h,i),J_{\text{out}}^{-1}(x,y)} \delta_{(v,w),J_{\text{in}}^{-1}(a,f)}.
\end{equation}
where $\delta_{(...),(...)}$ denotes the Kronecker delta between tuples.
It is straightforward to check that, analogously to the quantum case, the composition of the two fragments over the time-local and time-delocalised variables is indeed the same---i.e., that $\sum_{v,w,x,y} P'_{\text{red}}(k,x,y|v,w) P'_{\text{blue}}(j,l,m,v,w|x,y) = \sum_{a,d,f,h,i} P_{\text{red}}(k,d,h,i|a,f) P_{\text{blue}}(j,l,m,a,f|d,h,i) = P(j,k,l,m)$. 

If the state evolving through a quantum circuit is diagonal in the computational basis at any (relevant) time, the situation is effectively classical (i.e., when one identifies each quantum system $X$ with a classical, random variable $X$ and the computational basis states $\ket{i}^X$ of $\HS^X$ with the elements of $\mathcal{S}^X$, one obtains a probability distribution evolving through a circuit consisting of classical operations in the above sense). 
In the circuit of Fig.~\ref{fig:tripartite_example} for instance, when an input state in the incoming systems $\HS^{P_1 P_2 P_3 A_I' B_I' C_I'}$ that is diagonal in the computational basis is prepared, and when $U_A$, $U_B$ and $U_C$ are restricted to unitaries that map computational basis states to computational basis states, we indeed have such an effectively classical circuit composed of deterministic operations.

The quantum isomorphisms $J_{\text{in}}: \HS^{A_I B_I C_I Y Z} \to \HS^{T_1 T_2 \bar T_1' \bar T_2' Q_1 P_O}$ and $J_{\textup{out}}: \HS^{ T_1' T_2' \bar T_1 \bar T_2 Q_2' F_I} \to \HS^{A_O B_O C_O \bar Y \bar Z}$ given in Supplementary Note~\hyperref[app:tripartite_example_quantum]{4~A} above, which also map computational basis states to computational basis states, then translate into bijective functions $J_{\text{in}}: \mathcal{S}^{A_I} \times \mathcal{S}^{B_I} \times \mathcal{S}^{C_I} \times \mathcal{S}^{Y} \times \mathcal{S}^{Z} \to \mathcal{S}^{T_1} \times \mathcal{S}^{T_2} \times \mathcal{S}^{\bar T_1'} \times \mathcal{S}^{\bar T_2'} \times \mathcal{S}^{Q_1} \times \mathcal{S}^{P_O}$ and $J_{\text{out}}: \mathcal{S}^{T_1'} \times \mathcal{S}^{T_2'} \times \mathcal{S}^{\bar T_1} \times \mathcal{S}^{\bar T_2} \times \mathcal{S}^{Q_2'} \times \mathcal{S}^{F_I} \to \mathcal{S}^{A_O} \times \mathcal{S}^{B_O} \times \mathcal{S}^{C_O} \times \mathcal{S}^{\bar Y} \times \mathcal{S}^{\bar Z}$ that define the time-delocalised variables $A_I$, $B_I$, $C_I$, $A_O$, $B_O$, $C_O$. Namely, we obtain $A_I = (\neg Q_1 \land T_1) \oplus (Q_1 \land T_2)$, $B_I = (\neg Q_1 \land T_2) \oplus (Q_1 \land T_1)$, $C_I = P_3 \oplus (\neg Q_1 \land \neg \bar T_1' \land \bar T_2') \oplus (Q_1 \land \bar T_1' \land \neg \bar T_2')$, $A_O = (\neg Q_2' \land T_1') \oplus (Q_2' \land T_2')$, $B_O = (\neg Q_2' \land T_2') \oplus (Q_2' \land T_1')$ and $C_O = F_3$ (see Fig.~\ref{fig:iso_bw}).
With respect to these variables, the circuit of Fig.~\ref{fig:tripartite_example} then corresponds to three classical local operations $P_A(a_O,a_O'|a_I,a_I')$, $P_B(b_O,b_O'|b_I,b_I')$, $P_C(c_O,c_O'|c_I,c_I')$, that are composed with a classical, deterministic channel $P_{\text{BW}}(a_I,b_I,c_I,f_1,f_2,f_3|a_O,b_O,c_O,p_1,p_2,p_3) = \delta_{a_I,p_1 \oplus \neg b_O \land c_O} \delta_{b_I,p_2 \oplus \neg c_O \land a_O} \delta_{c_I,p_3 \oplus \neg a_O \land b_O}\delta_{f_1,a_O}\delta_{f_2,b_O}\delta_{f_3,c_O}$ that sends the outputs $A_O$, $B_O$, $C_O$ of the local operations, as well as the outputs $P_1$, $P_2$ and $P_3$ of the ``global past'' party, back to their inputs $A_I$, $B_I$, $C_I$ and the inputs $F_1$, $F_2$ and $F_3$ of the ``global future'' party. When the classical input state $\delta_{0,p_1}\delta_{0,p_2}\delta_{0,p_3}$ is prepared by the ``global past'' party, and the input variables of the ``global future'' party are discarded, the operation realised by the circuit can be written as
\begin{equation} \label{eq:Poooiii}
    P(a_O',b_O',c_O'|a_I',b_I',c_I') = \sum_{\substack{a_I, b_I, c_I,\\ a_O, b_O, c_O}} P_{\text{AF}}(a_I,b_I,c_I|a_O,b_O,c_O)P_A(a_O,a_O'|a_I,a_I')P_B(b_O,b_O'|b_I,b_I')P_C(c_O,c_O'|c_I,c_I'),
\end{equation} i.e., the circuit corresponds to $P_A(a_O,a_O'|a_I,a_I')$, $P_B(b_O,b_O'|b_I,b_I')$, $P_C(c_O,c_O'|c_I,c_I')$, composed with a deterministic, classical channel $P_{\text{AF}}(a_I,b_I,c_I|a_O,b_O,c_O) = \delta_{a_I,\neg b_O \land c_O} \delta_{b_I,\neg c_O \land a_O} \delta_{c_I,\neg a_O \land b_O}$ (which indeed corresponds to the ``classical process matrix'' $W_{\text{AF}}$ of Eq.~\eqref{eq:pm_bw_trivialPF}). 
In this classical description, for the operations that the parties apply to violate causal inequalities (see \hyperref[sec:td_bw]{``A process that violates causal inequalities on time-delocalised subsystems''} in the main text), we may identify the incoming ancillary variables $A_I'$, $B_I'$ and $C_I'$ with the classical variables that describe the local classical input variables, or ``settings'', that the parties receive (denoted by $I_A$, $I_B$ and $I_C$, respectively, with values $i_A$, $i_B$ and $i_C$), and their outgoing ancillary variables $A_O'$, $B_O'$ and $C_O'$ with their classical outputs (described by classical variables $O_A$, $O_B$ and $O_C$, respectively, with values $o_A$, $o_B$ and $o_C$).
The operations are then given by $P(a_O,o_A|a_I,i_A) = \delta_{o_A,a_I}\delta_{a_O,i_A}$, $P(b_O,o_B|b_I,i_B) = \delta_{o_B,b_I}\delta_{b_O,i_B}$ and $P(c_O,o_C|c_I,i_C) = \delta_{o_C,c_I}\delta_{c_O,i_C}$, and through Eq.~\eqref{eq:Poooiii} we obtain the correlation $P(o_A,o_B,o_C|i_A,i_B,i_C) = \delta_{o_A,\neg i_B \land i_C}\delta_{o_B,\neg i_C \land i_A}\delta_{o_C,\neg i_A \land i_B}$, which violates causal inequalities. 


\section*{Supplementary Note 5---Causal inequality assumptions imply causal correlations}
\label{app:causal_correlations}

In this section, we are going to prove that the conditions~\eqref{eq:sum_causalStructure}--\eqref{eq:closedLab3} from the section \hyperref[app:methods_CI]{``Causal inequality assumptions''} imply that $P(o_A,o_B,o_C|i_A,i_B,i_C)$ must respect causal inequalities.
For most generality, we consider the general multipartite case, that is, we consider a set of variables $\Gamma \coloneqq \{X_I,X_O,I_X,O_X\}_{X = A,B,C,\ldots}$ for an arbitrary number of parties $\{A,B,C,\ldots\}$, and a probability distribution $P(o_A,o_B,o_C,\ldots,\kappa(\Gamma)|i_A,i_B,i_C,\ldots)$, with $\kappa(\Gamma)$ the possible SPOs on the set $\Gamma$, that satisfies the straightforward multipartite generalisation of the constraints~\eqref{eq:sum_causalStructure}--\eqref{eq:closedLab3}.
As in the main text, when we write probabilities for some constraint on the causal order, this is to be understood as the probability obtained by summing over all $\kappa(\Gamma)$ that satisfy the respective constraint.
(For instance, $P(I_{\Lambda^{(0)}} \prec \Lambda^{(0)}_I|i_A,i_B,i_C,\ldots)$ below is obtained by summing $P(o_A,o_B,o_C,\ldots,\kappa(\Gamma)|i_A,i_B,i_C,\ldots)$ over all $\kappa(\Gamma)$ that satisfy $I_{\Lambda^{(0)}} \prec \Lambda^{(0)}_I$, as well as over all outcomes.)

The idea of the proof is that, to each $\kappa(\Gamma)$, we associate a ``coarse-grained'' SPO $\C_{\kappa(\Gamma)}$ on the set of parties $\{A,B,C,\ldots \}$, and show that, with respect to this coarse-grained SPO, 
the correlation $P(o_A,o_B,o_C,\ldots|i_A,i_B,i_C,\ldots)$ satisfies the definition of ``causal correlation'' as introduced in Ref.~\cite{oreshkov16} (also referred to as ``causal process'' in Ref.~\cite{oreshkov16}), which means that it must respect causal inequalities.
Let us first recall this definition.

\medskip

\textbf{Causal correlation (Definition from Ref.~\cite{oreshkov16}).} \textit{A correlation $P(o_A,o_B,o_C,\ldots|i_A,i_B,i_C,\ldots)$ for a set of ``local experiments'' $\{A,B,C,\ldots\} \eqqcolon \Delta$, is \emph{causal} if and only if there exists a probability distribution 
\begin{equation}
P(o_A,o_B,o_C,\ldots,\xi(\Delta)|i_A,i_B,i_C,\ldots),
\end{equation}
where $\xi(\Delta)$ are the possible SPOs on the set $\Delta$, such that 
     \begin{equation}
     \label{eq:causal_process_def_1}
        \sum_{\xi(\Delta)} P(o_A,o_B,o_C,\ldots,\xi(\Delta)|i_A,i_B,i_C,\ldots) = P(o_A,o_B,o_C,\ldots|i_A,i_B,i_C, \ldots) 
    \end{equation}
and such that, for every local experiment (e.g., $A$), for every subset $\X$ of the rest of the local experiments, and for every SPO $\xi(\{A\} \cup \X)$ on the local experiment in question and that subset,
\begin{equation}
\label{eq:causal_process_def_2}
    P(o^\X, A \npreceq \X, \xi(\{A\} \cup \X)|i_A,i_B,i_C,\ldots) =  P(o^\X, A \npreceq \X, \xi(\{A\} \cup \X)|i_B,i_C,\ldots).
\end{equation}
}

That is, in Eq.~\eqref{eq:causal_process_def_2}, one considers the probability 
for all local experiments in $\X$ to be outside of the causal future of $A$, to display specific outcomes $o^{\X}$ and to have a specific causal order $\xi(\{A\} \cup \X)$ together with $A$.  
This probability is required to be independent of the setting of the local experiment $A$.

\medskip

We thus consider the coarse-grained set $\{A,B,C,\ldots \} = \Delta$, and first define a binary relation $\C_{\kappa(\Gamma)}$ on this set, for any given $\kappa(\Gamma)$, as follows: $X \C_{\kappa(\Gamma)} Y$ if and only if there exist some parties $\Lambda^{(1)}, \Lambda^{(2)}$, $\ldots$, $\Lambda^{(M)} \in \Delta$, $M \ge 0$, such that  $X_O \prec \Lambda_I^{(1)}$,  $\Lambda_O^{(i)} \prec \Lambda_I^{(i+1)}$ for $1 \le i \le M-1$, and $\Lambda_O^{(M)} \prec Y_I$. 
(We also allow for $M = 0$, which corresponds to the case where $X_O \prec Y_I$, without any ``intermediate'' $\Lambda$.)

As a first step, we show that, for all $\kappa(\Gamma)$ that occur with non-zero probability, this coarse-grained relation $\C_{\kappa(\Gamma)}$ is indeed a SPO on the set $\{A,B,C,\ldots \}$.
That is, we prove the following.

\begin{proposition}
$P(o_A,o_B,o_C,\ldots,\kappa(\Gamma)|i_A,i_B,i_C,\ldots) > 0$ only if $\C_{\kappa(\Gamma)}$ is a SPO on the set of coarse-grained variables $\Delta$.
\end{proposition}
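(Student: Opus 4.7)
I verify the two defining properties of a strict partial order---transitivity and irreflexivity---for $\C_{\kappa(\Gamma)}$, assuming $\kappa(\Gamma)$ has positive probability for some setting $\vec{i}=(i_A,i_B,\ldots)$ (so that $\kappa(\Gamma)$ is a genuine SPO on $\Gamma$). Transitivity is essentially combinatorial: if $X \C_\kappa Y$ via a sequence $X_O \prec \Lambda_I^{(1)},\ \Lambda_O^{(1)} \prec \Lambda_I^{(2)},\ldots,\Lambda_O^{(M)} \prec Y_I$ and $Y \C_\kappa Z$ via $Y_O \prec \Psi_I^{(1)},\ldots,\Psi_O^{(N)} \prec Z_I$, concatenating through $Y$ yields $\Lambda^{(1)},\ldots,\Lambda^{(M)},Y,\Psi^{(1)},\ldots,\Psi^{(N)}$, whose new consecutive-pair conditions at the splice are precisely $\Lambda_O^{(M)} \prec Y_I$ and $Y_O \prec \Psi_I^{(1)}$---both already given. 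No free-choice or closed-laboratory input is needed.

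\textbf{Irreflexivity: induction on chain length.} For the harder part I use strong induction on $M$, aiming to show that for every $M\ge 0$, every party $X$ and every $\vec{i}$, the probability that $\kappa$ realises a chain $X_O \prec \Lambda_I^{(1)},\ldots,\Lambda_O^{(M)} \prec X_I$ vanishes. The base case $M=0$ reduces to $X_O \prec X_I$, equivalent by closed-laboratory part $i)$ to $X_I \notin \P_{I_X} \cup \E_{I_X}$. Summing the free-choice identity Eq.~\eqref{eq:free_choice} over all $(\Y,\Z)$ with $X_I \notin \Y \cup \Z$ makes this probability independent of $i_X$. At $i_X = i_X^*$, Eq.~\eqref{eq:closedLab3} guarantees $X_I \prec X_O$, and asymmetry of $\prec$ then excludes $X_O \prec X_I$; hence the probability vanishes at $i_X^*$ and therefore for every $i_X$.

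\textbf{Inductive step.} For $M\ge 1$, the induction hypothesis lets me restrict to $\kappa$ in which the length-$M$ chain is \emph{minimal} (any $\kappa$ with positive probability containing a strictly shorter chain is already killed). A case analysis from the perspective of $I_{\Lambda^{(1)}}$ then pins down every chain variable: $\Lambda_O^{(1)} \in \F_{I_{\Lambda^{(1)}}}$ is automatic, $\Lambda_I^{(2)} \in \F_{I_{\Lambda^{(1)}}}$ follows from the chain condition $\Lambda_O^{(1)} \prec \Lambda_I^{(2)}$, and each of $X_O$, $\Lambda_I^{(1)}$, the remaining $\Lambda_O^{(j)}$ and $\Lambda_I^{(j)}$, and (for $M\ge 2$) $X_I$ must lie in $\P_{I_{\Lambda^{(1)}}}\cup\E_{I_{\Lambda^{(1)}}}$: any alternative placement would, by transitivity of $\prec$, produce a strictly shorter chain from $X$ back to $X$, ruled out either by the base case applied to $\Lambda^{(1)}$ or by the induction hypothesis. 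Once this is in place, the full chain event is determined by the triple $(\P_{I_{\Lambda^{(1)}}},\E_{I_{\Lambda^{(1)}}},\kappa|_{\P_{I_{\Lambda^{(1)}}}\cup\E_{I_{\Lambda^{(1)}}}})$, and Eq.~\eqref{eq:free_choice} declares its probability $i_{\Lambda^{(1)}}$-independent. At $i_{\Lambda^{(1)}} = i_{\Lambda^{(1)}}^*$, Eq.~\eqref{eq:closedLab3} supplies $\Lambda_I^{(1)} \prec \Lambda_O^{(1)}$, and transitivity of $\prec$ collapses the chain to length $M-1$ (to $X_O \prec \Lambda_I^{(2)},\ \Lambda_O^{(2)} \prec \Lambda_I^{(3)},\ldots,\Lambda_O^{(M)} \prec X_I$, or to $X_O \prec X_I$ when $M=1$), which the induction hypothesis kills.

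\textbf{Main obstacle.} The delicate point is the expressibility step in the inductive argument: one must check that, under minimality and positive probability, \emph{every} placement of \emph{every} chain variable is forced by closed-laboratory together with the induction hypothesis, so that the chain event genuinely reduces to a condition on the free-choice-controlled marginal. I expect this case-by-case bookkeeping---in particular verifying that closed-laboratory parts $i)$ and $ii)$ jointly seal off each alternative placement---to be the main source of subtleties in the full write-up.
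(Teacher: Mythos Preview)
Your proposal is correct and follows essentially the same strategy as the paper's proof: transitivity is immediate, and irreflexivity is established by induction on the length of the putative cycle, using the closed-laboratory assumption to translate the chain conditions into statements about $\P_{I_\Lambda}\cup\E_{I_\Lambda}$ for a chosen intermediate party $\Lambda$, invoking free choice (Eq.~\eqref{eq:free_choice}) to make the relevant event $i_\Lambda$-independent, and then using Eq.~\eqref{eq:closedLab3} at $i_\Lambda^*$ to collapse the chain by one step and appeal to the induction hypothesis. The only difference is cosmetic: you anchor the free-choice step at the \emph{first} intermediate party $\Lambda^{(1)}$, whereas the paper anchors it at the \emph{last} one ($\Lambda^{(M+1)}$ in their labelling); by the cyclic symmetry of the chain these choices are equivalent, and the case analysis pinning down the placement of the chain variables (your ``minimality'' reduction) mirrors the paper's Case~1/Case~2 split.
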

\begin{proof}
It is straightforward to check that $\C_{\kappa(\Gamma)}$ is transitive.
We thus need to show its irreflexivity for all $\kappa(\Gamma)$ associated with a non-zero probability.
$\C_{\kappa(\Gamma)}$ is reflexive if and only if there exist $\Lambda^{(0)}, \Lambda^{(1)}$, $\ldots$, $\Lambda^{(M)} \in \Delta$, $M \ge 0$, such that $\Lambda_O^{(i)} \prec \Lambda_I^{(i+1)}$ for $0 \le i \le M - 1$ and $\Lambda^{(M)}_O \prec \Lambda_I^{(0)}$.
We prove by induction over $M$ that, for any $\kappa(\Gamma)$ for which such $\Lambda^{(0)}, \Lambda^{(1)}$, $\ldots$, $\Lambda^{(M)}$ exist, the probabilities $P(o_A,o_B,o_C,\ldots,\kappa(\Gamma)|i_A,i_B,i_C,\ldots)$ must be zero. 

We start by proving the claim for $M = 0$. 
Assume thus that, for the $\kappa(\Gamma)$ under consideration, we have $\Lambda^{(0)}_O \prec \Lambda^{(0)}_I$.
Since $I_{\Lambda^{(0)}} \prec \Lambda^{(0)}_O$ (as per condition~\eqref{eq:CL1} of the closed laboratory assumption in the main text), transitivity implies that, for this $\kappa(\Gamma)$, we have $I_{\Lambda^{(0)}} \prec \Lambda^{(0)}_I$. 
However, the probability $P(I_{\Lambda^{(0)}} \prec \Lambda^{(0)}_I|i_A,i_B,i_C,\ldots)$ (and therefore any probability $P(o_A,o_B,o_C,\ldots,\kappa(\Gamma)|i_A,i_B,i_C,\ldots)$ with $\kappa(\Gamma)$ satisfying $I_{\Lambda^{(0)}} \prec \Lambda^{(0)}_I$ that contributes to it) must be zero.
This follows from the fact that $P(I_{\Lambda^{(0)}} \prec \Lambda^{(0)}_I|i_A,i_B,i_C,\ldots)$ is independent of $i_{\Lambda^{(0)}}$ (since it is obtained by summing over various probabilities that are each individually independent of $i_{\Lambda^{(0)}}$ according to Eq.~\eqref{eq:free_choice} in the main text; namely, to obtain $P(I_{\Lambda^{(0)}} \prec \Lambda^{(0)}_I|i_A,i_B,i_C,\ldots)$ one sums the constraint of Eq.~\eqref{eq:free_choice} corresponding to $I_{\Lambda^{(0)}}$ over all $\Y,\Z$ such that $\Lambda^{(0)}_I \notin \Y \cup \Z$, all causal orders on the respective $\Y \cup \Z$, and all outcomes in the respective $\Y \cup \Z$), and $P(I_{\Lambda^{(0)}} \prec \Lambda^{(0)}_I|i_A,i_B,i_C,\ldots) = 0$ for the value  $i^*_{\Lambda^{(0)}}$ of $I_{\Lambda^{(0)}}$ (since, for $i^*_{\Lambda^{(0)}}$, we have $\Lambda^{(0)}_I \prec \Lambda^{(0)}_O$ with certainty, which is incompatible with $I_{\Lambda^{(0)}} \prec \Lambda^{(0)}_I$, as the latter would imply $\Lambda^{(0)}_O \prec \Lambda^{(0)}_I$ due to condition~\eqref{eq:CL1} in the main text).

Assume then that the claim holds up to some given $M \ge 0$---i.e., that whenever $\kappa(\Gamma)$ is such that some $\Lambda^{(0)}$, $\ldots$, $\Lambda^{(M)}$ with the prescribed properties exist (starting from any element of $\Delta$), then $P(o_A,o_B,o_C,\ldots,\kappa(\Gamma)|i_A,i_B,i_C,\ldots) = 0$. 
We will prove that the claim then also holds for $M+1$. 
We thus consider a $\kappa(\Gamma)$ such that there exist $\Lambda^{(0)}$, $\ldots$, $\Lambda^{(M + 1)}$, for which $\Lambda_O^{(i)} \prec \Lambda_I^{(i+1)}$ for $0 \le i \le M$ and $\Lambda^{(M + 1)}_O \prec \Lambda^{(0)}_I$.
We consider the sets of variables $\{\Lambda_I^{(i)}| 1 \le i \le M + 1\} \eqqcolon \Omega_I$ and $\{\Lambda_O^{(i)}| 0 \le i \le M\} \eqqcolon \Omega_O$, and distinguish between two possible cases.
First, consider the case where any of the variables in $\Omega_I \cup \Omega_O$ is in the causal future of $I_{\Lambda^{(M+1)}}$.
If $I_{\Lambda^{(M+1)}} \prec \Lambda_I^{(j)}$ for some $\Lambda_I^{(j)} \in \Omega_I$ (i.e. for some $j$ with $1 \le j \le M+1$), the closed laboratory assumption implies that $\Lambda^{(M+1)}_O \prec \Lambda_I^{(j)}$. Therefore, we have $\Lambda_O^{(j)} \prec \Lambda_I^{(j+1)}$, $\Lambda_O^{(j+1)} \prec \Lambda_I^{(j+2)}$, $\ldots$, $\Lambda^{(M+1)}_O \prec \Lambda_I^{(j)}$, and thus $\Lambda^{(j)} \C_{\kappa(\Gamma)} \Lambda^{(j)}$ with a number of intermediate $\Lambda$ strictly smaller than $M+1$, which is associated to a probability of zero by assumption.
And if $I_{\Lambda^{(M+1)}} \prec \Lambda_O^{(j)}$ for some $\Lambda_O^{(j)} \in \Omega_O$ (i.e. for some $j$ with $0 \le j \le M$), we have that $I_{\Lambda^{(M+1)}} \prec \Lambda_I^{(j+1)}$ because of transitivity.

In the second case, all variables in the sets $\Omega_I \cup \Omega_O$ are in the causal past or elsewhere of $I_{\Lambda^{(M+1)}}$. Therefore, $\kappa(\Gamma)$ contributes to the probability 
\begin{align}
\label{eq:pCycle}
&P(\Omega_I \cup \Omega_O \subseteq \P_{I_{\Lambda^{(M+1)}}} \cup \E_{I_{\Lambda^{(M+1)}}}, \Lambda_O^{(i)} \prec \Lambda_I^{(i+1)} \ \text{for} \ 0 \le i \le M, I_{\Lambda^{(M+1)}} \prec \Lambda^{(0)}_I|i_A,i_B,i_C,\ldots).
\end{align}
This probability is again independent of $i_{\Lambda^{(M+1)}}$, and zero for $i^*_{\Lambda^{(M+1)}}$, implying that it (and therefore also the probability for any $\kappa(\Gamma)$ contributing to it) is always zero.
Therefore, it follows that $P(o_A,o_B,o_C,\ldots,\kappa(\Gamma)|i_A,i_B,i_C,\ldots)$ is zero also in this case with $M + 1$.

To see that the probability in Eq.~\eqref{eq:pCycle} is indeed independent of $i_{\Lambda^{(M+1)}}$, note that it is again obtained by summing over various probabilities that are independent of $i_{\Lambda^{(M+1)}}$ according to Eq.~\eqref{eq:free_choice} in the main text. 
Namely, one considers the constraint of Eq.~\eqref{eq:free_choice}	 corresponding to $I_{\Lambda^{(M+1)}}$, and sums it over all $\Y,\Z$ such that $\Omega_I \cup \Omega_O \subseteq \Y \cup \Z$ and $\Lambda^{(0)}_I \notin \Y \cup \Z$, as well as over all causal orders on the respective $\Y \cup \Z$ that satisfy $\Lambda_O^{(i)} \prec \Lambda_I^{(i+1)} \ \text{for} \ 0 \le i \le M$, and over all outcomes in the respective $\Y \cup \Z$.

To see that the probability in Eq.~\eqref{eq:pCycle} is indeed zero for $i_{\Lambda^{(M+1)}}^*$, note that, for $i_{\Lambda^{(M+1)}}^*$, we have $\Lambda_I^{(M+1)} \prec \Lambda_O^{(M+1)}$, and thus, because of $\Lambda_O^{(M)} \prec \Lambda_I^{(M+1)}$, $\Lambda_O^{(M+1)} \prec \Lambda^{(0)}_I$ and transitivity, $\Lambda_O^{(M)} \prec \Lambda^{(0)}_I$.
Therefore, we have $\Lambda^{(0)} \C_{\kappa(\Gamma)} \Lambda^{(0)}$ with a number $M$ of intermediate $\Lambda$, which by assumption, implies that the probability associated to $\kappa(\Gamma)$ is zero.
\end{proof}

\medskip

The second step is then to prove that the probability $P(o_A,o_B,o_C,\ldots|i_A,i_B,i_C,\ldots)$ satisfies the definition of \emph{causal correlation} from Ref.~\cite{oreshkov16}, which we recalled above.

\begin{proposition}
$P(o_A,o_B,o_C,\ldots|i_A,i_B,i_C,\ldots)$ is a causal correlation, with the underlying probability distribution $P(o_A,o_B,o_C,\ldots,\xi(\Delta)|i_A,i_B,i_C,\ldots)$ associated to each SPO $\xi(\Delta)$ obtained by summing over all $\kappa(\Gamma)$ for which $\C_{\kappa(\Gamma)} = \xi(\Delta)$:
    \begin{equation}
        P(o_A,o_B,o_C,\ldots,\xi(\Delta)|i_A,i_B,i_C,\ldots) = \sum_{\{\kappa(\Gamma)|\C_{\kappa(\Gamma)} = \xi(\Delta)\}} P(o_A,o_B,o_C,\ldots,\kappa(\Gamma)|i_A,i_B,i_C,\ldots).
    \end{equation} 
\end{proposition}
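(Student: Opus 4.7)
\emph{Proof plan.} I will verify the two defining properties of a causal correlation in turn. Property~\eqref{eq:causal_process_def_1} is immediate from the previous proposition, which guarantees that $P(o_A, o_B, o_C, \ldots, \kappa(\Gamma)|i_A, i_B, i_C, \ldots) > 0$ only for $\kappa(\Gamma)$ whose coarse-grained relation $\C_{\kappa(\Gamma)}$ is already an SPO on $\Delta$; hence the sum over the SPOs $\xi(\Delta)$ of $P(o_A, \ldots, \xi(\Delta)|\ldots)$ coincides with the unrestricted sum over $\kappa(\Gamma)$ and, by~\eqref{eq:sum_causalStructure}, recovers $P(o_A, \ldots|i_A, \ldots)$.

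For property~\eqref{eq:causal_process_def_2}, I fix a party $A \in \Delta$, a subset $\X \subseteq \Delta \setminus \{A\}$, and an SPO $\xi(\{A\} \cup \X)$ with $A \npreceq X$ for every $X \in \X$, and aim to show that $P(o^\X, A \npreceq \X, \xi(\{A\} \cup \X)|i_A, \ldots)$ does not depend on $i_A$. My strategy is to partition the $\kappa(\Gamma)$ that contribute to this probability by the tuple $(\Y, \Z, \kappa_0, \bar o^{\Y \cup \Z})$, where $\Y \coloneqq \P_{I_A}$, $\Z \coloneqq \E_{I_A}$, $\kappa_0$ is the restriction of $\kappa(\Gamma)$ to $\Y \cup \Z$, and $\bar o^{\Y \cup \Z}$ collects the outcomes of the $O$-variables lying in $\Y \cup \Z$. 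The free-choice assumption~\eqref{eq:free_choice} already makes the joint probability of each such tuple independent of $i_A$, so it will suffice to show that the indicator of the target event factors through this tuple.

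The crux is the following structural claim, obtained by combining the closed-laboratory condition~\eqref{eq:CL1} with the irreflexivity of $\C_{\kappa(\Gamma)}$ just established: for any nonzero-probability $\kappa(\Gamma)$ with $A \npreceq_{\C} X$ for every $X \in \X$, (a) the variables $A_I$, and $X_I$, $O_X$ for each $X \in \X$, all lie in $\P_{I_A} \cup \E_{I_A}$, and (b) any chain $X_O \prec \Lambda^{(1)}_I, \Lambda^{(1)}_O \prec \Lambda^{(2)}_I, \ldots, \Lambda^{(M)}_O \prec Y_I$ witnessing a coarse relation $X \C_{\kappa(\Gamma)} Y$ with $X \in \X$ and $Y \in \{A\} \cup \X$ stays entirely within $\P_{I_A} \cup \E_{I_A}$. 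Both assertions rest on the observation, built into~\eqref{eq:CL1}(i), that every variable in $\F_{I_A}$ is preceded by $A_O$: any chain meeting $\F_{I_A}$ can be \emph{re-rooted} at $A_O$ to yield either a forbidden chain $A \C_{\kappa(\Gamma)} Y$ with $Y \in \X$ (contradicting $A \npreceq_{\C} Y$) or a self-chain $A \C_{\kappa(\Gamma)} A$ (contradicting irreflexivity). Combining (a) and (b) forces the indicator of the target event to depend on $\kappa(\Gamma)$ only through $(\Y, \Z, \kappa_0, \bar o^{\Y \cup \Z})$, and~\eqref{eq:free_choice} then delivers the $i_A$-independence of the full sum; the analogous argument for every other party completes the verification.

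The main obstacle I anticipate is to rule out the ``hidden'' possibility that a given tuple $(\Y, \Z, \kappa_0, \bar o^{\Y \cup \Z})$ compatible with (a) could still be reached by some $\kappa(\Gamma)$ violating $A \npreceq_{\C} X$ through an irreducible chain $A_O \prec \Lambda^{(1)}_I, \Lambda^{(1)}_O \prec \cdots \prec X_I$ whose only $\F_{I_A}$-variable is $\Lambda^{(1)}_I$. Because $\C_{\kappa(\Gamma)}$ stitches only ``out-to-in'' precedences and skips the ``in-to-out'' ones, the simple re-rooting trick does not dispatch such chains directly; instead, one must invoke the weakened condition $\Lambda^{(1)}_I \prec \Lambda^{(1)}_O$ of~\eqref{eq:closedLab3} at the distinguished setting $i_{\Lambda^{(1)}}^*$ to force $\Lambda^{(1)}_O \in \F_{I_A}$ and thereby contradict $\Lambda^{(1)}_O \in \P_{I_A} \cup \E_{I_A}$, and then propagate the resulting zero-probability conclusion to all values of $i_{\Lambda^{(1)}}$ via free choice for $I_{\Lambda^{(1)}}$, mirroring the inductive step in the proof of the preceding proposition.
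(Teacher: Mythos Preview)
Your overall strategy---partitioning the contributing $\kappa(\Gamma)$ by the tuple $(\P_{I_A},\E_{I_A},\kappa_0,\bar o)$ and then invoking free choice~\eqref{eq:free_choice}---is exactly the paper's approach; its two ``Observations'' play the role of your claims~(a) and~(b). The difference lies in how you handle the point you flag as an obstacle.

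Your claims~(a) and~(b) are consequences of the target event holding, and~(b) only concerns chains emanating from some $X\in\X$; together they show that, \emph{given} constraint~(i) holds, constraint~(ii) is determined by the tuple. What they do not show is that constraint~(i) itself---the nonexistence of a chain $A_O\prec\Lambda^{(1)}_I,\ \Lambda^{(1)}_O\prec\cdots\prec X_I$---is determined by the tuple. You correctly spot this gap, but your proposed fix (force $\Lambda^{(1)}_O\in\F_{I_A}$ at the special setting $i^*_{\Lambda^{(1)}}$ and propagate via free choice for $I_{\Lambda^{(1)}}$) is both unnecessary and does not work as written: free choice for $I_{\Lambda^{(1)}}$ constrains only probabilities expressible through $\P_{I_{\Lambda^{(1)}}}\cup\E_{I_{\Lambda^{(1)}}}$ and the order thereon, not the tuple associated with $I_A$; and the ``bad'' $\kappa$'s you try to annihilate need not have zero probability at all---they simply correspond to tuples on which constraint~(i) fails \emph{for every} $\kappa$ with that tuple, hence contribute nothing to the target probability.

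The paper's resolution is direct. By~\eqref{eq:CL1}(i) one has $A_O\prec\Lambda^{(1)}_I$ iff $I_A\prec\Lambda^{(1)}_I$, i.e.\ iff $\Lambda^{(1)}_I\notin\Y\cup\Z$; and after the very shortening/re-rooting you describe, every remaining link $\Lambda^{(j)}_O\prec\Lambda^{(j+1)}_I$ (and the terminal $\Lambda^{(M)}_O\prec X_I$) involves only variables in $\Y\cup\Z$. Hence the existence of a violating chain can be read off entirely from $(\Y,\Z,\kappa_0)$, so the indicator of constraint~(i) already factors through the tuple---no appeal to~\eqref{eq:closedLab3} or to a second free-choice argument is needed at this step.
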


\begin{proof}
It is clear that this distribution satisfies Eq.~\eqref{eq:causal_process_def_1}, since summing $P(o_A,o_B,o_C,\ldots,\xi(\Delta)|i_A,i_B,i_C,\ldots)$ over all $\xi(\Delta)$ is equivalent to summing $P(o_A,o_B,o_C,\ldots,\kappa(\Gamma)|i_A,i_B,i_C,\ldots)$ over all $\kappa(\Gamma)$.
It therefore remains to prove that this distribution satisfies Eq.~\eqref{eq:causal_process_def_2}.

In order to obtain the probability in Eq.~\eqref{eq:causal_process_def_2}, for a given $A$, $\X$ and $\xi(\{A\} \cup \X)$, one needs to sum up the probabilities $P(o_A,o_B,o_C,\ldots,\kappa(\Gamma)|i_A,i_B,i_C,\ldots)$ over all $o_X$ with $X \notin \X$, as well as over all ``fine-grained'' orders $\kappa(\Gamma)$ for which the ``coarse-grained'' order $\C_{\kappa(\Gamma)}$ satisfies the respective conditions---that is, with respect to $\C_{\kappa(\Gamma)}$, 
$(i)$ $A \npreceq \X$, and $(ii)$ the restriction of the coarse-grained order to $\{A\} \cup \X$ is precisely $\xi(\{A\} \cup \X)$. 
Expressed as conditions on the fine-grained $\kappa(\Gamma)$, this means that $\kappa(\Gamma)$ has to satisfy the following two constraints. 
\begin{align}
\label{eq:kappa_constraints}
&(i) \ \forall \ X \in \X: (A_O \npreceq X_I) \text{ and } (\nexists \ \Lambda^{(1)},\ldots,\Lambda^{(M)}  \text{ s.t. }  A_O \prec \Lambda^{(1)}_I, \Lambda^{(i)}_O \prec \Lambda^{(i+1)}_I \text{ for } 1 \le i \le M - 1, \text{ and } \Lambda^{(M)}_O \prec X_I) \notag \\
&(ii) \ \forall \ X, Y \in \{A\} \cup \X: [X \prec Y \text{ w.r.t } \xi(\{A\} \cup \X)] \notag \\
&\hspace{10mm} \Leftrightarrow [(X_O \prec Y_I) \text{ or }  (\exists \ \Lambda^{(1)}, \ldots, \Lambda^{(M)} \text{ s.t. } X_O \prec \Lambda^{(1)}_I, \Lambda^{(i)}_O \prec \Lambda^{(i+1)}_I \text{ for } 1 \le i \le M - 1, \text{ and } \Lambda^{(M)}_O \prec Y_I)].
\end{align}

To proceed, we note the following two observations. 
\begin{observation}
\label{obs:observation1}
For any given $\kappa(\Gamma)$,  whether or not the two constraints $(i)$ and $(ii)$ in Eq.~\eqref{eq:kappa_constraints} are satisfied is completely determined by the sets $\P_{I_A}$ and $\E_{I_A}$, and the causal order on $\P_{I_A} \cup \E_{I_A}$.
\end{observation}
\begin{observation}
\label{obs:observation2}
For all $\kappa(\Gamma)$ that satisfy the constraints $(i)$ and $(ii)$ in Eq.~\eqref{eq:kappa_constraints}, all $O_X$, $X \in \X$, are in the causal past or elsewhere of $I_A$.
\end{observation}

The proof of observation~\ref{obs:observation1} is given below.
Observation~\ref{obs:observation2} follows from $A_O \npreceq X_I$ and the closed laboratory assumption, which imply that $I_A \npreceq O_X$.

From the two observations~\ref{obs:observation1} and~\ref{obs:observation2}, it follows that the probability in Eq.~\eqref{eq:causal_process_def_2} can be obtained by summing over various probabilities which are each independent of $i_A$ according to Eq.~\eqref{eq:free_choice} in the main text.
Namely, to obtain the probability in Eq.~\eqref{eq:causal_process_def_2}, one sums the probability in Eq.~\eqref{eq:free_choice} over all $\Y$, $\Z$ and $\kappa(\Y \cup \Z)$ which are such that the constraints $(i)$ and $(ii)$ in Eq.~\eqref{eq:kappa_constraints} are indeed satisfied, and marginalises over all $O_X \in \Y \cup \Z$ with $X \notin \X$ (so that precisely the $O_X$ with $X \in \X$ remain).
\end{proof}
In the following, we prove Observation~\ref{obs:observation1} used above.

\begin{proof}[Proof of Observation~\ref{obs:observation1}]
In the constraint $(i)$, $A_O \npreceq X_I$ can equivalently be replaced by $I_A \npreceq X_I$, and $A_O \prec \Lambda^{(1)}_I$ can equivalently be replaced by $I_A \prec \Lambda^{(1)}_I$ due to the closed laboratory assumption. 
Also, in the constraint $(i)$, we may without loss of generality take all $\Lambda_O^{(i)}$ and $\Lambda_I^{(i+1)}$ with $1 \le i \le M - 1$ (and $\Lambda_O^{(M)}$) to be in the causal past or elsewhere of $I_A$. Namely, whenever constraint $(i)$ is violated for some $\Lambda^{(1)},\ldots,\Lambda^{(M)}$, and $I_A \prec \Lambda_I^{(j+1)}$ for some $j$ with $1 \le j \le M - 1$ (or $I_A \prec \Lambda_O^{(j)}$, which implies $I_A \prec \Lambda_I^{(j+1)}$ by transitivity), the constraint is also violated for $\Lambda^{(j+1)}$, $\ldots$, $\Lambda^{(M)}$, i.e., we can ``skip'' $\Lambda^{(1)}$, $\ldots$, $\Lambda^{(j)}$ (and if $I_A \prec \Lambda_O^{(M)}$, we would have $I_A \prec X_I$ and thus $A_O \prec X_I$ due to the closed laboratory assumption).
Therefore, whether $\kappa(\Gamma)$ satisfies the constraint $(i)$ or not is completely determined by what variables are contained in  $\P_{I_A}$ and $\E_{I_A}$, and the causal order on the subset $\P_{I_A} \cup \E_{I_A}$.

Furthermore, if constraint $(i)$ holds, the right-hand side of the constraint $(ii)$ (i.e. the part in the second square bracket) can only be true if all variables $X_O$, $Y_I$ and $\Lambda_I^{(i)}$, $\Lambda_O^{(i)}$ for $1 \le i \le M$ considered there are in the causal past or elsewhere of $I_A$ (otherwise, we would have $A \prec Y$).
Therefore, whether this right-hand side is true or not---and thus, whether $\kappa(\Gamma)$ satisfies the constraint $(ii)$ or not---is also completely determined by $\P_{I_A}$ and $\E_{I_A}$, and the causal order on $\P_{I_A} \cup \E_{I_A}$.
\end{proof}

\medskip

For completeness, we recall here a simpler characterisation of causal correlations, which is as follows~\cite{oreshkov16,abbott16}.

\medskip

\textbf{Causal correlation (alternative characterisation~\cite{oreshkov16,abbott16}).} \textit{For one local experiment (i.e. $\Delta = \{A\}$), any correlation $P(o_A|i_A)$ is causal. 
For multiple local experiments, a correlation $P(o_A,o_B,o_C,\ldots|i_A,i_B,i_C,\ldots)$ is causal if and only if it can be decomposed as 
\begin{equation}
P(o_A,o_B,o_C,\ldots|i_A,i_B,i_C,\ldots) = \sum_{X \in \Delta} q_X \, P_X(o_X|i_X) \, P_{X,i_X,o_X}(o_{\Delta \backslash \{X\}} | i_{\Delta \backslash \{X\}}), \label{eq:causal_correl}
\end{equation}
with $q_X\ge 0$, $\sum_{X \in \Delta} q_X = 1$, where (for each $X$) $P_X(o_X|i_X)$ is a single-partite (and hence causal) correlation and (for each $X,i_X,o_X$) $P_{X,i_X,o_X}(o_{\Delta \backslash \{X\}} | i_{\Delta \backslash \{X\}})$ is a causal correlation for the parties in $\Delta \backslash \{X\}$.\\
}

This characterisation can be intuitively interpreted as describing an iterative ``unravelling'' of the local experiments
in a sequence, where, with some probability, one local experiment occurs first, then, according to some probability which depends on the setting and outcome 
of this first local experiment, another local experiment occurs second, and so on~\cite{oreshkov16,abbott16}.


\section*{Supplementary Note 6---Which assumptions are violated?}
\label{app:CI_assumptions}
The conclusion that in an experiment that admits a description in terms of standard causal evolution in spacetime there exist physical variables which violate causal inequalities without manifestly violating the closed laboratory and free choice assumptions naturally raises the question of whether we could gain further insights into the way the causal inequality assumptions are violated. Is it possible that, in spite of the outlined considerations about the observable causal relations between the concerned variables, upon a closer inspection of the circuit in Fig.~\ref{fig:tripartite_example} we would find a sense in which the free choice or closed laboratory assumption is violated for these variables, rather than the existence of causal order \textit{per se}? In particular, the circuit describes a sequence of interactions between sets of systems that one may intuitively associate with the three different parties. This seems to violate the closed laboratory assumption, which essentially stipulates that each party is involved in a single round of information exchange, where they receive information about the past through the input system $X_I$ and subsequently send out information into the future through the output system $X_O$. 

It is crucial to realise that the causal inequality assumptions concern concrete variables, which in our case we have explicitly specified. As we will see, these variables are not the same as what one might intuitively assume if one thinks of this experiment as involving three laboratories existing through time that exchange information with each other (the parties Alice, Bob, and Charlie that operate on the time-delocalised process must be understood abstractly as agents who control the parameters that describe the operations taking place on the time-delocalised systems). The possibility of understanding the experiment in terms of other variables for which a causal order exists and for which the closed laboratory assumption might appear to be violated is not in contradiction with the claim that no such interpretation is available for the variables of interest.  
To investigate whether such an interpretation is available for these variables, first note that any claimed causal order on the variables should have operational grounds---otherwise it is always possible to imagine some fictitious causal order and a violation of the free choice or closed laboratory assumptions relative to it so as to ``explain'' the observed correlations. In the present scenario, we are unaware of any other operationally grounded notion of causal order that one could invoke apart from the one imposed by spacetime, which could be further constrained by the lack of physical interaction between specific variables at different times (as in when a laboratory is kept ``closed'' between the time of input and the time of output). We will therefore focus on the question of whether there could exist a compelling interpretation in which the variables we have identified can be seen as taking place at definite, although possibly random and dependent on other variables, spacetime locations such that the free choice or closed laboratory assumption is violated. We will argue that no such interpretation is supported by the spatiotemporal description of the experiment (Fig.~\ref{fig:tripartite_example}). On the contrary, a careful analysis of the link between the time-delocalised variables of interest and the time-local variables in the circuit leads to the conclusion that the time-delocalised variables we have identified cannot be interpreted as taking place at definite locations in the background spacetime, which is what the causal inequality violation witnesses. 

First, observe that the free choice assumption is trivially compatible with the structure of the circuit in Fig.~\ref{fig:tripartite_example}: the variables $I_X$ can all be defined at the initial time, so irrespectively of what spacetime locations we may attribute to the remaining variables (which would all be in the future), there is no reason why these variables could not be chosen freely. (Of course, no matter how we choose these variables in practice, one can never exclude the in-principle possibility that there is a hidden common cause for these variables and some of the other relevant variables, but this conspiratorial possibility is trivially always present and is obviously not suggested by the circuit.) 

To discuss the plausibility of the closed laboratory assumption, we need to first identify a reasonable candidate causal order on the variables, since this assumption is formulated in terms of a causal order. For the operations of Alice and Bob, there is a natural interpretation of the experiment as describing the occurrence of these operations at definite spacetime locations since these operations effectively take place within a ``classical switch''---each can be thought of as taking place at one of two possible times determined by the state of a control bit. This is compatible with the time-delocalised input and output variables $A_I$, $A_O$, $B_I$, $B_O$ (see Supplementary Note~\hyperref[app:tripartite_example_classical]{4~B}), which effectively reduce to the respective time-local input and output variables conditionally on the state of the control bit. As each of these effective operations is a standard time-localised operation from an input to an output system, which involves the ancillas of the respective party and constitutes the only interaction with those ancillas, the closed laboratory assumption for the respective party is manifestly respected. 

The operation $U_C$ of Charlie, however, is delocalised in time in a different way, which can intuitively be described as follows: the operation either happens at the beginning of the circuit (in the case when the controlled operation in the last stage is not triggered and hence that controlled operation can be effectively omitted), or it happens at the very end of the circuit (when the controlled operation in the last stage is triggered and it effectively results in ``undoing'' the effect of the first $U_C$ (by the action of $U_C^{\dagger}$) followed by applying the NOT gate $\sigma_\textsc{x}$ and then $U_C$ anew). Notice that if we interpret the operation of Charlie in this way, it is again a standard operation taking place at one of two possible times that manifestly respects the closed laboratory assumption. How is it then possible that we obtain a causal inequality violation? 

The answer is that the interpretation just outlined makes no operational sense, which also transpires in the fact that it contradicts the causal order of events in spacetime. Indeed, if we think that the operation of Charlie takes place at one of the two possible times, we must conclude that whether it happens at the earlier time or not (which is a variable associated with that time) depends on the state of the controlled bits at the end, which itself can be influenced by the operations of Alice and Bob. This would amount to influence by Alice and Bob on the past, which is in contradiction with spacetime causality. The error leading to this contradiction is in not recognizing that in order to say that a given variable such as $C_I$ or $C_O$ ``takes place'' at a given time, it must in principle be possible for an agent at the same time to know with certainty whether this is the case. This is clearly not possible at the first of the two candidate temporal locations for the operation $U_C$ since the state of the control bits is not yet known at that time. This is in contrast to the situation for $U_A$ or $U_B$ whose time depends on a variable in the past that can be known at the time of the operation. 

An analogous problem arises if one attempts any other obvious ``localisation'' of these variables, such as for instance the one suggested by the isomorphism in Fig.~\ref{fig:iso_bw}. In that case, $C_I$ could be thought of as effectively corresponding to $P_3$ at the initial time but up to a NOT gate that is controlled by variables in the future. Associating $C_I$ with the initial time would, however, be in contradiction with spacetime causality and, as before, makes no sense since $C_I$ cannot be known at that time. 

We now provide a general proof in the case when the process is treated as quantum, that the quantum input system $C_I$ cannot be effectively associated with definite times in principle. The operations of Alice and Bob will be interpreted as localised in time conditionally on the computational basis of $Q_1$ as discussed previously. 

Consider the case where each of the unitaries $U_A$ and $U_B$ is a SWAP operation (sending $X_I'$ to $X_O$ and $X_I$ to $X_O'$, $X=A,B$), and $U_C$ is the identity channel from $C_I$ to $C_O$ (hence no explicit ancilla for Charlie needs to be considered). This situation is depicted in Fig.~\ref{fig:AppF_Fig1}. 

\begin{figure}[h]
    \centering
    \includegraphics[width=0.7\textwidth]{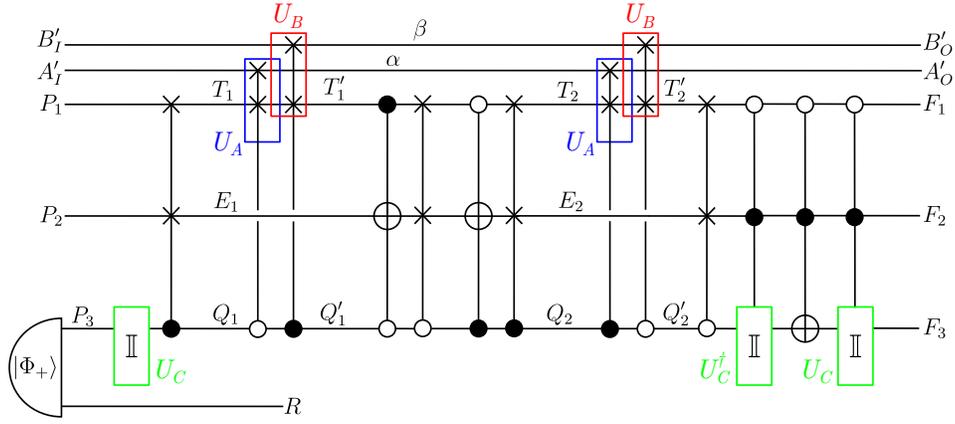}
    \caption{Temporal circuit of Fig.~\ref{fig:tripartite_example}, with SWAP operations for Alice and Bob, an identity channel for Charlie and an additional, time-local ``reference'' system $R$. The systems $P_3$ and $R$ are prepared in a maximally entangled state $\ket{\Phi_+}$ at the beginning of the circuit.}
    \label{fig:AppF_Fig1}
\end{figure}

As depicted in the figure, we also introduce an additional, time-local, ``reference'' system $R$ at the beginning of the circuit. Since $R$ is separate from $C_I$, the joint Hilbert space of $R$ and $C_I$ is $\mathcal{H}^{C_I R} = \mathcal{H}^{C_I} \otimes \mathcal{H}^{R}$. (Note that arbitrary preparations and measurements could be applied on the system $\mathcal{H}^{C_I} \otimes \mathcal{H}^{R}$ by trivially extending the previously described procedures for $C_I$ onto $R$, hence this is an operationally meaningful Hilbert space.)  Let the joint system $P_3 R$ be initially prepared in the state $\ket{\Phi_+} = (\ket{00} + \ket{11})/\sqrt{2}$ (Fig.~\ref{fig:AppF_Fig1}). Considering the precise definition of $C_I$ through the isomorphism in Fig.~\hyperref[fig:iso_bw]{18(a)}, we will now argue that $C_I$ cannot be consistently associated with any definite time since no quantum system at a definite time contains the correct information implied by this definition. 

Let us restrict our attention to the cases where the systems $A_I'$ and $B_I'$ are each prepared in one of the computational basis states $\ket{0}$ or $\ket{1}$. Through the SWAP operations performed by Alice and Bob, these states are transferred onto $A_O$ and $B_O$, respectively. Now, according to the isomorphism in Fig.~\hyperref[fig:iso_bw]{18(a)}, for any such combination of these states, which could be freely chosen by Alice and Bob, the state on the system $C_I$ reduces to the state of $P_3$, or to the state of $P_3$ up to a NOT gate $\sigma_\textsc{x}$. This property holds for the full operator algebras on the respective systems and thus remains true also when we regard $C_I$ and $P_3$ as parts of larger systems that include $R$: depending on the computational basis states output by Alice and Bob, $C_IR$ reduces to $P_3R$ or $P_3 R$ up to a NOT gate on $P_3$. In our case, this means that the state on $C_IR$ is either $(\ket{00} + \ket{11})/\sqrt{2}$ or $(\ket{10} + \ket{01})/\sqrt{2}$. Furthermore, by choosing the combination of basis states that they output, Alice and Bob can fully determine which of these two orthogonal states is received on $C_I R$. But since these two states are perfectly distinguishable, this means that we have perfect signalling from $A_I'$ and $B_I'$ to $C_IR$. This, in turn, means that $C_I$ could not possibly exist prior to the time of the controlled operations of Alice and Bob in the circuit, even with small nonzero probability (unless it is defined as a subsystem that overlaps with $A_I'$ and $B_I'$, which however would contradict the premise that it is a subsystem separate from $A_I'$ and $B_I'$). Indeed, let us assume that with some nonzero probability $C_I$ can be associated with such prior times. For the same reasons as explained in the previous examples, whether this is the case or not cannot depend on the choices of Alice and Bob that are only used in the future, since whether the variables ``take place'' at a given time must be possible to know at the respective time. But then in the hypothetical cases in which $C_I$ takes place prior to the operations of Alice and Bob, Alice and Bob could make different choices altering the state of $C_I R$, which is in contradiction with spacetime causality. Therefore, $C_I$ could not be associated with any time prior to the operations of Alice and Bob. 

We will now show that $C_I$ cannot be consistently associated with only later times either. Let us consider the case in which $A_I'$ is prepared in the state $\ket{1}$, $B_I'$ in the state $\ket{0}$, $P_1$ in the state $\ket{0}$ and $P_2$ in the state $\ket{1}$. In this case, it is straightforward to verify that the controlled operations at the end of the circuit are not triggered and the state on $C_I R$ is $(\ket{00} + \ket{11})/\sqrt{2}$. Since the state of $R$ is purified on $C_I$, if $C_I$ exists at any time after the beginning of the operations of Alice and Bob, it must be possible to find the purification of $R$ at this time. However, if we track this information in time, we see that it is ``lost'' as soon as the first controlled operation of Alice or Bob happens. Indeed, following the evolution of the computational basis it is straightforward to verify that the controlled-SWAP gate immediately after the first gate $U_C$ correlates $R$ with $T_1$ in the computational basis, and this correlation is propagated onto the ancillas of Alice and Bob by the first controlled operations of Alice and Bob, remaining there until the end of the circuit. (Note that for the natural time localisation of the operations of Alice and Bob assumed here, the ancillary wire $\alpha$ coincides with $A_I'$ or $A_O'$ depending on the time at which $U_A$ takes place, and similarly $\beta$ coincides with $B_I'$ or $B_O'$. Since by definition $C_I$ is separate from these ancillary systems, it cannot overlap with them.)

To summarise, we have shown that the time-delocalised quantum subsystem $C_I$ that we have identified cannot be effectively localised in time, since there exists no time-local subsystem that contains the correct information as per the definition of $C_I$. For this argument, it was essential that we treated the system as a quantum system as we used the property that quantum information cannot be copied in order to argue that no subsystems could exists that contains the required quantum information at any given time. This argument does not automatically imply that the classical variable corresponding to the computational basis of $C_I$ could not be effectively localised. Indeed, in the example we considered, a copy of this variable remains available through time and thus this variable in principle could be declared associated with the future. (As in the quantum case, we can rule out the possibility that this variable takes place in the past even with small probability since Alice and Bob can fully determine its value.) However, if we assume that the classical variable $C_I$, however localised in time, should not overlap with any ancillary systems that could be introduced for Charlie, we can easily perform a version of the quantum argument leading to the same conclusion: we now introduce an ancilla for Charlie and let Charlie perform a SWAP operation, which already at the start of the circuit transfers the value of $P_3$ onto this ancilla (see Fig.~\ref{fig:AppF_Fig2}). This makes it impossible to find a system that contains the correct information expected to live on $C_I$ at later times only, except if that subsystem is defined to overlap with the ancillas.  

\begin{figure}[h]
    \centering
    \includegraphics[width=0.7\textwidth]{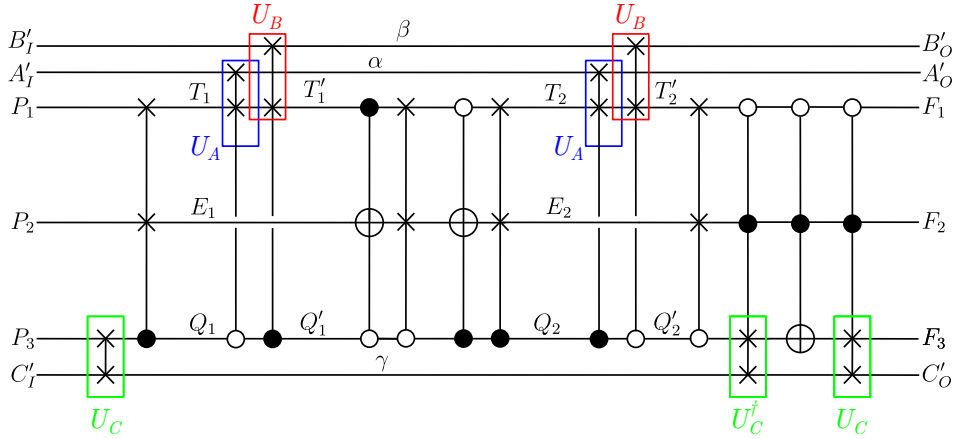}
        \caption{Temporal circuit of Fig.~\ref{fig:tripartite_example}, with SWAP operations for Alice, Bob and Charlie.}
    \label{fig:AppF_Fig2}
\end{figure}

In conclusion, we have shown that some of the time-delocalised variables that we have identified do not admit an effective localisation in time by showing that there do not exist time-local variables that take their value. This prevents us from assigning a definite causal order on all variables, which is arguably what the violation of the causal inequality witnesses. Our argument for the purely classical case is somewhat weaker than the quantum case as it assumes that the variable $C_I$ does not live on the intermediate ancillary wire with which Charlie is made to interact. While this seems intuitive, this assumption could in principle be debated since the intermediate ancillary wire of Charlie need not coincide with $C_I'$ or $C_O'$, which must be separate from $C_I$ by definition. It would be interesting to investigate whether this assumption could be relaxed as in the quantum case. 

\bibliography{literatur}
\end{document}